\newcommand\SmallMatrix[1]{{%
		\footnotesize\arraycolsep=0.1pt\renewcommand*{\arraystretch}{1}\ensuremath{\begin{bmatrix}#1\end{bmatrix}}}}
\newcommand{\bvdots}{ \tikz[baseline, every node/.style={inner sep=0}]{ \node at (0,0){.}; \node at (0,-6pt){.}; \node at (0,6pt){.}; } }
\newcommand{\interp}[1]{\left\llbracket#1\right\rrbracket}
\newcommand{\ket}[1]{|#1\rangle}
\newcommand{\bra}[1]{\langle#1|}
\newcommand{\szx}{{SZX}}
\renewcommand*{\eqref}[1]{%
	\hyperref[{#1}]{\textup{\tagform@{\ref*{#1}}}}%
}
\tikzset{strike through/.style={
		postaction=decorate,
		decoration={
			markings,
			mark=at position 0.5 with {
				\draw[-] (-2.5pt,-2.5pt) -- (2.5pt, 2.5pt);
			}
		}
	}
}
\tikzset{strike bend/.style={
		postaction=decorate,
		decoration={
			markings,
			mark=at position 0.5 with {
				\draw[-] (0pt,-3pt) -- (0pt, 3pt);
			}
		}
	}
}
\definecolor{green}{HTML}{ccffcc}
\definecolor{red}{HTML}{ff8888}
\definecolor{zx_grey}{RGB}{211,211,211}
\tikzstyle{gn}=[fill=green, draw=black, shape=circle,  tikzit category=ZX, tikzit fill=green, tikzit draw=black, tikzit shape=circle, inner sep=2pt]
\tikzstyle{rn}=[fill=red, draw=black, shape=circle, tikzit fill=red, tikzit draw=black, tikzit category=ZX, tikzit shape=circle, inner sep=2pt]
\tikzstyle{divide}=[regular polygon, regular polygon sides=3, shape border rotate=90, draw=black,fill=zx_grey, inner sep=1.6pt, tikzit category=scal, rounded corners=0.8mm]
\tikzstyle{black}=[fill=black, draw=black, shape=circle, tikzit fill=black, tikzit draw=black, tikzit shape=circle, tikzit category=IH, inner sep=2pt]
\tikzstyle{gather}=[fill=zx_grey, draw=black, tikzit category=scal, rounded corners=0.8mm, regular polygon, regular polygon sides=3, shape border rotate=-90, inner sep=1.6pt]
\tikzstyle{ggen}=[fill=white, draw=black, shape=rectangle, rounded corners=2mm,  line width=1pt, tikzit draw=red, tikzit category=scal]
\tikzstyle{white}=[fill=white, draw=black, shape=circle, inner sep=2pt, tikzit category=IH]
\tikzstyle{mbox}=[fill=white, draw=black, rounded rectangle, rounded rectangle west arc=none, tikzit category=scal, tikzit shape=rectangle]
\tikzstyle{A}=[fill=white, shape=circle, tikzit category=scal, inner sep=1pt]
\tikzstyle{ggreen}=[fill=green, draw=black, shape=circle, tikzit category=SZX, tikzit fill=green, tikzit draw=black, line width=1pt, inner sep=2pt]
\tikzstyle{gred}=[fill=red, draw=black, shape=circle, rounded corners=2mm,  tikzit category=SZX, inner sep=2pt, tikzit fill=red, line width=1pt]
\tikzstyle{ghad}=[fill=yellow, draw=black, shape=rectangle, tikzit category=SZX, tikzit shape=rectangle, tikzit fill=yellow, inner sep=2pt, line width=1pt]
\tikzstyle{boxm}=[fill=white, draw=black, rounded rectangle, tikzit category=scal, tikzit shape=rectangle, rounded rectangle east arc=none]
\tikzstyle{box}=[fill=white, draw=black, shape=rectangle]
\tikzstyle{had}=[fill=yellow, draw=black, shape=rectangle, tikzit category=ZX, tikzit fill=yellow, tikzit draw=black, inner sep=2pt]
\tikzstyle{gwhite}=[fill=white, draw=black, shape=circle, tikzit fill=white, tikzit shape=circle, line width=1 pt, inner sep=2 pt, tikzit draw=red]
\tikzstyle{gblack}=[fill=black, draw=black, shape=circle, tikzit fill=black, tikzit shape=circle, line width=1 pt, inner sep=2 pt, tikzit draw=red]
\tikzstyle{antipode}=[fill=red, draw=black, shape=rectangle, tikzit fill=red, tikzit draw=black, tikzit shape=rectangle, inner sep=2pt]
\tikzstyle{diamond}=[fill=white, draw=black, shape=diamond, inner sep=2pt]
\tikzstyle{mongr}=[fill=green, draw=green, shape=circle, inner sep=2pt]
\tikzstyle{monbl}=[fill=blue, draw=blue, shape=circle, inner sep=2pt]
\tikzstyle{bg}=[inner sep=0.7mm, minimum width=0pt, minimum height=0pt, fill=green, draw=white, very thick, shape=circle]
\tikzstyle{br}=[inner sep=0.7mm, minimum width=0pt, minimum height=0pt, fill=red, draw=white, very thick, shape=circle]
\tikzstyle{rmat}=[draw, signal, fill=zx_grey, signal to=east, signal from=west, inner sep=1pt, minimum height=6pt]
\tikzstyle{lmat}=[draw, signal, fill=zx_grey, signal to=west, signal from=east, inner sep=1pt, minimum height=6pt]
\tikzstyle{umat}=[draw, signal, fill=zx_grey, signal to=north, signal from=south, inner sep=1pt, minimum width=6pt]
\tikzstyle{dmat}=[draw, signal, fill=zx_grey, signal to=south, signal from=north, inner sep=1pt, minimum width=6pt]
\tikzstyle{arrow}=[->]
\tikzstyle{very thick}=[-, line width=1pt, tikzit draw=red]
\tikzstyle{pointille}=[dashed, -]
\tikzstyle{red}=[-, draw=red]
\tikzstyle{blue}=[-, draw=blue]
\tikzstyle{green}=[-, draw=green]
\tikzstyle{arrow}=[->]
\tikzstyle{strike}=[-, tikzit draw={rgb,255: red,191; green,0; blue,64}, strike through]
\tikzstyle{strike'}=[-, tikzit draw=cyan, strike bend]
\tikzstyle{box}=[shape=rectangle, text height=1.5ex, text depth=0.25ex, yshift=0.5mm, fill=white, draw=black, minimum height=5mm, yshift=-0.5mm, minimum width=5mm, font={\small}]
\tikzstyle{Z dot}=[inner sep=0mm, minimum size=2mm, shape=circle, draw=black, fill={rgb,255: red,160; green,255; blue,160}]
\tikzstyle{gdot}=[minimum size=3mm, font={\scriptsize\boldmath}, shape=rectangle, rounded corners=1.3mm, inner sep=1mm, outer sep=-1.8mm, scale=0.8, tikzit shape=circle, draw=black, fill=green, tikzit draw=blue]
\tikzstyle{X dot}=[Z dot, shape=circle, draw=black, fill={rgb,255: red,220; green,0; blue,0}]
\tikzstyle{rdot}=[minimum size=3mm, font={\scriptsize\boldmath}, shape=rectangle, rounded corners=1.3mm, inner sep=1mm, outer sep=-1.8mm, scale=0.8, tikzit shape=circle, draw=black, fill=red, tikzit draw=blue]
\tikzstyle{grdot}=[minimum size=3mm, font={\scriptsize\boldmath}, shape=rectangle, rounded corners=1.3mm, inner sep=1mm,
 \tikzstyle{ggdot}=[minimum size=3mm, font={\scriptsize\boldmath}, shape=rectangle,  line width=1pt, rounded corners=1.3mm, inner sep=1mm, outer sep=-1.5mm, scale=0.8, tikzit shape=circle, draw=black, fill=green, tikzit draw=blue]
\title{\szx-calculus:~Scalable~Graphical Quantum Reasoning}
\titlerunning{\szx-calculus}
\author{Titouan Carette}{CNRS, LORIA, Inria Mocqua, Universit\'e de Lorraine, F 54000 Nancy, France}{titouan.carette@loria.fr}{https://orcid.org/0000-0002-1618-4081}{}
\author{Dominic Horsman}{LIG, Universit\'e Grenoble Alpes, France}{dominic.horsman@univ-grenoble-alpes.fr}{}{}
\author{Simon Perdrix}{ CNRS, LORIA, Inria Mocqua, Universit\'e de Lorraine, F 54000 Nancy, France}{simon.perdrix@loria.fr}{https://orcid.org/0000-0002-1808-2409}{}
\authorrunning{T.Carette, D. Horsman, and S.Perdrix}
\keywords{Quantum computing, categorical quantum mechanics, completeness, scalability}
\begin{document}

\maketitle
\begin{abstract}
We introduce the Scalable ZX-calculus (\szx-calculus for short), a formal and compact graphical language for the design and verification of quantum computations. The \szx-calculus is an extension of the ZX-calculus, a powerful framework that captures graphically the fundamental properties of quantum mechanics through its complete set of rewrite rules. The ZX-calculus is, however, a low level language, with each wire representing a single qubit. This limits its ability to handle large and elaborate quantum evolutions. We extend the ZX-calculus to registers of qubits and allow compact representation of sub-diagrams via binary matrices. 
We show soundness and completeness of the \szx-calculus and provide two examples of applications, for graph states and error correcting codes. \end{abstract}

\section{Introduction}
The ZX-calculus is an intuitive and powerful graphical language for quantum computing, introduced by Coecke and Duncan \cite{coecke2011interacting}. Quantum processes can be represented by ZX-diagrams, which can be seen intuitively as a generalisation of quantum circuits. The language is also equipped with a set of rewrite rules which preserves the represented quantum evolution. Unlike quantum circuits, the  ZX-calculus has been proved to be complete for various universal fragments of pure quantum mechanics \cite{jeandel2018complete,HNW,ZXNormalForm,vilmart2018near}, and also mixed states quantum mechanics \cite{carette2019completeness}. Completeness means that any equality can be derived in this language: if two diagrams represent the same quantum process then they can be transformed one into the other using the rewriting rules of the language. Completeness opens avenues for various applications of the ZX-calculus in quantum information processing, including circuit optimisation \cite{duncan2019graph,kissinger2019cnot} -- which out-performs all other technics for T-count reductions \cite{kissinger2019reducing} -- error correcting codes \cite{duncan2014steane,gidney2018efficient,chancellor2016graphical}, lattice surgery \cite{de2017zx}, measurement-based quantum computing \cite{DP-2010,duncan2012graphical,Kissinger-MBQC} \textit{etc}. Automated tools for quantum reasoning, e.g. Quantomatic \cite{kissinger2015quantomatic} and PyZX \cite{pyzx}, are also based on the ZX-calculus. The ZX-calculus is also used as intermediate representation in a commercial quantum compiler \cite{cowtan2019qubit}.

The cornerstone of the ZX-calculus is that fundamental properties of quantum mechanics can be captured graphically. The language remains, however, relatively low level: each wire represents a single qubit, a feature that limits the design of larger-scale and more complex quantum procedures. We address in this paper the problem of scalability of the ZX-calculus. In \cite{chancellor2016graphical}, the authors -- including one of the present paper --  demonstrated that the ZX-calculus can be used in practice to design and verify quantum error correcting codes. They  introduced various shortcuts to deal with the scalability of the language: mainly the use of thick wires to represent registers of qubits and matrices to represent sub-diagrams, and hence reason about families of diagrams in a compact way. However, the approach lacked a general theory and fundamental properties like soundness and completeness.

\noindent {\bf Contributions.}  We introduce the Scalable ZX-calculus, \szx~calculus for short, to provide theoretical foundations to this approach. We extend the ZX-calculus to deal with registers of qubits by introducing some new generators and rewrite rules. We show soundness -- i.e. the new generators can be used in a consistent way --  as well as completeness of the \szx-calculus. 
A  simple but key ingredient is the introduction of two generators, not present in \cite{chancellor2016graphical}, for dividing and gathering registers of qubits. A wire representing a register of $(n{+}m)$-qubits can be divided into two wires representing respectively $n$ and $m$ qubits. Similarly two registers can be gathered into a single larger one. We also extend the generators of the ZX-calculus so that they can act not only on a single qubit but on a register of qubits. The \szx-calculus is then constructed as a combination of the ZX-calculus and the sub-language made of the divider and the gatherer, by adding the necessary rewrite rules describing how these two sub-languages interact. We show that the \szx-calculus is universal, sound,  and complete, providing an intuitive and formal language to represent quantum operations on an arbitrarily large finite number of qubits. The use of the divider and the gatherer allows one to derive inductive (graphical) proofs.

Furthermore, the \szx-calculus provides the fundamental structures -- namely the (co)comu\-tative Hopf algebras -- to develop a graphical theory of binary matrices, following work on graphical linear algebra \cite{bonchi2017interacting}. As a consequence, we introduce an additional generator parametrized by a binary matrix together with four simple rewrite rules. Note that, while matrices were also used in \cite{chancellor2016graphical}, we introduce here a more elementary generator acting on a single register (1 input/1 output) rather than two registers (2 inputs/2 outputs). We prove  completeness of the \szx-calculus augmented with these matrices. 
The use of matrices allows a compact representation where subdiagrams can be replaced by  matrices. Moreover, basic matrix arithmetic can be done graphically. It makes the \szx-calculus with matrices a powerful tool for formal and compact quantum reasoning.  

In section \ref{sec:applications}, we show the \szx-calculus in action. The main application of the \szx-calculus we consider in this paper is the graph state formalism  \cite{hein2006entanglement}. We show how graph states can be represented using \szx-diagrams and how some fundamental properties like fixpoint properties, local complementation, and pivoting can be derived in the calculus. We also consider error correcting code examples in order to show that the techniques for the design and verification of codes developed in \cite{chancellor2016graphical} can be performed smoothly in the \szx-calculus.

\noindent{\bf Related works.} Scalability is crucial in the development of the  ZX-calculus and more generally for graphical languages. We review here some contributions in this domain that we briefly compare to our approach. 

The !-boxes formalism \cite{kissinger2015first} is a meta language for graphical languages, which  has been extensively used in the development of the automated tool Quantomatic. A !-box is a  region (subdiagram) of a diagram which can be  discarded or duplicated. There is also a first order logic handling families of equations between concrete (i.e.~!-box free) diagrams.  
In contrast, the scalable ZX is not a meta-language but an actual graphical language equipped with an equational theory (namely a coloured PROP). There is no obvious way to compare these two approaches (even in terms of expressive power). 

Monoidal multiplexing \cite{chantawibul2018monoidal} corresponds to two categorical constructions which allow representing $n$ diagrams in parallel. Roughly speaking, one of the two constructions would be equivalent to the use of big wires for the subclass of \szx-diagrams which are matrix, divider and gatherer-free. It is worth noticing that, to our knowledge, monoidal multiplexing has never been combined with the matrix approach, even though both were developed in the same line of research on graphical linear algebra.

Recently, Miatto \cite{miatto2019graphical} has independently introduced a graphical calculus involving matrices, and the equivalent of green spiders, dividers and gatherers. This graphical calculus has been developed in the context of the tensor networks, and the author mainly shows that 6 kinds of matrix products can be represented graphically.  We note that the represented matrices do not coincide with the ones we are axiomatising: the matrices represented in Miatto's language correspond to $\mathbb C^{2^{m}\times 2^n}$ matrices whereas ours are in $\mathbb F_2^{m\times n}$, hence the equations differ. It is however worth noting that equation Fig.6 in \cite{miatto2019graphical} essentially corresponds to the equation governing the interaction between green spiders and the divider given in section \ref{dist}.

\section{Background: the ZX-calculus}

A ZX-diagram $D:k\to \ell$ with $k$ inputs and $\ell$ outputs is generated by: $\forall n,m\in \mathbb N$,  $\forall \alpha \in \mathbb R$,\\
 \renewcommand{\arraystretch}{1.5}\arraycolsep=2pt
\centerline{$\begin{array}{rclrclrclrcl}
		\quad\tikzfig{gspider} &:& n\to m~~~~&\qquad\tikzfig{had} &:& 1\to 1~~~~&\qquad\tikzfig{cup}&:& 0\to 2~~~~&\qquad\tikzfig{swap}&:&2\to 2~~~~ \\[0.12cm]
		\tikzfig{rspider} &:& n\to m&\tikzfig{wire1}&:&1\to 1&\tikzfig{cap}&:&2\to 0 &\tikzfig{bone1}&:&0\to0 \\[0.12cm]
	\end{array}$}
and the two compositions: for any ZX-diagrams $D_0 :a \to b$, $D_1 :b\to c$, and $D_2:c\to d$:\vspace{0.1cm}

\centerline{
${\tikzfig{D1}}\circ {\tikzfig{D0}}={\tikzfig{compD}}\qquad$ and $\qquad  {\tikzfig{D0}}\otimes {\tikzfig{D2}}={\tikzfig{tensorD}}$\vspace{0.1cm}}

For any $n,m$, $ZX[n,m]$ is the set of all ZX-diagrams of type $n\to m$. 
The ZX-diagrams are representing quantum processes: for any ZX-diagram $D:n\to m$ its interpretation $\interp D\in \mathcal{M}_{2^m \times 2^n}(\mathbb{C})$ is inductively defined as: $\interp{D_1\circ D_0} = \interp {D_1} \circ \interp {D_0}$, $\interp{D_0\otimes D_2} = \interp {D_0} \otimes \interp {D_2}$, and 
\[\begin{array}{rclrcl}
		\interp{\tikzfig{gspider}}&\!\!\coloneqq \!\!&\ket{0^m}\!\bra{0^n}+e^{i\alpha}\ket{1^m}\!\bra{1^n}&\interp{\tikzfig{wire1}}&\!\!\coloneqq \!\!&\ket{0}\!\bra{0}+\ket{1}\!\bra{1}\\[0.2cm]
		\interp{\tikzfig{rspider}}&\!\!\coloneqq \!\!& \ket{+^m}\!\bra{+^n}+e^{i\alpha}\ket{-^m}\!\bra{-^n}\quad&\quad\interp{\tikzfig{had}}&\!\!\coloneqq \!\!&\ket{+}\!\bra{0}+\ket{-}\!\bra{1} \\[0.2cm]
		\interp{\tikzfig{swap}}&\!\!\coloneqq \!\!&\ket{00}\!\bra{00}+\ket{01}\!\bra{10}+\ket{10}\!\bra{01}+\ket{11}\!\bra{11}&\interp{\tikzfig{bone1}}&\!\!\coloneqq \!\!&1\\[0.2cm]
		\interp{\tikzfig{cup}}&\!\!\coloneqq \!\!&\ket{00}+\ket{11}&\interp{\tikzfig{cap}}&\!\!\coloneqq \!\!&\bra{00}+\bra{11}
	\end{array}\]
	Where $\ket{0}\!\!\coloneqq\!\!{1\choose 0}$, $\ket{1}\!\!\coloneqq\!\!{0\choose 1}$, $\ket{+}\!\!\coloneqq\!\!\frac{\ket{0}+\ket{1}}{\sqrt{2}}$, $\ket{-}\!\!\coloneqq\!\!\frac{\ket{0}-\ket{1}}{\sqrt{2}}$, $\ket{a^{k+1}}\!\!\coloneqq\!\!\ket{a}{\otimes} \ket{a^{k}}$, $\ket{a^0}\!\!\coloneqq\!\! 1$,  and $\bra{a}\!\!\coloneqq\!\! \ket{a}^\dagger$, moreover $n$ and $m$ are respectively the number of inputs and outputs of the spiders.

When equal to zero, the angle of the green or red spider is omitted: \vspace{0.2cm}

\centerline{$\tikzfig{gspider-0}\coloneqq\tikzfig{gspider-zero}\qquad$ and $\qquad \tikzfig{rspider-0}\coloneqq\tikzfig{rspider-zero}$\vspace{0.2cm}}

ZX-diagrams are \emph{universal} for pure qubit quantum mechanics: $\forall n,m\in \mathbb N$, and $\forall M\in  \mathcal{M}_{2^n \times 2^m}(\mathbb{C})$, there exists a ZX-diagram $D:n\to m$ such that $\interp D = M$. 

ZX-diagrams also come with a set of graphical rewrite rules, or axioms, which allows one to transform a diagram preserving its interpretation. %First, diagrams can be deformed at will 
%Computation in the ZX-calculus are carried using graphical rewriting rules. 
Some of them are gathered under the \textit{Only Topology Matters} paradigm. When using these we label the equality by top\label{top}. Two diagrams that can be transformed into each other by moving around the wires are equal. This can be derived from the following rules:

\begin{center}
	\begin{tabular}{ccccc}
		$\tikzfig{iswap0}=\tikzfig{iswap1}~~$&$~\tikzfig{scup0}=\tikzfig{cup}~~$&$~\tikzfig{scap0}=\tikzfig{cap}~~$&$~\tikzfig{snake0}=\tikzfig{snake1}=\tikzfig{snake2}~~$&$~\scalebox{0.9}{\tikzfig{sdiagp0}}\!\!=\!\!\scalebox{0.9}{\tikzfig{sdiagp1}}~$\\
	\end{tabular}	
\end{center}

The last set of rule expresses the naturality of the swap; in other words, that all the generators can be passed through wires.

The legs of the spiders of ZX-calculus can be exchanged and bent. This implies that diagrams are essentially graphs with inputs and outputs.

\begin{center}
	\begin{tabular}{cc}
		$\tikzfig{sgspider0}=\tikzfig{sgspider1}\quad$&$\quad\tikzfig{bspider0}=\tikzfig{bspider1}$\\
	\end{tabular}
\end{center}

Finally,  the rules that are not purely topological are given in Figure \ref{fig:ZXaxioms}.

\begin{figure}[!h]
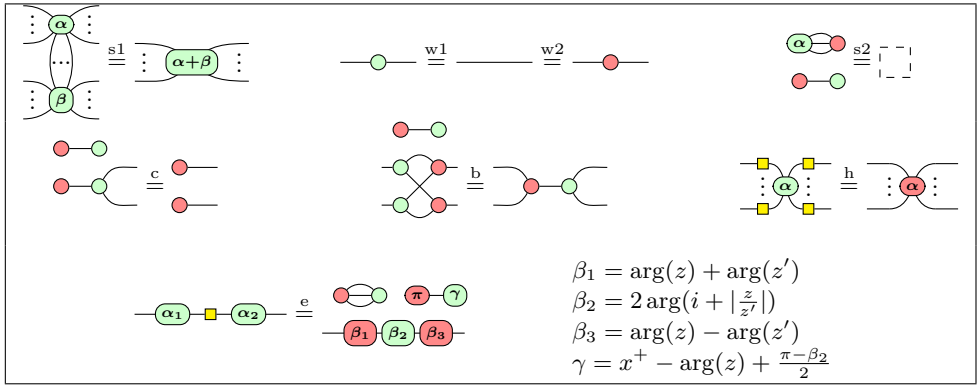

	\begin{tabular}{|ccccc|}\hline
		$\tikzfig{spider0}\stackrel{\textup{s1}}{=}\label{s1}\tikzfig{spider1}$&$\quad$&$\tikzfig{wire0}\stackrel{\textup{w1}}{=}\label{w1}\label{wire}\tikzfig{wire1}\stackrel{\textup{w2}}{=}\label{w2}\tikzfig{wire2}$&$\quad$&$\tikzfig{bone0}\stackrel{\textup{s2}}{=}\label{s2}\tikzfig{bone1}$\\[0.5cm]
		$\tikzfig{rcopy0}\stackrel{\textup{c}}{=}\label{c}\tikzfig{rcopy1}$&$\quad$&$\tikzfig{bigebre0}\stackrel{\textup{b}}{=}\label{b}\tikzfig{bigebre1}$&$\quad$&$\tikzfig{hadp0}\stackrel{\textup{h}}{=}\label{h}\tikzfig{hadp1}$\\[0.5cm]
		\multicolumn{5}{|c|}{\footnotesize \begin{tabular}{ccc}
				$\tikzfig{euler0}\stackrel{\textup{e}}{=}\label{e}\tikzfig{euler1}$&$\quad$&\begin{tabular}{l}
					$\beta_{1}=\arg(z)+\arg(z')$\\[-0.2cm]
					$\beta_{2}=2\arg(i+|\frac{z}{z'}|)$\\[-0.2cm]
					$\beta_{3}=\arg(z)-\arg(z')$\\[-0.2cm]
					$\gamma=x^{+}-\arg(z)+\frac{\pi-\beta_2}{2}$
				\end{tabular}
			\end{tabular}
		}\\\hline
	\end{tabular}
	\caption{{\bf Axioms of the ZX-calculus}, \label{fig:ZXaxioms}where $x^{+}\coloneqq\frac{\alpha_1 + \alpha_2}{2}$, $x^{-}\coloneqq x^{+}- \alpha_2$, $z\coloneqq-\sin(x^{+})+ i\cos(x^{-})$, $z'\coloneqq\cos(x^{+})-i\sin(x^{-})$ and $z'=0 \Rightarrow \beta_2 =0$.
In the upper left rule, there must be at least one wire between the spiders annotated by $\alpha$ and $\beta$. The colour-swapped version of those rules also holds. A label is given to each axiom, above the equals sign, for later reference. 
}
\end{figure}

We write $ZX\vdash D=D'$ when $D$ can be transformed into $D'$ using the rules of the ZX-calculus. The rules of the ZX-calculus are sound: for any ZX-diagrams $D,D'$, $ZX\vdash D=D' \Rightarrow \interp{D}=\interp{D'}$ i.e., the rules preserve the interpretation of the language. The language  is also complete:   for any ZX-diagrams $D,D'$,   $\interp{D}=\interp{Dp}\Rightarrow ZX\vdash D=D'$ i.e., whenever two diagrams represent the same quantum evolution, we can transform one into the other using the rules of the language \cite{vilmart2018near}.

\section{The scalable ZX-calculus}

In the ZX-calculus, each wire represents a single qubit. Therefore, a system acting on $n$ qubits will be represented by an $n$-input diagram. This quickly leads to intractable diagrams when it comes to big systems.
The extension to the \szx-calculus presented here provides a more compact notation.

\subsection{Divide and gather, a calculus for big wires}

The input (resp. output) type of a ZX-diagram is its number of input wires, and hence number of input qubits. 
 In the \szx-calculus, wires represent registers of qubits. A wire of type $1_{n}$ represents a register of $n$ qubits. A type of the \szx-calculus is then a formal sum of the form $\sum_{i}{1}_{n_i}$, the empty sum being denoted by $0$. In other words, the set of types of \szx-calculus is the free monoid over $\mathbb{N}^*$, the set of positive integers. We denote it $\langle\mathbb{N}^* \rangle$. Graphically, we represent the wire of type $1_{n}$ by an  bold font wires labelled by $n$\footnote{On a blackboard the bold font might be advantageously replaced by struck-out wires.},  a label that is omitted when it is not ambiguous. A normal font wire always denotes a single qubit register of type $1_1$. 
By convention the sum of $m$ wires of type $1_n$ is denoted $m_n$ with $0_n=m_0=0$. $n_1$ is simply written $n$. Given a type $a=\sum_{i}{1}_{n_i}$, its size is defined as $S(a)\coloneqq\sum_{i}{n_i}$.

Big wires can be divided into smaller ones and, conversely, can be gathered to form bigger ones.  
For any $n\in \mathbb N$, we introduce two new generators: the \emph{divider} and \emph{gatherer} of size $n$. They are depicted as follows: 

\vspace{-0.2cm}
\begin{center}
\begin{tabular}{ccc}
		\tikzfig{divider}& $\quad$&\tikzfig{gatherer}
	\end{tabular}
	\end{center}
\vspace{-0.2cm}

We take the convention that the divider and the gatherer of size $0$ are the identity. We define a fragment of the \szx, the wire calculus $\mathbb{W}$.

\begin{definition}[$\mathbb{W}$-calculus] 
	The $\mathbb{W}$-calculus is defined as the graphical language generated by identity wires, the dividers, and the gatherers of any size, and satisfying the elimination rule  
		$\tikzfig{elim0-s}\stackrel{\textup{E}}{=}\label{E}\tikzfig{elim1-s}$ and  the expansion rule $\tikzfig{exp0}\stackrel{\textup{P}}{=}\label{P}\tikzfig{exp1}$. 
	
\end{definition}

The roles of the dividers and gatherers in the equations are perfectly symmetric, so each time something is shown for dividers it also holds for gatherers by symmetry.

We now show a coherence theorem for scalable calculi: the rewiring theorem. It states that two diagrams of the $\mathbb{W}$-calculus with the same type are equal.

\begin{theorem}\label{thm:rewire}
	Let $\omega\in \mathbb{W}[a,b]$ and $\omega'\in \mathbb{W}[c,d]$: $~\mathbb{W}\vdash \omega=\omega' ~~\Leftrightarrow~~ a=c ~and~ b=d$
\end{theorem}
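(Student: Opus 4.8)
The plan is to prove both directions of the equivalence. The $(\Leftarrow)$ direction is the substantive one; the $(\Rightarrow)$ direction is immediate since the $\mathbb{W}$-rules \eqref{E} and \eqref{P} preserve the type of a diagram (both sides of each rule have the same input and output type), so provably equal diagrams have equal types.

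For $(\Leftarrow)$, the strategy is to exhibit a \emph{normal form} for $\mathbb{W}$-diagrams and show every diagram reduces to it. Given a type $a = \sum_i 1_{n_i}$ of size $S(a)$, the natural canonical diagram of type $a\to b$ (when $S(a)=S(b)$, which is forced since the generators preserve size) is the one that first fully divides every input wire down to single-qubit wires using a cascade of size-$2$ dividers (plus the expansion rule \eqref{P} to handle the base structure), yielding $S(a)$ parallel single-qubit wires, and then fully gathers those single qubits into the output grouping $b$ in the same cascading fashion. Call this diagram $\mathsf{N}_{a,b}$. First I would show $\mathsf{N}_{a,b}$ is well-defined (independent of the order in which the binary splits/merges are performed) — this is a small associativity-style argument using \eqref{E} and \eqref{P}. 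Then the main step is: for every $\omega\in\mathbb{W}[a,b]$, $\mathbb{W}\vdash \omega = \mathsf{N}_{a,b}$. Since $\mathsf{N}_{a,b}$ depends only on $a$ and $b$, this gives $\omega = \mathsf{N}_{a,b}=\mathsf{N}_{c,d}=\omega'$ whenever the types match.

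To prove $\omega = \mathsf{N}_{a,b}$, I would induct on the number of generators (dividers and gatherers) in a chosen decomposition of $\omega$ into horizontal slices, each slice being a single generator tensored with identity wires. The key lemma to push the induction through is a \emph{commutation/absorption} lemma: appending one more generator $g$ to the fully-divided middle form $\mathsf{N}_{a,b}$ can be re-expressed, using \eqref{E} and \eqref{P}, again as a fully-divided form $\mathsf{N}_{a,b'}$ for the new boundary type $b'$. Concretely, a gatherer applied after a complete splitting is absorbed into the recombination half; a divider applied after a complete splitting acts trivially on already-atomic wires (up to \eqref{P} for the degenerate sizes) and again yields a form of the same shape. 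Iterating, any finite composite collapses to a single $\mathsf{N}_{a,b}$. One also needs the corresponding statements for generators appearing \emph{before} the middle, handled symmetrically using the divider/gatherer symmetry noted in the text, and one must treat the $\otimes$-composition: a generator acting on one summand of the type leaves the other summands' wires untouched, so the normal form factors blockwise.

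The main obstacle I anticipate is the bookkeeping in the commutation lemma: making precise what "cascade of binary splits" means, proving its order-independence cleanly (rather than by an ad hoc picture), and verifying that the degenerate cases — size-$0$ dividers/gatherers being identities, and the interaction with the expansion rule \eqref{P} which introduces/removes trivial structure — are all covered. A clean way to organize this is to set up a confluent rewriting system oriented so that \eqref{E} eliminates adjacent gatherer–divider pairs and \eqref{P} normalizes trivial splits, argue (local) confluence and termination on the size-graded diagrams, and identify the unique normal form with $\mathsf{N}_{a,b}$; then $(\Leftarrow)$ follows because two diagrams of the same type have the same normal form. Either route — explicit normal form plus absorption lemma, or a confluence argument — reduces the theorem to finite diagrammatic manipulations, so no deep new idea is needed beyond careful induction.
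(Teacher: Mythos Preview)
Your proposal is correct, and the target normal form $\mathsf{N}_{a,b}$ you identify is exactly the paper's canonical form $\bigl(\bigotimes_j g_{m_j}\bigr)\circ\bigl(\bigotimes_i d_{n_i}\bigr)$. The route, however, differs in execution. The paper does not induct on generators or build an absorption lemma; instead it introduces a notion of \emph{situation} for each wire (a pair recording what the wire connects: input, output, divider, gatherer), calls a diagram \emph{expanded} when big wires occur only in the four ``good'' situations, and shows that rules \hyperref[E]{E} and \hyperref[P]{P} suffice to reach expanded form. The punchline is then purely structural: in an expanded diagram one can separate dividers from gatherers as $\Gamma\circ\Delta$, and the allowed situations together with the typing constraints force $\Delta=\bigotimes_i d_{n_i}$ and $\Gamma=\bigotimes_j g_{m_j}$ uniquely. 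This sidesteps precisely the bookkeeping you flag as the main obstacle: there is no commutation lemma, no order-independence argument for the cascade, and no case analysis on which generator is being absorbed. Your inductive/confluence approach would work and is conceptually sound, but the paper's situation analysis is shorter because uniqueness of the normal form falls out of a type argument rather than a rewriting argument.
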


\noindent The proof is in the appendix at page \pageref{pr:rewire}.

This theorem has strong consequences. We can define generalized dividers able to divide any wire of size $1_{a+b}$ into a wire of size $1_{a}$ and a wire of size $1_b$.

\vspace{-0.2cm}
\begin{center}
\tikzfig{gdiv0}
\end{center}
\vspace{-0.2cm}

Those generalized dividers have a unique possible interpretation as diagrams of $\mathbb{W}$-calculus given by their types, and we know exactly the equations they verify: all the well typed ones. In particular, an associativity-like law holds for generalized wires allowing us to define $n$-ary generalized dividers.

\vspace{-0.2cm}
\begin{center}
$\tikzfig{asdiv0}\quad\coloneqq\quad\tikzfig{asdiv1}\quad=\quad\tikzfig{asdiv2}$
\end{center}
\vspace{-0.2cm}

Each time we use the property that any well typed equation in $\mathbb{W}$ is true, we will label the equality by \textup{R}\label{R}.

\subsection{The \szx-diagrams}

We now fuse the $\mathbb{W}$-calculus and the ZX-calculus into one language: the full \szx-calculus.

The generators of  \szx-diagrams are: $\forall n,m\in \mathbb N^*,\forall k,\ell\in \mathbb N$,  $\forall \alpha \in \mathbb R^n$,\\
 \renewcommand{\arraystretch}{1.5}\arraycolsep=2pt
\centerline{\!\!\!\!\!\!\!$\begin{array}{rclrclrclrclrcl}
		\quad\tikzfig{bgspider} &:& k_n{\to} \ell_n~~~&\tikzfig{bhad-def} &:& 1_n{\to} 1_n~~~~~&\tikzfig{bcup}&:& 0{\to} 2_n  ~~~& \tikzfig{divider-0}&:&1_{n+1}{\to} 1{+}1_{n} ~~~&\tikzfig{bid} &: &1_{n}{\to} 1_n \\[0.12cm]%\hline
		\tikzfig{brspider} &:&  k_n{\to} \ell_n&\tikzfig{bswap}&:&1_n{+}1_m{\to} 1_m{+}1_n~~~&\tikzfig{bcap}&:&2_n{\to} 0 &\tikzfig{gatherer-0} &:&1{+}1_{n}{\to} 1_{n+1} & \tikzfig{bone1}&:&0{\to}0  \\[0.12cm]%\hline
	\end{array}$}
\szx-generators can be combined using the usual sequential and spacial compositions to form \szx-diagrams. 
Note that for $n=m=1$ we recover all the generators of the ZX-calculus. We denote them, as in the ZX-calculus, using thin wires e.g. $\tikzfig{had}$ for  $\tikzfig{bhad-def} : 1_1\to 1_1$. 
Any big wire can  be labelled by its size $\tikzfig{bid-tag} : 1_n\to 1_n$ to avoid ambiguity. Such labels will be  used mainly for scalars i.e.~diagrams with no input/output. 
Each green or red spider is parametrised by a vector $\alpha \in \mathbb R^n$ of angles. With slight  abuse of notation we use a single angle $\alpha_0\in \mathbb R$ to denote the vector $(\alpha_0, \ldots , \alpha_0)\in \mathbb R^n$  when the spider has at least one leg ($k+\ell>0$) so that this leg can be labelled by $n$ to avoid a potential ambiguity. Like in the ZX-calculus, the angle $\alpha_0$ is omitted when $\alpha_0=0$.

The interpretation of ZX-diagrams is extended to \szx-diagrams as follows: 
for any \szx-diagram $D:a\to b$,  its interpretation $\interp{D}_s$ is a triplet $(M,a,b)$ where $M\in \mathcal{M}_{2^{S(b)} \times 2^{S(a)}}(\mathbb{C})$. $\interp{D}_s$ is inductively defined as: $\interp{D_1\circ D_0}_s = (M_1\circ M_0, a,c)$,  $\interp{D_0\otimes D_2}_s = (M_0\otimes M_2, a+c,b+d)$ where $\interp{D_0}_s=(M_0,a,b)$, $\interp{D_1}_s=(M_1,b,c)$, and $\interp{D_2}_s=(M_2,c,d)$. Moreover: \[\begin{array}{rclrcl}
		\interp{\tikzfig{bhad-def}}_s&\!\!\coloneqq \!\!&(\frac{1}{\sqrt 2^n}\sum\limits_{x,y\in \{0,1\}^n} (-1)^{x\bullet y}\ket{y}\!\bra{x},1_n,1_n) &\interp{\tikzfig{divider-0}}_s&\!\!\coloneqq \!\!&(id_{n+1},1_{n+1},1{+}1_{n})\\[0.2cm]
		\interp{\tikzfig{bgspider}}_s&\!\!\coloneqq \!\!&(\sum\limits_{x\in \{0,1\}^n} e^{i x\bullet \alpha}\ket{x^k}\!\bra{x^\ell},k_n,\ell_n)&\interp{\tikzfig{gatherer-0}}_s&\!\!\coloneqq \!\!&(id_{n+1}, 1{+}1_{n},1_{n+1})\\[0.2cm]
		\interp{\tikzfig{brspider}}_s&\!\!\coloneqq \!\!&{\interp{\tikzfig{bhad-def}}_s}^{\otimes \ell} \circ \interp{\tikzfig{bgspider}}_s \circ  {\interp{\tikzfig{bhad-def}}_s}^{\otimes k}   \quad&\interp{\tikzfig{bcup}}_s&\!\!\coloneqq \!\!&(\sum\limits_{x\in \{0,1\}^n} \ket {xx}, 0,2_n)\\[0.2cm]
		\interp{\tikzfig{bswap}}_s&\!\!\coloneqq \!\!&(\sum\limits_{x\in \{0,1\}^n,y\in \{0,1\}^m}\ket {yx}\!\bra{xy},1_n{+}1_m, 1_m{+}1_n)&\quad\interp{\tikzfig{bcap}}_s&\!\!\coloneqq \!\!&(\sum\limits_{x\in \{0,1\}^n} \bra {xx}, 2_n,0)\\[0.2cm]
			\interp{\tikzfig{bid}}_s&\!\!\coloneqq \!\!&(id_n,1_n,1_n)&\interp{\tikzfig{bone1}}_s&\!\!\coloneqq \!\!&1
	\end{array}\]
	Where $\forall u,v\in \mathbb R^m$, $u\bullet v = \sum_{i=1}^m u_iv_i$, $M^{\otimes 0} = 1$, and $M^{\otimes k+1} = M\otimes M^{\otimes k}$. 
 
\begin{theorem}[Universality]\label{thm:univ}
\szx-diagrams are universal for pure qubit quantum mechanics: $\forall a,b \in \langle \mathbb N^*\rangle, \forall M\in   \mathcal{M}_{2^{S(b)} \times 2^{S(a)}}(\mathbb{C})$, $\exists D:a\to b$ such that $\interp{D}_s= (M,a,b)$. 
\end{theorem}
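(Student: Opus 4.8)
The plan is to reduce the universality of $\szx$-diagrams to the already-known universality of ordinary ZX-diagrams. The key observation is that a $\szx$-diagram of type $a \to b$ is interpreted as a matrix in $\mathcal{M}_{2^{S(b)} \times 2^{S(a)}}(\mathbb{C})$, which is exactly the dimension of the matrix associated with an ordinary ZX-diagram of type $S(a) \to S(b)$ (viewing the sizes as numbers of single-qubit wires). So given $M \in \mathcal{M}_{2^{S(b)} \times 2^{S(a)}}(\mathbb{C})$, I would first invoke universality of the ZX-calculus (stated in the Background section) to obtain an ordinary ZX-diagram $D_0 : S(a) \to S(b)$ with $\interp{D_0} = M$. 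Every ZX-generator is also a $\szx$-generator (the $n=m=1$ case), so $D_0$ is literally a $\szx$-diagram, with $\interp{D_0}_s = (M, S(a), S(b))$, where here $S(a)$ and $S(b)$ denote the types $1_1 + \cdots + 1_1$ consisting of $S(a)$, resp. $S(b)$, thin wires.

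Next I would build ``repackaging'' diagrams that convert between the type $a$ and the type $S(a)$ (as a tuple of single qubits), and similarly for $b$. Concretely, for a type $a = 1_{n_1} + \cdots + 1_{n_k}$, one uses the generalized $n$-ary dividers introduced just after Theorem~\ref{thm:rewire} to split each big wire $1_{n_i}$ into $n_i$ thin wires; tensoring these gives a diagram $G_a : a \to S(a)$ with $\interp{G_a}_s = (id_{2^{S(a)}}, a, S(a))$, since every divider and gatherer is interpreted as an identity matrix. Symmetrically, the $n$-ary generalized gatherers give a diagram $G_b^{-1} : S(b) \to b$ with interpretation $(id_{2^{S(b)}}, S(b), b)$. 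Then the composite $D \coloneqq G_b^{-1} \circ D_0 \circ G_a : a \to b$ has $\interp{D}_s = (id \circ M \circ id, a, b) = (M, a, b)$ by the inductive definition of $\interp{\cdot}_s$ for sequential composition, which is exactly what is required.

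There is essentially no hard step here: the only things to check are (i) that the generalized dividers/gatherers exist and have the claimed identity interpretation — this follows from the rewiring theorem (Theorem~\ref{thm:rewire}) together with the fact that the base dividers and gatherers are interpreted as identities, so any $\mathbb{W}$-diagram is interpreted as an identity matrix of the appropriate dimension — and (ii) a small bookkeeping lemma that $S(a + b) = S(a) + S(b)$ so the dimensions match up under $\otimes$. The mild subtlety worth spelling out is the degenerate cases: when some $n_i = 0$ the corresponding wire is $1_0 = 0$ and contributes nothing, and when $a = 0$ the diagram $G_a$ is the empty diagram; these are handled by the convention $0_n = m_0 = 0$ and cause no trouble. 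I would therefore present the proof as: apply ZX-universality to get $D_0$, define $G_a$ and $G_b^{-1}$ from generalized (de)gatherers, and conclude $\interp{G_b^{-1} \circ D_0 \circ G_a}_s = (M, a, b)$.
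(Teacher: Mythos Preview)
Your proposal is correct and matches the paper's proof essentially line for line: invoke ZX-universality to get a thin-wire diagram $D_0$ with interpretation $M$, then pre- and post-compose with the tensor of generalized dividers $\bigotimes_i d_{n_i}$ and gatherers $\bigotimes_j g_{m_j}$ (whose interpretations are identities) to adjust the type from $S(a)\to S(b)$ to $a\to b$. One small remark: you do not actually need the rewiring theorem for step~(i), since the identity interpretation of any $\mathbb W$-diagram follows directly from the interpretation of the basic divider and gatherer and functoriality of $\interp{\cdot}_s$.
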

\noindent The proof is in the appendix at page \pageref{pr:univ}.

\subsection{The calculus}

The \szx-calculus is based on distribution rules that allow dividers and gatherers to go through the big generators. For this to work we need first to ensure that the swap behaves naturally with respect to dividers and gatherers. This is given by the following two rules:
\vspace{-0.2cm}
\begin{center}
	\begin{tabular}{ccc}
		$\tikzfig{dswap0}~=~\tikzfig{dswap1}\qquad $&$\qquad\tikzfig{dswapp0}~=~\tikzfig{dswapp1}$
	\end{tabular}
\end{center}
\vspace{-0.2cm}
Then the rules governing the interaction between dividers, gatherers and the so-called  \emph{cups} and \emph{caps} are:
\vspace{-0.2cm}
\begin{center}
	\begin{tabular}{cc}
		$\tikzfig{icup0-2}~\stackrel{\textup{U}}{=}\label{U}~\tikzfig{icup1-2}\qquad$&$\qquad\tikzfig{icap0-2}~\stackrel{\textup{A}}{=}\label{A}~\tikzfig{icap1-2}\qquad$
	\end{tabular}
\end{center}
\vspace{-0.2cm}

We put labels over the equals signs to allow subsequent reference to the rules.
These rules are sufficient to fully describe possible interactions between wires of any size, gatherers and dividers.
 It remains to specify how dividers and gatherers interact with big generators:
\vspace{-0.2cm}
\begin{center}
	\begin{tabular}{ccc}\label{dist}
		$\tikzfig{gspiderd0}~\stackrel{\textup{Z}}{=}\label{Z}~\tikzfig{gspiderd1}$&$\tikzfig{spiderd0}~\stackrel{\textup{X}}{=}\label{X}~\tikzfig{spiderd1}$&$\tikzfig{hadd0}~\stackrel{\textup{W}}{=}\label{W}~\tikzfig{hadd1}$
	\end{tabular}
\end{center}
\vspace{-0.2cm}
Where $\alpha{::}\beta$ means that we append the phase $\alpha\in \mathbb R$ to the (generalized) phase $\beta\in \mathbb R^n$.

This completes the set of rules of the \szx-calculus. 
	Note that all rules agree with the interpretation, ensuring soundness of the \szx-calculus.

We see that any big generator $s_n$ is in fact just $n$ copies of the corresponding size one generator $s$ acting in parallel. That is, a parallel composition but with a particular permutation of the inputs and outputs. Such constructions are called multiplexed diagrams in \cite{chantawibul2018monoidal}. Multiplexed diagrams are shown to satisfy the same equations as size $1$ diagrams. The following lemma states the same results for big generators:

\begin{lemma}\label{lm:srule}
For any rule of the ZX-calculus, and any $n\in \mathbb N^*$, the equation obtained by replacing each generator by its big version of size $n$ 
 is provable in the \szx-calculus.
\end{lemma}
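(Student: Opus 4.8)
The plan is to proceed by induction on the structure of the ZX-proof witnessing the rule. It suffices to show two things: (i) each individual axiom of the ZX-calculus, when every generator is replaced by its size-$n$ version, becomes derivable in the \szx-calculus; and (ii) the ``big'' version of a composite diagram equals the composite of the ``big'' versions, so that the replacement operation is compatible with $\circ$ and $\otimes$ and with the congruence closure used in ZX-derivations. Point (ii) is essentially bookkeeping: replacing generators by their size-$n$ copies commutes with sequential and parallel composition up to the rewiring theorem (\autoref{thm:rewire}, i.e.\ equalities labelled \textup{R}), since the only discrepancy is a reshuffling of big wires into the right blocks, which is a well-typed $\mathbb{W}$-diagram and hence unique. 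So the real content is point (i).

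For point (i), my first step is to record the key structural fact already hinted at in the text: a size-$n$ generator $s_n$ is $\mathbb{W}$-conjugate to $s^{\otimes n}$, i.e.\ $s_n = g \circ s^{\otimes n} \circ d$ where $d$ is an $n$-ary generalized divider and $g$ an $n$-ary generalized gatherer (up to permutations of wires, all absorbed into \textup{R}). This follows from the distribution rules \textup{Z}, \textup{X}, \textup{W} (to push dividers/gatherers through spiders and Hadamards), from \textup{U} and \textup{A} (for cups and caps), from the naturality rules for the swap with respect to dividers/gatherers, and from the rewiring theorem for the pure-wiring parts. I would establish this ``decomposition lemma'' once for every generator $s$ of the ZX-calculus.

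Given the decomposition lemma, I can reduce a size-$n$ instance of a ZX-axiom to the axiom applied $n$ times in parallel. Concretely, take any ZX-axiom $L = R$. Both $L$ and $R$ are built from thin generators; replacing each thin generator by its size-$n$ version and then applying the decomposition lemma to each, I rewrite the size-$n$ left-hand side as $G \circ (L^{\otimes n}) \circ D$ for suitable generalized gatherers/dividers $G, D$ that depend only on the type of the axiom (the interface), not on whether we are looking at $L$ or $R$ — this is where \autoref{thm:rewire} is doing the heavy lifting, guaranteeing the wiring contexts on the two sides can be matched. Since $L = R$ in the ZX-calculus, we have $L^{\otimes n} = R^{\otimes n}$ by congruence, hence $G \circ (L^{\otimes n}) \circ D = G \circ (R^{\otimes n}) \circ D$, and applying the decomposition lemma backwards on the right gives the size-$n$ instance of $R$. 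One subtlety to handle carefully is that some ZX-axioms carry side conditions or auxiliary angle computations (the Euler rule \textup{e}, the rule \textup{s1} requiring a wire between the spiders); these go through unchanged because the angle vector of a big spider with a uniform angle $\alpha_0$ is just $(\alpha_0,\dots,\alpha_0)$ and the phase-append operation $\alpha{::}\beta$ respects componentwise arithmetic.

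The main obstacle I anticipate is the decomposition lemma itself, specifically getting the permutations of big wires to line up: when a divider is pushed through a multi-legged spider it fans out to all legs, and the resulting diagram is $s^{\otimes n}$ only after a nontrivial permutation that interleaves the $n$ copies correctly across the $k$ inputs and $\ell$ outputs. Making this precise requires either an explicit inductive description of that permutation or — more cleanly — invoking \autoref{thm:rewire} to say that any two $\mathbb{W}$-diagrams with the same type are equal, so the permutation need never be named. I expect the cleanest write-up keeps all wiring manipulations implicit under the label \textup{R} and states the decomposition lemma as an equality of \szx-diagrams justified by ``\textup{Z}/\textup{X}/\textup{W}/\textup{U}/\textup{A} then \textup{R}''.
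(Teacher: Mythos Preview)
Your approach is correct and conceptually close to the paper's, but organized at a different level of abstraction. The paper argues directly by induction on $n$ for each axiom: apply \textup{P} to expose a divider, push it through using the distribution rules (\textup{Z}, \textup{X}, \textup{W}) to peel off one thin copy of each generator, apply the ordinary ZX-axiom on that thin copy and the induction hypothesis on the remaining size-$(n{-}1)$ part, then push the divider back out. Your proposal instead factors this into a separate ``decomposition lemma'' ($s_n = g \circ s^{\otimes n} \circ d$), proved once per generator, and then reduces the big axiom to $n$ parallel instances of the thin one in a single step. Both arguments ultimately rely on the same distribution rules and on \autoref{thm:rewire} to absorb the wiring permutations; yours is more modular and avoids repeating the inductive pattern per axiom, while the paper's is more concrete and sidesteps having to state and prove the decomposition lemma in full generality (with its attendant permutation bookkeeping for multi-legged spiders).

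One small remark: your point (ii), inducting on the structure of a ZX-proof, is not needed for the lemma as stated. Here ``rule'' means an axiom of the ZX-calculus, not an arbitrary derivable equation, so there is no proof tree to induct over; point (i) alone suffices. Your compositionality observation is still relevant, though, since the two sides of an axiom are themselves composite diagrams and you need the decomposition lemma to interact correctly with $\circ$ and $\otimes$ when you unfold them --- which, as you note, is exactly what the rewiring theorem guarantees.
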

\noindent The proof is in the appendix at page \pageref{pr:srule}.

We can go even further than Lemma \ref{lm:srule}. In fact, the \szx-calculus is complete:

\begin{theorem}\label{thm:sem}
	$\forall a,b \in \langle \mathbb N^*\rangle, \forall D,D'\in \mathcal{S}ZX[a,b],~  \interp{D}_s=\interp{Dp}_s\Rightarrow \szx\vdash D=D'$.
\end{theorem}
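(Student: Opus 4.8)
The plan is to reduce completeness of the \szx-calculus to completeness of the ordinary ZX-calculus, which is already known \cite{vilmart2018near}. The key observation is that every \szx-diagram can be ``flattened'': using the rewiring theorem (Theorem \ref{thm:rewire}) together with the distribution rules \eqref{Z}, \eqref{X}, \eqref{W} and the cup/cap rules \eqref{U}, \eqref{A}, every big wire of type $1_n$ can be split into $n$ copies of a thin wire, and every big generator $s_n$ can be pushed apart into $n$ parallel copies of the corresponding size-one ZX-generator (this is essentially what the discussion following the rules, and Lemma \ref{lm:srule}, already establish). So first I would make precise a normal-form procedure $\flat(-)$ that sends any \szx-diagram $D:a\to b$ to a ``concrete'' \szx-diagram that is, up to the \szx-provable permutation of wires, a ZX-diagram with $S(a)$ inputs and $S(b)$ outputs, bordered by generalized dividers on the left and generalized gatherers on the right that recombine the thin wires into the types $a$ and $b$. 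The main content is to show $\szx\vdash D=\flat(D)$, by structural induction on $D$: the base cases are exactly the distribution rules (for spiders and Hadamard), the cup/cap rules, the swap-naturality rules, and the trivial cases for identities, dividers, gatherers; the inductive cases for $\circ$ and $\otimes$ follow by functoriality of $\flat$ after using rule \eqref{E} / \eqref{P} and Theorem \ref{thm:rewire} to merge adjacent divider/gatherer layers.

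Once flattening is available, the argument concludes quickly. Suppose $\interp{D}_s=\interp{D'}_s=(M,a,b)$. Then $\flat(D)$ and $\flat(D')$ have the same underlying type data, and their ``inner'' ZX-parts $E,E':S(a)\to S(b)$ satisfy $\interp{E}=\interp{E'}=M$ in the ordinary ZX sense (the outer divider/gatherer layers are interpreted as identities, by the semantics of the divider and gatherer given above). By completeness of the ZX-calculus, $ZX\vdash E=E'$. Since every ZX-axiom is also an \szx-axiom (the size-one generators and their rules are literally included), $\szx\vdash E=E'$, hence $\szx\vdash \flat(D)=\flat(D')$, and therefore $\szx\vdash D=\flat(D)=\flat(D')=D'$. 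Soundness, already noted after the rules, guarantees the converse direction is not needed here.

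The step I expect to be the main obstacle is proving $\szx\vdash D=\flat(D)$ cleanly, i.e. that the distribution rules really are strong enough to flatten an \emph{arbitrary} \szx-diagram, not just one generator at a time. The subtlety is bookkeeping of the permutations of thin wires that appear when a generator is split: after flattening a subdiagram one is left with dividers/gatherers interleaved with swaps, and one must invoke Theorem \ref{thm:rewire} (the ``$R$'' principle) to argue that any two such wirings of the same type are \szx-equal, so that the divider/gatherer boundaries of $\flat(D_1)$ and $\flat(D_0)$ match up when composing. I would isolate this as a lemma: any \szx-diagram built only from identities, swaps, dividers and gatherers with input type $a$ and output type $b$ is \szx-provably equal to any other such diagram of the same type (an extension of Theorem \ref{thm:rewire} to include swaps, which follows from the swap-naturality rules \eqref{dswap0}-type rules plus rewiring). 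With that lemma in hand, the induction goes through without further friction, and the only genuinely external input is ZX-completeness.
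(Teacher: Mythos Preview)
Your proposal is correct and follows essentially the same strategy as the paper: flatten every \szx-diagram into the form $\Gamma_b\circ E\circ\Delta_a$ with $E$ an ordinary ZX-diagram and $\Gamma_b,\Delta_a$ type-determined wiring layers, then invoke ZX-completeness on the inner parts. The only cosmetic difference is that the paper establishes the flattening via a termination argument on wire ``situations'' (extending the expanded-form analysis from the proof of Theorem~\ref{thm:rewire}) rather than by structural induction on $D$, which sidesteps the permutation bookkeeping you flagged; but the overall architecture and the reduction to ZX-completeness are the same.
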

\noindent The proof is in the appendix at page \pageref{pr:sem}.

Theorem \ref{thm:sem}  has interesting graphical consequences, ensuring that the \textit{Only Topology Matters} paradigm applies to the \szx-calculus. In particular, swaps of any size behave naturally with respect to any diagram:

\vspace{-0.2cm}
\begin{center}
	$\tikzfig{sdiag0}~=~\tikzfig{sdiag1}$
\end{center}
\vspace{-0.2cm}

This suggests a more compact presentation close to the one of the $ZX$-calculus, given in the next subsection. 

\subsection{Compact axiomatisation}

Assuming that \textit{Only Topology Matters}, the  \szx-calculus enjoys a more compact axiomatisation: 

\begin{figure}[H]
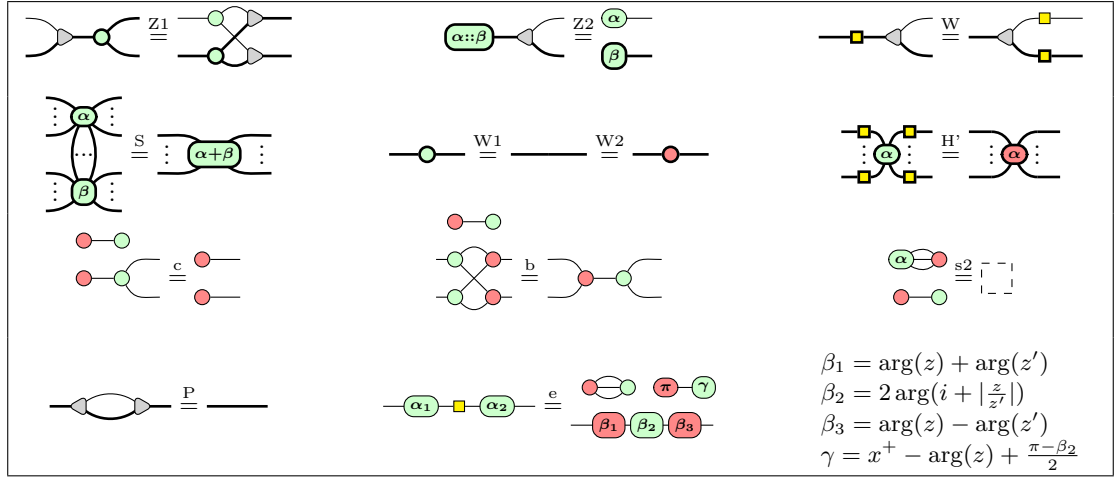

	\begin{tabular}{|ccccc|}\hline
		$\tikzfig{dcop0}\stackrel{\textup{Z1}}{=}\label{Z1}\tikzfig{dcop1}$&$\quad$&$\tikzfig{dun0}\stackrel{\textup{Z2}}{=}\label{Z2}\tikzfig{dun1}$&$\quad$&$\tikzfig{hadd0}\stackrel{\textup{W}}{=}\tikzfig{hadd1}$\\[0.5cm]
		$\tikzfig{caspider0}\stackrel{\textup{S}}{=}\label{S}\tikzfig{caspider1}$&$\quad$&$\tikzfig{cawire0}\stackrel{\textup{W1}}{=}\label{W1}\tikzfig{cawire1}\stackrel{\textup{W2}}{=}\label{W2}\tikzfig{cawire2}$&$\quad$&$\tikzfig{cahadp0}\stackrel{\textup{H'}}{=}\label{H'}\tikzfig{cahadp1}$\\[0.5cm]
		$\tikzfig{rcopy0}\stackrel{\textup{c}}{=}\tikzfig{rcopy1}$&$\quad$&$\tikzfig{bigebre0}\stackrel{\textup{b}}{=}\tikzfig{bigebre1}$&$\quad$&$\tikzfig{bone0}\stackrel{\textup{s2}}{=}\tikzfig{bone1}$\\[0.5cm]
		
		$\tikzfig{exp0}\stackrel{\textup{P}}{=}\tikzfig{exp1}$&$\quad$&$\tikzfig{euler0}\stackrel{\textup{e}}{=}\tikzfig{euler1}$&$\quad$&\begin{tabular}{l}
					$\beta_{1}=\arg(z)+\arg(z')$\\[-0.2cm]
					$\beta_{2}=2\arg(i+|\frac{z}{z'}|)$\\[-0.2cm]
					$\beta_{3}=\arg(z)-\arg(z')$\\[-0.2cm]
					$\gamma=x^{+}-\arg(z)+\frac{\pi-\beta_2}{2}$
				\end{tabular}\\
		\hline
	\end{tabular}
	\caption{\label{fig:compact}{\bf Axioms of the \szx-calculus}, where $x^{+}\coloneqq\frac{\alpha_1 + \alpha_2}{2}$, $x^{-}\coloneqq x^{+}- \alpha_2$, $z\coloneqq-\sin(x^{+})+ i\cos(x^{-})$, $z'\coloneqq\cos(x^{+})-i\sin(x^{-})$ and $z'=0 \Rightarrow \beta_2 =0$.
	In the spider fusion rule, there must be at least one wire between the spiders annotated by $\alpha$ and $\beta$. The colour-swapped versions of those rules also hold. The bold font wires stand for wires of any size $n\geq 1$.
	}
\end{figure}

\begin{lemma}\label{lm:caxiom} 
	All the rules of the \szx-calculus can be derived from the compact axioms of Figure \ref{fig:compact} together with the \textit{Only Topology Matters} paradigm.
\end{lemma}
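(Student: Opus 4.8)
The plan is to go through the rules of the \szx-calculus one family at a time and exhibit each as a consequence of the compact axioms of Figure~\ref{fig:compact} together with the \textit{Only Topology Matters} (OTM) paradigm. Six of them --- \eqref{c}, \eqref{b}, \eqref{s2}, \eqref{e}, \eqref{W}, \eqref{P} --- occur verbatim among the compact axioms, and the naturality of the swap with respect to dividers and gatherers is a direct instance of OTM. The ZX-axioms \eqref{s1}, \eqref{w1}, \eqref{w2}, \eqref{h} are recovered by reading the scalable compact rules \eqref{S}, \eqref{W1}, \eqref{W2}, \eqref{H'} with bold wires of size~$1$ and straightening their auxiliary caps with OTM. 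From this point on every ZX-derivable equation is available; I will use in particular that a cup $0\to 2_n$ equals a $0\to 1_n$ green spider followed by a $1_n\to 2_n$ green spider (dually for caps), and that a red spider equals a green spider conjugated by Hadamard boxes.

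Next I would derive the cup and cap rules \eqref{U}, \eqref{A}. Rewriting a cup $0\to 2_n$ with $n\ge 1$ as the two-spider composite above and placing a divider on each output leg, one pushes both dividers through the $1_n\to 2_n$ green spider with \eqref{Z1} --- leaving a single divider on the internal wire --- and then through the $0\to 1_n$ green spider with \eqref{Z2}, where it disappears; reordering the resulting thin and thick wires by OTM yields \eqref{U}, and \eqref{A} follows dually. Only the ``splitting'' part of the green structure is used here, so no circular appeal to the general rule \eqref{Z} is made.

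The core of the argument is the general distribution rule \eqref{Z}. The point is that \eqref{Z1} and \eqref{Z2} say exactly that the divider is a homomorphism of the green comonoid, so --- by the usual spider normal-form argument, together with \eqref{U}, \eqref{A} and OTM to transpose green merges into co-merges --- the divider commutes with a green spider of arbitrary arity. Concretely I would induct on the number $k+\ell$ of legs: using spider fusion \eqref{S} write the $k_{n+1}\to\ell_{n+1}$ green spider as a tree of $1\to 2$ comultiplications, one phase-carrying $1\to 1$ node and $2\to 1$ multiplications; push the divider through every elementary piece with \eqref{Z1} and \eqref{Z2} (the phase node producing the $\alpha{::}\beta$ bookkeeping), so that every wire of the tree, internal or external, is split into a thin copy and a thick copy; and re-fuse the network of thin copies with \eqref{s1} and the network of thick copies with \eqref{S}, using \eqref{b}, \eqref{c}, \eqref{s2} to clear intermediate self-loops and scalars. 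Rule \eqref{X} is obtained in the same way from the red (co)monoid structure, its two base cases being reduced to \eqref{Z1} and \eqref{Z2} by conjugating with Hadamard boxes via \eqref{W} and \eqref{H'}. Finally the wire-calculus elimination rule \eqref{E} follows from \eqref{P} by bending the gatherer and divider around cups and caps and contracting with \eqref{U}, \eqref{A} and OTM.

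The step I expect to be the main obstacle is the inductive derivation of \eqref{Z} (and, through it, of \eqref{X}): one has to verify that the two small-arity cases \eqref{Z1} and \eqref{Z2} together with spider fusion really do cover a green spider with arbitrarily many inputs \emph{and} outputs, which forces the full spider normal-form machinery and a careful accounting of the permutations separating thin wires from thick wires (all discharged by OTM) and of the appended phase $\alpha{::}\beta$. Everything else reduces to direct instances of the compact axioms, one-line OTM consequences, or routine normal-form and yanking manipulations.
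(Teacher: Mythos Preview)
Your overall strategy matches the paper's: recover the ordinary ZX rules as the size-$1$ instances of the big compact axioms, derive the cup/cap distribution laws, recover the elimination rule, and then get the general \eqref{Z} from \eqref{Z1}, \eqref{Z2} and spider fusion (and \eqref{X} by Hadamard conjugation). The inductive derivation of \eqref{Z} you sketch is exactly what the paper means by ``the general distribution rule for green spiders follows from the unary and ternary one using the green spider rule.''

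There is, however, a genuine ordering problem in the middle of your argument. You claim that the cap rule \eqref{A} ``follows dually'' from your derivation of \eqref{U}, and then derive the elimination rule \eqref{E} from \eqref{P}, \eqref{U}, \eqref{A}. But the compact axioms \eqref{Z1} and \eqref{Z2} are stated for \emph{dividers} only; there is no gatherer analogue among the compact axioms, and OTM alone does not identify the transpose of a divider with a gatherer (that identification is precisely what \eqref{E} and \eqref{P} together give, via the rewiring theorem). So a literal ``dual'' rerun of the \eqref{U} argument for \eqref{A} is not available, and your derivation of \eqref{E} from \eqref{A} is then circular. The paper avoids this by reversing the order: it first derives (a skew form of) the cup rule from \eqref{Z1}, \eqref{Z2}, \eqref{S}, \eqref{W1}; from this cup rule together with \eqref{P} and OTM it derives \eqref{E} \emph{without} using the cap rule; and only then, using \eqref{P} and the freshly obtained \eqref{E}, does it derive the cap rule \eqref{A}. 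If you reorder your argument to $\eqref{U}\Rightarrow\eqref{E}\Rightarrow\eqref{A}$, the rest of your plan goes through, and in particular the bending manoeuvres you need to push dividers through the $2\to 1$ green multiplications in your tree decomposition of \eqref{Z} are then legitimately supported by \eqref{U}, \eqref{A} and OTM.
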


\noindent The proof is in the appendix at page \pageref{pr:caxiom}.

\section{Axiomatising binary matrices for compressing diagrams}

In this section, we introduce a new generator for the \szx-calculus, parametrized by a binary matrix, allowing us to represent large graphical structures in a compact way: $\forall n, m\in \mathbb N^*$, $\forall A\in \mathbb F_2^{m\times n}$, $\tikzfig{mat0-def}:1_n\to 1_m$. All-ones matrices will be omitted:  $\tikzfig{mat1-def} \coloneqq  \tikzfig{mat-all1-def}$ where $\forall i,j, J_{i,j}=1$.  
The new generator is interpreted as follows: \[\forall A\in \mathbb F_2^{m\times n},\interp{\tikzfig{mat0-def}}_s = (\ket x \mapsto \ket {Ax}, 1_n,1_m)\] where  the matrix product $Ax$ is in $\mathbb F_2$ and $x$ is seen as a column vector i.e.~$(Ax)_{i} = \sum_{k=1}^nA_{i,k}x_k \bmod 2$.  

\begin{remark} Note that, compared to \cite{chancellor2016graphical}, the matrix is not necessarily connected to green and red spiders.
It is therefore a more elementary generator.
\end{remark} 

Those matrices are required to satisfy the four axioms given in Figure \ref{fig:matrix}, which are sound. 

\begin{figure}[!h]
	\begin{tabular}{|cccc|}\hline 
		$\tikzfig{zero0}\stackrel{\textup{0}}{=}\label{0}\tikzfig{zero1}$&$\qquad\tikzfig{one0}\stackrel{\textup{1}}{=}\label{1}\tikzfig{one1}$&$\qquad\tikzfig{raw0}\stackrel{\textup{L}}{=}\label{L}\tikzfig{raw1}$&$\qquad\tikzfig{col0}\stackrel{\textup{C}}{=}\label{C}\tikzfig{col1}$\\[0.2cm]\hline 
	\end{tabular}
	\caption{\label{fig:matrix}Axioms for matrices, where $A\in \mathbb{F}_2^{a\times n}$, $B\in \mathbb{F}_2^{b\times n}$, $C\in  \mathbb{F}_2^{m\times c}$ and $D\in \mathbb{F}_2^{m\times d}$. $\SmallMatrix{A\\B}$ and $\SmallMatrix{C&D}$ are block matrices. }
\end{figure}

\begin{remark}  The rules of the ZX-calculus define a scaled Hopf algebra between the green and red structure. This algebra is commutative and cocommutative with a trivial antipode.  Thus, following the work of \cite{zanasi2018interacting}, the notion of $\{0,1\}$-matrices naturally emerges. It is worth noticing that it coincides with the matrices we are introducing in this section. Notice however that our axiomatisation of the matrices strongly relies on their interaction with the divider and the gatherer, which are not present in \cite{zanasi2018interacting}. 
\end{remark}

 In the following, the \szx-calculus refers to the \szx-calculus augmented with the matrix generators and the axioms of Figure \ref{fig:matrix}.

Useful equations can be derived. First, matrices are copied and erased by green nodes.

\begin{lemma}\label{lm:vcopyerase}For any $A\in \mathbb F_2^{m\times n}$, \szx~$\vdash	 \tikzfig{copy0}\stackrel{\textup{K}}{=}\label{K}\label{mcopy}\tikzfig{copyAA}$ and \szx~$\vdash\tikzfig{erase0}\stackrel{\textup{G}}{=}\label{G}\label{merase}\tikzfig{erase1}$

\end{lemma}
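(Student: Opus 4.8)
The plan is to read \textup{(K)} and \textup{(G)} as the statement that the matrix generator is a morphism of comonoids for the green structure: \textup{(K)} asks that the green copy (the spider $1_m\to 2_m$) commute past $A$, and \textup{(G)} that the green erasure (the spider $1_m\to 0$) do so. I would prove both simultaneously by induction on the block decomposition of $A$. Axioms \textup{(0)}, \textup{(1)}, \textup{(L)} and \textup{(C)} of Figure~\ref{fig:matrix} present any $A\in\mathbb F_2^{m\times n}$ --- non-uniquely, but one presentation suffices --- by first peeling rows with \textup{(L)} until $m=1$ and then peeling columns with \textup{(C)} until $n=1$, reaching a $1{\times}1$ matrix, which is either the all-ones $[1]$ or $[0]$.

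For the base cases: by axiom \textup{(1)} the matrix $[1]$ is the identity wire, so \textup{(K)} and \textup{(G)} hold trivially; by axiom \textup{(0)} the matrix $[0]$ factors through the empty type $0$, as a green erasure $1_n\to 0$ followed by a red preparation of $\ket{0^m}$, so \textup{(K)} follows by copying the red state with the copy rule \textup{(c)} (colour-swapped) and \textup{(G)} by discarding it, in both cases using the counit law on the erasure factor. No scalar mismatch arises because all four matrix axioms are stated scalar-exact.

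For the inductive step I would distinguish the two constructors. If $A=\SmallMatrix{A_1\\A_2}$, axiom \textup{(L)} writes $A$ as a green copy, then $A_1\otimes A_2$, then a gatherer; post-composing a green copy and dragging it through the gatherer by rule \textup{(Z)} (and through the intervening swaps by \emph{Only Topology Matters}) turns the goal, after using the induction hypothesis on $A_1$ and $A_2$, into an instance of spider fusion \textup{(s1)} together with cocommutativity of the green node, while \textup{(G)} is the same computation carried by the counit laws. If $A=\SmallMatrix{A_1 & A_2}$, axiom \textup{(C)} writes $A$ as a divider, then $A_1\otimes A_2$, then a red spider $2_m\to 1_m$ (the bit-parity node); here the key move is to push the post-composed green copy through the red node by the bialgebra rule \textup{(b)} and through the divider by rule \textup{(Z)}, after which the induction hypothesis on $A_1,A_2$ and a reassembly by \textup{(C)} on each copy close the case, with \textup{(G)} again the degenerate, erasure form of the bialgebra law.

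I expect the main obstacle to be the plumbing rather than the algebra: showing cleanly that a green copy or green erasure slides through dividers and gatherers requires repeated, careful use of rule \textup{(Z)} (equivalently \textup{(Z1)}/\textup{(Z2)} in the compact axiomatisation) together with the topological moves, all while tracking the change of type between $1_n$ and $1{+}1_{n-1}$ and keeping the induced wire permutations straight. The genuinely equational content is localised and light --- spider fusion for the \textup{(L)}-branch, the bialgebra law for the \textup{(C)}-branch, and the copy rule \textup{(c)} for the $[0]$ base case --- and scalars never obstruct us, since the statement and every invoked axiom are scalar-exact.
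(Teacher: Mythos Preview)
Your proposal is correct and follows essentially the same route as the paper: induction on the dimensions of $A$, with the $1{\times}1$ base case handled via axioms \textup{(0)}, \textup{(1)} and plain ZX rules, one inductive branch (column block, axiom \textup{(C)}) closed by the bialgebra law \textup{(b)}, the other (row block, axiom \textup{(L)}) closed by spider fusion \textup{(s1)}, and rule \textup{(Z)} together with \textup{(P)} doing the plumbing through dividers and gatherers. The paper organises the induction symmetrically --- reducing whichever of $n,m$ exceeds $1$ by one step --- rather than exhausting rows before columns as you suggest, but this is only a presentational difference; the equational content is identical.
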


\noindent The proof is in the appendix at page \pageref{pr:vcopyerase}.

We define backward matrices as follows: $ \tikzfig{mat0-back}\coloneqq\tikzfig{transmat1}$.

\begin{lemma}\label{Had}\label{lm:had} $\forall A\in \mathbb F_2^{m{\times} n}$, \szx~$\vdash 
	\tikzfig{had0}\stackrel{\textup{H}}{=}\label{H}\tikzfig{had1}$ where $A^t$ is the transpose of $A$. 
\end{lemma}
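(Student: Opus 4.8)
The plan is to prove the Hadamard–matrix commutation rule $\tikzfig{had0}\stackrel{\textup{H}}{=}\tikzfig{had1}$ by first reducing to the basic row/column operations that generate all binary matrices, and then checking the identity for those generators. Concretely, every matrix $A\in\mathbb F_2^{m\times n}$ can, using the axioms \eqref{0}, \eqref{1}, \eqref{L}, \eqref{C} of Figure \ref{fig:matrix} together with the rewiring rule \textup{R} and the Hopf-algebra structure, be decomposed into a composite of: the all-ones $1\times 1$ matrix $J_{1,1}$ (i.e. the bare $\tikzfig{mat1-def}$ wire between a single input and single output), the $0$ matrix, identities, and the structural maps (dividers, gatherers, swaps). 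This is exactly the content of the fact — implicit in the axioms and in the remark citing \cite{zanasi2018interacting} — that $\mathbb F_2$-matrices form the prop freely generated by the copy/erase comonoid and the add/zero monoid. So the first step is to make that decomposition explicit enough to run an induction on the structure of $A$.

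Given such a decomposition, the second step is to verify the rule \textup{H} on each building block and show it is preserved by the two compositions. For composition this is the key "functoriality" point: if $\tikzfig{had0}$-style commutation holds for $A$ and for $B$, then applying it twice to $AB$ and using $(AB)^t = B^t A^t$ gives it for $AB$; similarly for the monoidal product, using $\SmallMatrix{A&0\\0&B}^t = \SmallMatrix{A^t&0\\0&B^t}$ and the block-matrix axioms \eqref{L}, \eqref{C}. For the base cases: the identity and $0$ cases are immediate (an empty or vanishing matrix, with \eqref{0} and \textup{s2}); the structural maps (swap, divider, gatherer) pass through Hadamards by the \szx-rules \textup{W}, \textup{dswap}, and the definition of the backward matrix $\tikzfig{mat0-back}\coloneqq\tikzfig{transmat1}$; and the only genuinely interesting base case is $J_{1,1}$, where one must show that a single $\tikzfig{mat1-def}$ wire conjugated by Hadamards on both sides equals the transpose $J_{1,1}^t = J_{1,1}$, i.e. again a bare matrix wire. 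That last computation reduces, after unfolding $\tikzfig{mat1-def}$ via $\tikzfig{mat-all1-def}$ and using rule \textup{W}, to a ZX-level identity about how a green–red pair interacts with the Hadamard (essentially the colour-change rule \eqref{h} together with spider fusion \eqref{s1}); this can either be checked directly in the ZX-calculus or deferred to soundness plus ZX-completeness on this finite fragment.

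The main obstacle I expect is the very first step: pinning down a canonical, axiom-respecting decomposition of an arbitrary $A\in\mathbb F_2^{m\times n}$ into the generators above, and doing so in a way that makes the induction of step two actually go through rather than merely appealing informally to "graphical linear algebra". One clean way around this is to avoid an explicit normal form and instead argue by induction on, say, $m+n$: peel off the first row of $A$ using axiom \eqref{L} (writing $A = \SmallMatrix{a\\A'}$ with $a$ a single row), push the Hadamards through the resulting divider/gatherer with rule \textup{W} and the cup/cap rules \eqref{U}, \eqref{A}, apply the induction hypothesis to $A'$ and separately handle the single row $a$ (which by \eqref{C} reduces further to the $J_{1,1}$ and $0$ base cases), and finally reassemble using \eqref{C} on the transposed side, since transposing turns the row split into a column split. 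The bookkeeping of which wires carry which sizes, and checking that the backward-matrix abbreviation behaves correctly under these splits, is the part that needs care, but no step is mathematically deep once the right induction is set up.
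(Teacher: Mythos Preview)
Your final plan --- induction on the size of $A$, peeling off a row via axiom \hyperref[L]{L}, pushing Hadamards through with \hyperref[W]{W}, applying the induction hypothesis, and reassembling on the transposed side via \hyperref[C]{C} --- is exactly the paper's proof; the base case $m=n=1$ is indeed just the colour-change rule \hyperref[h]{h}. The only organisational difference is that the paper runs two symmetric cases ($n>1$ and $m>1$) directly down to the $1\times 1$ base, rather than peeling rows and then separately reducing a single row by columns, so you can drop the detour through a generator decomposition and the cup/cap rules \hyperref[U]{U}, \hyperref[A]{A}, which are not needed here.
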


\noindent The proof is in the appendix at page \pageref{pr:had}.

As a consequence, conjugating by Hadamard (\!\!\!\tikzfig{bhad-s}\!\!\!) reverses the orientation and transposes the matrix (up to scalars). Since conjugating by Hadamard colour-swaps the spiders and preserves the other generators of the language, one can derive from any equation a new one (up to scalars) which consists in colour-swapping the spiders, transposing the matrices and then changing their orientation. For instance Lemma \ref{lm:vcopyerase} gives that matrices are cocopied and coerased by red nodes:

\begin{lemma}\label{lm:rcocopycoerase}
	For any $A\in \mathbb F_2^{m\times n}$, \szx~$\vdash	 \tikzfig{copy0r}\stackrel{\textup{J}}{=}\label{J}\label{cocopy}\tikzfig{copyAA-r}$ and \szx~$\vdash\tikzfig{rerase0}\stackrel{\textup{F}}{=}\label{F}\tikzfig{rerase1}$

\end{lemma}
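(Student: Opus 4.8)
The plan is to obtain Lemma~\ref{lm:rcocopycoerase} as a direct corollary of Lemma~\ref{lm:vcopyerase} by the colour-swap-and-transpose argument described just after Lemma~\ref{lm:had}. Concretely, I would start from the two equations $\szx\vdash\tikzfig{copy0}\stackrel{\textup{K}}{=}\tikzfig{copyAA}$ and $\szx\vdash\tikzfig{erase0}\stackrel{\textup{G}}{=}\tikzfig{erase1}$ of Lemma~\ref{lm:vcopyerase}, and conjugate both sides of each by a Hadamard box on every input and output wire. Since $\tikzfig{bhad-s}$ is self-inverse up to a scalar (rule \textup{h}/\textup{w}, lifted to big wires by Lemma~\ref{lm:srule}), inserting $\tikzfig{bhad-s}\tikzfig{bhad-s}$ on each wire and regrouping is a sound, derivable move; this is exactly the ``conjugation is harmless'' step.

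Next I would push the Hadamard boxes through the diagram using the rules already available. Conjugating a green spider by Hadamards turns it into a red spider (rule \textup{h}, again lifted by Lemma~\ref{lm:srule} to the big versions), and conjugating a matrix generator $\tikzfig{mat0-def}$ by Hadamards turns it, by Lemma~\ref{lm:had}, into the backward matrix $\tikzfig{mat0-back}$ for $A^t$ (up to a scalar absorbed by a green/red $0$-leg spider, which by the scalar conventions and rule \textup{s2} causes no trouble). Applying this simultaneously to both sides of each equation of Lemma~\ref{lm:vcopyerase}: the left-hand side ``matrix then green copy node'' becomes ``backward matrix then red cocopy node'', i.e.\ the left-hand side of the \textup{J}-equation; the right-hand side ``two copies of the matrix fed from a green node'' becomes ``two backward matrices feeding into a red node'', i.e.\ the right-hand side of \textup{J}. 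The same bookkeeping on the erasing equation \textup{G} produces the \textup{F}-equation, where the green $0$-output (the ``empty'' state erasing node) becomes a red one.

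The only genuinely fiddly point is keeping track of the orientation reversal and the scalar factors: Lemma~\ref{lm:had} reverses the arrow of the matrix box, so one must check that after conjugation the resulting diagram is read in the correct direction to match the stated pictures for \textup{J} and \textup{F}, and that the accumulated $\frac{1}{\sqrt2}$-type scalars from the several Hadamard boxes cancel in pairs (each box appears an even number of times once the identities $\tikzfig{bhad-s}\tikzfig{bhad-s}=\id$ are used) so that no spurious scalar survives. I expect this normalisation check to be the main obstacle, but it is routine given the scalar conventions fixed in the \szx-calculus and the fact that the two sides of each equation in Lemma~\ref{lm:vcopyerase} already have matching scalars. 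Once orientation and scalars are verified, the two derivations are immediate, and the lemma follows.
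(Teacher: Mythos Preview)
Your proposal is correct and matches the paper's own argument: the paper does not give a separate appendix proof of this lemma but states it as an immediate consequence of Lemma~\ref{lm:vcopyerase} via the Hadamard-conjugation principle established after Lemma~\ref{lm:had}, exactly as you outline. Your remarks on orientation reversal and scalar cancellation are the only details to check, and the paper simply subsumes these under the phrase ``up to scalars''.
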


Basic matrix operations like addition and multiplication (in $\mathbb F_2$) can be implemented graphically:

\begin{lemma}\label{lm:addmul} For any $ A,B \in \mathbb F_2^{m\times n}$, and any $C\in  \mathbb F_2^{k\times m}$,  \szx~$\vdash \tikzfig{add0}\stackrel{\textup{p}}{=}\label{p}\tikzfig{add1}$ and  \szx~$\vdash\tikzfig{mul0}\stackrel{\textup{m}}{=}\label{m}\tikzfig{mul1}$.
\end{lemma}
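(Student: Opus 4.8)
The plan is to reduce both equations to the matrix axioms of Figure \ref{fig:matrix} (rules \eqref{0}, \eqref{1}, \eqref{L}, \eqref{C}) together with the copy/erase behaviour of spiders on matrices from Lemmas \ref{lm:vcopyerase} and \ref{lm:rcocopycoerase}, working everything up to the \textit{Only Topology Matters} paradigm. For the addition equation, the key observation is that adding $A$ and $B$ in $\mathbb F_2$ amounts to the block decomposition $A + B = \SmallMatrix{A & B}\circ\SmallMatrix{I\\I}$, and that $\SmallMatrix{I\\I}$ is realised by a green node of arity $1\to 2$ post-composed with a divider, while $\SmallMatrix{A&B}$ is realised by a gatherer followed by the parallel composite of the $A$-matrix and the $B$-matrix and a red node merging them (using that matrices distribute over the block structure by rules \eqref{L} and \eqref{C}). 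Concretely, I would start from the right-hand side, use rule \eqref{C} to split the single incoming matrix of the gathered wire into the parallel pair $A\otimes B$ on the two sub-wires, then use rule \eqref{L} dually on the output side, and finally recognise that the remaining green-node/red-node sandwich composed with the divider and gatherer is exactly $\SmallMatrix{I\\I}$ on the input and $\SmallMatrix{I&I}$ on the output, which collapse the block matrix to $A+B$.

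For the multiplication equation, the strategy is even more direct: composing the $A$-matrix wire with the $C$-matrix wire in sequence should give the $CA$-matrix wire, and this is essentially functoriality of the interpretation $\ket x\mapsto \ket{Ax}$ under composition — but we need it \emph{derivably}. The idea is to express the sequential composite of two matrix generators by inserting a divider/gatherer pair in the middle (which is the identity by rule \eqref{E}), pushing one matrix through it using the distribution-style behaviour already captured by \eqref{L}/\eqref{C}, and then using that a matrix generator followed by the all-ones matrix $J$ (or by a green/red node) reduces via \eqref{0} and \eqref{1} whenever a row or column degenerates. Alternatively, and more cleanly, I would prove it by induction on the structure of $A$ and $C$ using the row/column splitting axioms \eqref{L} and \eqref{C} as the inductive step: a $k\times m$ matrix $C$ decomposes as stacked rows, each row being handled by \eqref{L}, while $A$ decomposes into columns via \eqref{C}; the base cases are the $1\times 1$ matrices, which are either the identity wire or a green-node–red-node pair, for which $CA$ is immediate from rule \eqref{1} (the all-ones $1\times1$ case) and from \eqref{0} when a zero entry appears.

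The main obstacle I expect is the bookkeeping around rules \eqref{L} and \eqref{C} and the need to keep the block decompositions consistent: stacking rows and stacking columns must be combined in the correct order, and one must be careful that the induction on $A$ and the induction on $C$ do not interfere (so the cleanest route is probably a double induction, or a single induction on $m$ — the shared dimension — peeling off one column of $A$ and simultaneously one row of $C$, which lets \eqref{C} and \eqref{L} fire together). Once the $1\times1$ base cases are dispatched via \eqref{0} and \eqref{1}, everything else should be routine rewriting, with the \textit{Only Topology Matters} paradigm absorbing the permutations of wires introduced when splitting blocks. For the addition equation the analogous subtlety is checking that the green node correctly produces two \emph{equal} copies of the input before the matrices are applied — this is exactly Lemma \ref{lm:vcopyerase} (matrices copied by green nodes), so no new difficulty arises there.
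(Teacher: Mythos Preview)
Your overall shape for multiplication --- induction on the matrix dimensions using the block axioms (\hyperref[L]{L}) and (\hyperref[C]{C}) --- matches the paper, but the way you have coupled the two halves is circular. For addition you want to realise $A+B$ as the product $\SmallMatrix{A&B}\SmallMatrix{I\\I}$ and then unfold the block factors via (\hyperref[L]{L})/(\hyperref[C]{C}) into the green-copy/red-merge sandwich; but the ``collapse the block matrix to $A+B$'' step is precisely the multiplication rule (\hyperref[m]{m}): once (\hyperref[L]{L}) and (\hyperref[C]{C}) have fired you are left with two matrix generators in sequence, and turning that composite into a single $(A{+}B)$-generator is exactly what (\hyperref[m]{m}) asserts. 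Conversely, your multiplication argument inducts on the shared dimension, and peeling one column of $A$ against one row of $C$ yields $CA = c_1a_1 + C'A'$ --- recombining the two summands into one matrix generator is exactly the addition rule (\hyperref[p]{p}). So each of your two proofs presupposes the other.

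The paper breaks the cycle by proving addition \emph{first} and directly, by a double induction on $n$ and $m$. The $1{\times}1$ base case is not axioms (\hyperref[0]{0})/(\hyperref[1]{1}) but the Hopf law of the ZX-calculus (two parallel edges between a green and a red spider cancel), which is what addition in $\mathbb F_2$ degenerates to on scalars; the inductive steps use (\hyperref[L]{L})/(\hyperref[C]{C}) together with the spider-distribution rules (\hyperref[Z]{Z})/(\hyperref[X]{X}) and spider fusion, and Lemma~\ref{lm:vcopyerase} plays no role here. Only then is multiplication proved, by a \emph{triple} induction on all three dimensions: the two outer dimensions use the copy/cocopy Lemmas~\ref{lm:vcopyerase} and~\ref{lm:rcocopycoerase} (this is where (\hyperref[K]{K}) and (\hyperref[J]{J}) enter), while the shared-dimension step explicitly invokes the freshly-established (\hyperref[p]{p}). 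Your reference to Lemma~\ref{lm:vcopyerase} therefore belongs to the multiplication argument, not the addition one.
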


\noindent The proof is in the appendix at page \pageref{pr:addmul}.

Whereas all the previous properties about matrices are angle-free, some spiders whose angles are multiple of $\pi$ can be pushed through matrices as follows:

\begin{lemma}\label{pi}\label{lm:pivect} 
For any $A \in \mathbb F_2^{m\times n}$, any $v \in \mathbb F_2^{n}$ and any $u \in \mathbb F_2^{m}$,\\ \centerline{\szx~$\vdash \tikzfig{pivect0}\stackrel{\textup{N}}{=}\label{N}\tikzfig{pivect1}$ ~~~~~and~~~~~ \szx~$\vdash \tikzfig{pivect1-d}\stackrel{\textup{O}}{=}\label{O}\tikzfig{pivect0-d}$}
\end{lemma}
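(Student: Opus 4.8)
The plan is to prove the first equation $\tikzfig{pivect0}\stackrel{\textup{N}}{=}\tikzfig{pivect1}$ directly, by induction on the structure of the matrix $A$, exploiting the same inductive toolbox used to derive the earlier matrix laws (Lemmas \ref{lm:vcopyerase}, \ref{lm:addmul}); the second equation then follows from the first by conjugating with Hadamard, using Lemma \ref{lm:had} together with the observation, noted after that lemma, that any derived equation yields a colour-swapped, transposed, orientation-reversed companion (here the $\pi$-labelled green spider on the left becomes a $\pi$-labelled red spider on the right, and the vectors $v\in\mathbb F_2^n$, $u\in\mathbb F_2^m$ swap roles under transposition). So the bulk of the work is the first identity. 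Semantically it says $\ket{Ax\oplus u}$ with a phase $(-1)^{v\bullet x}$ should equal $\ket{A(x\oplus v)\oplus u'}$ up to the phase reorganisation, i.e. $u' = u \oplus Av$ absorbs the shift of the input basis vector and the diagonal $\pi$-spiders $v$ on the input get traded for $\pi$-spiders $Av$ (together with $u$) on the output; the $v\bullet x$ phase is precisely what a $\pi$-red-spider contributes, so the statement is the diagrammatic incarnation of the affine identity $A(x\oplus v) = Ax\oplus Av$ together with bookkeeping of the sign.

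Concretely I would set up the induction using the row/column decomposition axioms \textup{L} and \textup{C} of Figure \ref{fig:matrix}, which let one build any $A\in\mathbb F_2^{m\times n}$ from the base cases $A=0$ (axiom \textup{0}), $A$ a single entry $1$ (axiom \textup{1}, the all-ones $1\times1$ matrix), by stacking rows $\SmallMatrix{A\\B}$ and juxtaposing columns $\SmallMatrix{C&D}$. In each base case the claim reduces to a short ZX-level computation: a $\pi$ red spider copied/absorbed through a green spider (the $\textup{c}$/$\pi$-copy rule) and a $\pi$ green spider sliding along a wire. For the column step, the input vector $v = \SmallMatrix{v_1\\v_2}$ splits along the divider using the distribution rules \textup{Z}, \textup{X}, \textup{W} from section \ref{dist} and the copy/erase Lemmas \ref{lm:vcopyerase}, \ref{lm:rcocopycoerase}; the two $\pi$-spider blocks $v_1$, $v_2$ are pushed through $C$ and $D$ respectively by the induction hypothesis, and the two resulting output contributions $Cv_1$ and $Dv_2$ recombine — via the green-spider addition Lemma \ref{lm:addmul}(\textup{p}) — into $Cv_1\oplus Dv_2 = \SmallMatrix{C&D}v$ on the output wire, which is what we want. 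The row step is dual and uses the gatherer side together with Lemma \ref{lm:rcocopycoerase}.

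The main obstacle I anticipate is the sign/phase bookkeeping rather than the combinatorics of the matrix decomposition: keeping track of which $\pi$-spiders migrate from the input side to the output side and verifying that the $(-1)^{v\bullet x}$ contribution (a red $\pi$-spider on the input, in the diagram for $u$, $v$) is exactly accounted for at each inductive step without producing spurious scalar factors. One has to be careful because the interpretations in the excerpt are only well-defined up to global scalars in several of the preceding lemmas (e.g. Lemma \ref{lm:had} and Lemma \ref{lm:addmul} are stated up to scalars), so the cleanest route is to first prove the scalar-free skeleton of the identity by the induction above and then close the scalar gap by a separate, routine argument — checking that both sides have the same (global) scalar, either by direct evaluation on $\ket{0^n}$ or by invoking completeness (Theorem \ref{thm:sem}) for the scalar subdiagram. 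A secondary technical point is handling the degenerate sizes ($v=0$ or $u=0$, or a zero row/column of $A$), where the $\pi$-spider becomes a plain spider and one falls back on axioms \textup{0}, \textup{1} and the elimination rule \textup{E}; these are easy but must be listed explicitly so the induction's base is complete.
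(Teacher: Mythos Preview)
Your plan is essentially the paper's own proof: induct on the dimensions of $A$ using the row/column axioms \hyperref[L]{L} and \hyperref[C]{C}, handle the $1\times 1$ base cases $A=0$ and $A=1$ directly, and obtain the second equation \hyperref[O]{O} from the first \hyperref[N]{N} by Hadamard conjugation via Lemma~\ref{lm:had}. The only point worth flagging is that your anticipated ``main obstacle'' of scalar bookkeeping does not actually arise in the inductive proof of \hyperref[N]{N}: the paper carries out the induction using only the rules \hyperref[P]{P}, \hyperref[L]{L}, \hyperref[C]{C}, \hyperref[c]{c}, \hyperref[s1]{s1}, and \hyperref[X]{X}, all of which are scalar-exact, so no separate scalar-closing argument is needed. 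In particular you need not route the recombination through Lemma~\ref{lm:addmul}(\hyperref[p]{p}); plain spider fusion \hyperref[s1]{s1} and the ZX copy rule \hyperref[c]{c} suffice and keep everything on the nose.
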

\noindent The proof is in the appendix at page \pageref{pr:pivect}.

Injective matrices enjoy some specific properties: 

 \begin{lemma}\label{lm:inj}\label{inj} For any $A\in \mathbb F_2^{m\times n}$, the following properties are equivalent:\\
		\begin{tabular}{ll}
			$(1)$ $A$ is injective. &$(3)$   \szx~$\vdash\tikzfig{isur0}\stackrel{\textup{I1}}{=}\label{I1}\tikzfig{isur1}$ \\ $(2)$  \szx~$\vdash\tikzfig{rerase0g}=\tikzfig{rerase1-g}$&
			  $(4)$  \szx~$\vdash\tikzfig{copyAA-g}=\tikzfig{copy0g}$
		\end{tabular}
\end{lemma}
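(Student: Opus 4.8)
The plan is to show that each of (2), (3), (4) is equivalent to (1) by proving, for each, the two implications separately: the direction ``equation $\Rightarrow$ $A$ injective'' is a short soundness argument, while ``$A$ injective $\Rightarrow$ equation'' is the substantive part and goes through a canonical form for injective matrices over $\mathbb F_2$.

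For the soundness direction, if one of (2), (3), (4) is derivable then, since every rule of the \szx-calculus -- the ZX-axioms, the divider/gatherer rules, and the matrix axioms of Figure \ref{fig:matrix} -- preserves the interpretation, the two sides have equal interpretations as $\mathbb C$-linear maps; it then suffices to evaluate both sides on the computational basis and check that the equality can hold only when $A$ is injective. For (4), writing $\nabla$ for the green $2\to1$ spider, the two sides send $\ket{xy}$ to $[Ax=Ay]\,\ket{Ax}$ and to $[x=y]\,\ket{Ax}$ respectively, which agree for all $x,y\in\{0,1\}^n$ exactly when $Ax=Ay\Rightarrow x=y$; the computations behind (2) and (3) are of the same kind. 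This gives (2)$\Rightarrow$(1), (3)$\Rightarrow$(1), (4)$\Rightarrow$(1).

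For the main direction (1)$\Rightarrow$(2),(3),(4) I would first note that each of the three equations, read as a property of the matrix occupying the $A$-slot, is closed under matrix composition: if it is derivable for $A$ and for $B$, then it is derivable for $BA$, by rewriting $BA$ as the composite of the $A$- and $B$-generators with the multiplication rule (\textup{m}) of Lemma \ref{lm:addmul} and then pushing the outer factor through the green/red branches with Lemmas \ref{lm:vcopyerase} and \ref{lm:rcocopycoerase}. Since an injective $A\in\mathbb F_2^{m\times n}$ has full column rank, Gaussian elimination gives $A=P\,\SmallMatrix{I_n\\0}$ with $P\in GL_m(\mathbb F_2)$, and $P$ is a product of permutation and transvection matrices (which generate $GL_m(\mathbb F_2)$). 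So it is enough to prove the three equations for (i) permutation matrices -- immediate, as permutations are built from swaps, which behave naturally with respect to every generator by Theorem \ref{thm:sem}; (ii) the transvection $\SmallMatrix{1&0\\1&1}$, i.e.\ a CNOT -- a direct check from the matrix axioms \textup{L}, \textup{C}, \textup{K} and the ZX spider rules; and (iii) the canonical inclusion $\SmallMatrix{I_n\\0}:1_n\to1_m$ -- here axiom \textup{L} unfolds it into a gatherer fed by the identity wire of size $n$ together with the zero matrix $0\in\mathbb F_2^{(m-n)\times n}$, axiom \textup{0} turns that zero matrix into a green coeraser followed by a red unit, and each of (2), (3), (4) then collapses to a short manipulation in the $\mathbb W$-calculus (Theorem \ref{thm:rewire}) plus the (co)copy/(co)erase lemmas -- e.g.\ for (4) the green multiplication splits across the gatherer, merges freely through the identity branch, and annihilates the coerased branch. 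One may also shortcut part of this by deriving one of (2), (3) from the other by conjugating with Hadamards, using Lemma \ref{lm:had} (which transposes matrices and reverses their orientation).

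The main obstacles are: (a) the closure step for equations (2) and (3), where -- since (3) is phrased with the backward matrix $A^t$ and transposition is contravariant -- one must track orientations carefully so that $(BA)^t=A^tB^t$ does not spoil the argument; (b) the type bookkeeping in case (iii), where $\SmallMatrix{I_n\\0}$ is not square and a spurious register of size $m-n$ is created by the gatherer and then discarded, together with the degenerate case $m=n$ (where $A$ is invertible and everything already follows from cases (i)--(ii)); and (c) making the two small auxiliary facts used throughout fully explicit -- that the $I_n$-matrix generator is the bare identity wire, and that the $P^{-1}$-generator is the graphical inverse of the $P$-generator -- both of which follow from Lemma \ref{lm:addmul} and the matrix axioms by induction on size. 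Once the canonical form is reached, none of the three concluding computations exceeds a handful of rewrite steps.
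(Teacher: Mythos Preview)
Your proposal is correct, but it takes a substantially different route from the paper's. The paper argues by a single cycle $(1)\Rightarrow(2)\Rightarrow(3)\Rightarrow(4)\Rightarrow(1)$ rather than proving each of $(2),(3),(4)$ separately equivalent to $(1)$, and the crucial step $(1)\Rightarrow(2)$ is dispatched in one line by appealing to \emph{completeness} (Theorem~\ref{thm:sem}): one computes that both sides of $(2)$ send $\ket x$ to $\delta_{Ax,0}$ resp.\ $\delta_{x,0}$, observes these agree when $A$ is injective, and concludes derivability directly. The remaining links $(2)\Rightarrow(3)$ and $(3)\Rightarrow(4)$ are then short purely diagrammatic rewrites using the (co)copy/(co)erase lemmas (rules \hyperref[J]{J}, \hyperref[F]{F}, \hyperref[K]{K}) and spider fusion; only $(4)\Rightarrow(1)$ uses the soundness computation you describe.

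Your factorisation argument via $A=P\SmallMatrix{I_n\\0}$ with $P$ a product of permutations and transvections does work and is a perfectly legitimate ``syntactic'' proof, and you have correctly identified the bookkeeping hazards (orientation under transpose for $(3)$, the extra $m{-}n$ register for the canonical inclusion, the auxiliary identities for $I_n$ and $P^{-1}$). Its virtue is that it never invokes completeness, so it would survive in a setting where completeness is unavailable or where one wants an explicit rewrite sequence. Its cost is length: several case analyses and closure lemmas, where the paper gets away with a one-line semantic check plus two short diagram chases. If you want to match the paper's economy, drop the factorisation entirely, use completeness for the direction ``$A$ injective $\Rightarrow$ equation'' on the easiest of the three statements, and chain the three equations graphically.
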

\noindent The proof is in the appendix at page \pageref{pr:inj}.

By Hadamard conjugation, we obtain some dual properties for surjective matrices:

 \begin{lemma}\label{lm:surj}\label{surj} For any $A\in \mathbb F_2^{m\times n}$, the following properties are equivalent:\\
 	\begin{tabular}{ll}
 		$(1)$ $A$ is surjective. &$(3)$   \szx~$\vdash\tikzfig{isur0-d}\stackrel{\textup{S1}}{=}\label{S1}\tikzfig{isur1}$ \\ $(2)$  \szx~$\vdash\tikzfig{erase0-d}=\tikzfig{erase1-d}$&
 		$(4)$  \szx~$\vdash\tikzfig{copy0r-d}=\tikzfig{copyAA-d}$
 	\end{tabular}
\end{lemma}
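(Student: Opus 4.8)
The plan is to prove Lemma~\ref{lm:surj} (the surjective dual of Lemma~\ref{lm:inj}) by Hadamard conjugation, i.e.\ by transporting each equivalence of Lemma~\ref{lm:inj} across the operation ``conjugate every wire by \tikzfig{bhad-s}''. The starting observation is that for a matrix $A\in\mathbb F_2^{m\times n}$, the backward matrix $\tikzfig{mat0-back}$ is by definition $\tikzfig{transmat1}$, and by Lemma~\ref{lm:had} sandwiching the forward matrix of $A$ between Hadamards yields (up to a scalar) the backward matrix of $A^t$. Since $A$ is surjective iff $A^t$ is injective, applying Lemma~\ref{lm:inj} to $A^t$ and then conjugating the four listed statements by Hadamard should produce exactly the four statements of Lemma~\ref{lm:surj} for $A$.

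Concretely, I would proceed in the following steps. First, record the ``Hadamard colour-duality'' principle already noted in the text after Lemma~\ref{lm:had}: conjugating an equation by \tikzfig{bhad-s} colour-swaps green and red spiders, leaves cups, caps, dividers and gatherers essentially invariant (up to the relevant scalars, which can be absorbed using rule \eqref{s2} and the bone generator), and sends a forward matrix $A$ to the backward matrix of $A^t$. Second, apply Lemma~\ref{lm:inj} with $A$ replaced by $A^t$: this gives the equivalence of (i) $A^t$ injective, (ii)--(iv) the three corresponding \szx-derivable equations for $A^t$. Third, conjugate statements (ii)(iii)(iv) of that instance by Hadamard on all wires; using the colour-duality principle, (ii) becomes the green coerasing identity \tikzfig{erase0-d}$=$\tikzfig{erase1-d}, (iii) becomes \tikzfig{isur0-d}$\stackrel{\textup{S1}}{=}$\tikzfig{isur1}, and (iv) becomes the green cocopy identity \tikzfig{copy0r-d}$=$\tikzfig{copyAA-d}, each now phrased in terms of the backward matrix of $(A^t)^t=A$, i.e.\ exactly the diagrams appearing in Lemma~\ref{lm:surj}. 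Fourth, note that ``$A^t$ injective'' is literally ``$A$ surjective'', closing the chain of equivalences. One must also check that the conjugation is genuinely invertible as a syntactic transformation (conjugating twice returns to the original diagram, using \eqref{h} / rule \eqref{w1}--\eqref{w2} for $H^2=\mathrm{id}$ up to scalar), so that the implications go both ways and the four properties remain mutually equivalent rather than merely all implied by surjectivity.

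**The main obstacle** I anticipate is bookkeeping of scalars and of the precise diagrammatic shape rather than any conceptual difficulty: Lemma~\ref{lm:had} holds only ``up to scalars'', so each Hadamard-conjugation step introduces powers of $\tfrac1{\sqrt2}$-type scalar diagrams that must be shown to cancel or to match on both sides of each equation in Figure-style form. Handling this cleanly likely means first proving a clean ``scalar-free Hadamard duality'' statement at the level of diagrams (or invoking the already-established completeness, Theorem~\ref{thm:sem}, together with soundness to discharge the remaining scalar identities once the underlying linear maps agree), and only then reading off the four equivalences. A secondary, minor point is making sure the dividers/gatherers occurring inside the diagrams of statements (3) and (4) are transported correctly under conjugation; this is routine given the rules \eqref{Z}, \eqref{X}, \eqref{W} governing how spiders, Hadamards and (de)composers interact, but it needs to be spelled out so that the conjugated diagram is literally the one drawn in Lemma~\ref{lm:surj}.
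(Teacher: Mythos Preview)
Your proposal is correct and follows exactly the paper's own approach: the paper simply states that Lemma~\ref{lm:surj} is obtained from Lemma~\ref{lm:inj} ``by Hadamard conjugation'', using the duality principle recorded after Lemma~\ref{lm:had} (conjugation by \tikzfig{bhad-s} colour-swaps spiders, transposes matrices and reverses their orientation up to scalars), together with the fact that $A$ is surjective iff $A^t$ is injective. Your concerns about scalar bookkeeping are legitimate but routine, and the paper does not elaborate further.
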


Due to the universality of the \szx-calculus, matrices are expressible as \szx-diagrams, and the matrix generator \!\!\tikzfig{mat0-def}\!\! is actually a compact representation of a  green/red bipartite graphs whose biadjacency matrix is $A$:

\begin{lemma}\label{lm:rep} For any $A \in \mathbb F_2^{m\times n}$, \szx~$\vdash\tikzfig{rep0}\stackrel{\textup{B}}{=}\label{B}\tikzfig{rep1}$
	where  $A$ represents in the RHS diagram the adjacency matrix of the bipartite green/red graph, and $|A|$ is the number of $1$ in $A$.
	
\end{lemma}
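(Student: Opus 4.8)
The plan is to unfold the matrix generator $\tikzfig{mat0-def}$ into its "obvious" diagrammatic implementation and then show the two sides are equal by matching interpretations and appealing to completeness, or better, by a direct derivation using the matrix axioms of Figure~\ref{fig:matrix}. First I would describe the right-hand side explicitly: the wire of type $1_n$ is split by a generalized divider into $n$ individual wires, each single wire is fed into a green spider with one input and $m$ outputs (a "copy" node), the resulting $nm$ wires are routed (using swaps) so that for each output coordinate $i$ the wires corresponding to the entries $A_{i,1},\dots,A_{i,n}$ are grouped, each such group is filtered by the matrix row — concretely the wires with $A_{i,k}=0$ are discarded and the wires with $A_{i,k}=1$ are collected — and then a red spider with the appropriate number of inputs and one output XORs them together, finally the $m$ resulting wires are gathered back into a wire of type $1_m$. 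The scalar $\sqrt 2^{|A|}$ (or $\frac{1}{\sqrt 2^{|A|}}$, depending on the exact normalization in the figure) accounts for the spiders and the Hopf-algebra bialgebra interactions.

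The cleanest route is an induction on the matrix, peeling off one column (or one row) at a time using axioms \eqref{L} and \eqref{C} together with Lemma~\ref{lm:addmul} (graphical matrix multiplication) and Lemma~\ref{lm:vcopyerase} (copy/erase by green nodes). Concretely: a matrix with a single column $v\in\mathbb F_2^m$ is, by axiom \eqref{L} applied iteratively, exactly a green copy node followed by the pointwise $v$-masking, which is precisely one green $1\to m$ spider with the appropriate legs erased (when $v_i=0$) via axiom \eqref{0}/Lemma~\ref{lm:vcopyerase}, and kept (when $v_i=1$) via axiom \eqref{1}; the base case is $1\times 1$, where the identity matrix is a plain wire and the all-ones $1\times1$ matrix (i.e. the single entry $1$) is also a wire by \eqref{1}. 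For the inductive step I would write $A = \SmallMatrix{A' & A''}$ as a block decomposition into its first $n-1$ columns and last column, use axiom \eqref{L} to separate the corresponding two pieces through the initial divider, apply the induction hypothesis to each block, and then reassemble the red spiders on the output side using the bialgebra rule \eqref{b} and spider fusion \eqref{s1} — this is where the red $\oplus$-structure on the outputs gets built up correctly, and where the scalar $\sqrt 2^{|A|}$ accumulates multiplicatively over the columns. Alternatively, since Theorem~\ref{thm:sem} gives completeness, one may simply check that both sides have interpretation $\ket x \mapsto \sqrt2^{|A|}\,\ket{Ax}$ (the scalar coming from counting spiders of the explicit construction) and invoke completeness; I would present the inductive derivation as the "honest" proof and mention the completeness shortcut.

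The main obstacle is bookkeeping: getting the wire routing and the scalar exactly right. One has to be careful that the red spiders on the output implement $\mathbb F_2$-addition (so a column with $k$ ones in row $i$ contributes a red $k\to1$ spider, which for $k=0$ degenerates to a red state that must be handled by \eqref{0} or the unit laws, and for $k=1$ is just a wire) and that the parallel green copy nodes, when their images collide at the same red output, compose via the bialgebra law rather than naively. The scalar count $|A|$ must be tracked through every application of \eqref{b}, \eqref{s1}, \eqref{c} and the erasing/keeping steps; I expect the normalization constant in the figure to be precisely the number of green-red edges in the bipartite graph, i.e. the number of $1$'s in $A$, and verifying this matches requires care but no new idea. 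A secondary but routine point is that the "generalized" $n$-ary divider and gatherer used in the RHS are well-defined and satisfy all well-typed equations by Theorem~\ref{thm:rewire}, so all the re-bracketing and re-ordering of the inner wires is free (labelled by \eqref{R}), which keeps the combinatorics from getting out of hand.
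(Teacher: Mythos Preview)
Your proposal is essentially the paper's proof: an induction on the dimensions of $A$, peeling off one row or one column at a time via axioms \hyperref[L]{L} and \hyperref[C]{C}, with the $1\times 1$ base case handled by axioms \hyperref[0]{0} and \hyperref[1]{1}. The paper's derivation is slightly leaner than what you describe: after applying \hyperref[P]{P} and \hyperref[L]{L} (resp.\ \hyperref[C]{C}) and the induction hypothesis, the reassembly on the output side is done directly with the distribution rules \hyperref[X]{X} (resp.\ \hyperref[Z]{Z}) and spider fusion \hyperref[s1]{s1}, rather than via the bialgebra law \hyperref[b]{b} or Lemma~\ref{lm:addmul}; the scalar bookkeeping then takes care of itself in the induction without a separate count. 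Your invocation of Lemma~\ref{lm:addmul} and Lemma~\ref{lm:vcopyerase} is not wrong, just unnecessary, and the completeness shortcut you mention is of course available but the paper gives the explicit derivation.
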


\noindent The proof is in the appendix at page \pageref{pr:rep}.

Lemma \ref{lm:rep} and \ref{thm:sem} imply the completeness of the \szx-calculus with matrices: 

\begin{theorem}
 \szx-calculus with matrices is complete. 
\end{theorem}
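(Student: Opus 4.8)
The plan is to reduce completeness of the \szx-calculus with matrices to Theorem~\ref{thm:sem}, the completeness of the plain \szx-calculus, by showing that the matrix generator is entirely eliminable. Concretely, suppose $D, D' \in \mathcal{S}ZX[a,b]$ possibly contain matrix generators and satisfy $\interp{D}_s = \interp{D'}_s$. First I would use Lemma~\ref{lm:rep} repeatedly: each occurrence of a matrix generator $\tikzfig{mat0-def}$ for $A \in \mathbb{F}_2^{m\times n}$ can be rewritten, using the derivable equation $\textup{B}$, into a matrix-free \szx-diagram (a green/red bipartite graph with biadjacency matrix $A$, up to a scalar $\tikzfig{bone1}$-type factor $\sqrt 2^{-|A|}$ that is itself a matrix-free diagram). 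Applying this to every matrix node in $D$ yields a matrix-free diagram $\widehat D$ with $\szx \vdash D = \widehat D$; similarly $\szx \vdash D' = \widehat{D'}$ with $\widehat{D'}$ matrix-free. Since the rewriting used is sound, $\interp{\widehat D}_s = \interp{D}_s = \interp{D'}_s = \interp{\widehat{D'}}_s$.

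Now $\widehat D$ and $\widehat{D'}$ are both plain \szx-diagrams with equal interpretation, so Theorem~\ref{thm:sem} gives $\szx \vdash \widehat D = \widehat{D'}$, and this derivation uses only the rules of the (matrix-free) \szx-calculus, which remain valid in the augmented calculus. Chaining the three derivations, $\szx \vdash D = \widehat D = \widehat{D'} = D'$, which is exactly the claim. The universality part needed to make Lemma~\ref{lm:rep} meaningful is already covered by Theorem~\ref{thm:univ}, and the soundness of the matrix axioms is asserted in the text, so no extra semantic bookkeeping is required beyond noting that each rewrite step preserves $\interp{\cdot}_s$.

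The only genuine subtlety — and the step I would be most careful about — is the scalar. Lemma~\ref{lm:rep} introduces a factor involving $|A|$, the number of $1$'s in $A$, so eliminating a matrix generator does not give an exactly-equal matrix-free diagram on the nose but one differing by a concrete scalar diagram; one must check that these scalar diagrams are themselves matrix-free (they are: they are built from $\tikzfig{bone1}$, green/red nodes, cups and caps) so that $\widehat D$ and $\widehat{D'}$ genuinely lie in $\mathcal{S}ZX[a,b]$ and Theorem~\ref{thm:sem} applies. Since Lemma~\ref{lm:rep} is an equality \emph{within} the augmented \szx-calculus, the scalar is automatically accounted for on the syntactic side, so no separate scalar-completeness argument is needed. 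Hence the proof is essentially a one-line corollary: $\szx \vdash D = \widehat D = \widehat{D'} = D'$ by Lemma~\ref{lm:rep}, Theorem~\ref{thm:sem}, and Lemma~\ref{lm:rep} again. \qed
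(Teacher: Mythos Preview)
Your argument is correct and matches the paper's own proof, which is simply stated as ``Lemma~\ref{lm:rep} and Theorem~\ref{thm:sem} imply the completeness of the \szx-calculus with matrices.'' Your discussion of the scalar is more careful than anything the paper spells out, but it is consistent with the intended reasoning.
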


\section{Applications}\label{sec:applications}

This section provides two examples of the \szx-calculus in action.

\subsection{Application to graph states}

Graph states \cite{hein2006entanglement} form a subclass of quantum states that can be represented by simple undirected graphs where each vertex represents a qubit and the edges represent intuitively the entanglement between qubits. The graph state formalism is widely used in quantum information processing, providing combinatorial characterisations of quantum properties in measurement-based quantum computing \cite{Raussendorf_2003,DKPP09,gflow}, secret sharing \cite{MS08,KMMP09,Javelle:2013fk}, error correcting codes \cite{Schlingemann:2001fk,bell2014experimental} etc. Graph states are also strongly related to the ZX-calculus \cite{duncan2009graph} where they have been used for instance in proving the completeness of some fragments \cite{pi_2-complete,pivoting}.  
 A graph state is a particular kind of stabilizer state and thus can be defined as  a fixpoint: given a graph $G$ of order $n$, the corresponding graph state $\ket G$ is a the unique state (up to a global phase) such that for any vertex $u$, applying $X=\interp{\tikzfig{rpi2-s}}$ on $u$ and $Z=\interp{\tikzfig{gpi2-s}}$ on its neighbours leaves the state unchanged. The global phase is fixed by the extra condition $\langle 0^n\ket G = \frac{1}{\sqrt{n}}$.

A graph state admits a simple representation as a ZX-diagram: each vertex is represented by a green spider connected to an output, and each edge is represented by a Hadamard (\!\!\!\tikzfig{had-s}\!\!\!) connecting the corresponding green dots. In the following, we provide two alternative, scalable, representations of graph states: the first  is a compact matrix-based  representation of bipartite graph states, the second  is an inductive definition of arbitrary graph state,  allowing inductive proofs.  In both representations, we provide diagrammatic proofs of some key properties of the graph states.

First, any bipartite graph state can be depicted with a \szx-diagram via its biadjacency matrix:

\begin{lemma}\label{lm:bgrphs}
For any bipartite graph $G$ with biadjacency matrix $\Gamma\in \mathbb F_2^{m\times n}$, 

\centerline{$\interp{{\tikzfig{bgbox1}}}_s = (|G\rangle,0,1_{n+m})$}
\end{lemma}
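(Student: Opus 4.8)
The plan is to unfold the diagram on the left-hand side into a known ZX-representation of a graph state and then invoke the semantics. Recall the standard ZX-representation of a graph state $|G\rangle$ for an arbitrary graph $G$ of order $k$ with adjacency matrix $\mathrm{Adj}(G)$: it consists of $k$ green spiders, one per vertex, each connected to one output wire, with a Hadamard box on the edge between two green spiders whenever the corresponding pair of vertices is adjacent in $G$ (plus a scalar correction fixing the normalisation $\langle 0^k | G\rangle = 1/\sqrt{\,\cdot\,}$). For a bipartite graph $G$ with parts of size $n$ and $m$ and biadjacency matrix $\Gamma$, this ZX-diagram has a special shape: the only Hadamard-connected edges go between the $n$ green spiders on one side and the $m$ green spiders on the other, and the pattern of these connections is exactly described by $\Gamma$. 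So the first step is to write down this concrete ZX-diagram and observe, by the universality/interpretation rules of the (S)ZX-calculus together with the fixpoint characterisation of graph states given just before the lemma, that its interpretation is $(|G\rangle, 0, 1_{n+m})$ — this is essentially a restatement of the well-known correspondence of \cite{duncan2009graph} and needs only a direct computation of the interpretation, or a citation.

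The second step is to identify the diagram $\tikzfig{bgbox1}$ on the left-hand side with this concrete bipartite ZX-diagram. Here is where the matrix machinery does the work: by Lemma \ref{lm:rep}, the matrix generator $\tikzfig{mat0-def}$ for $\Gamma$ equals (up to the scalar $\sqrt 2^{-|\Gamma|}$) the green/red bipartite graph whose biadjacency matrix is $\Gamma$, i.e. a layer of $n$ green spiders, a layer of $m$ red spiders, and a wire between green spider $j$ and red spider $i$ exactly when $\Gamma_{i,j}=1$. Conjugating appropriately by Hadamards (using the rule \eqref{h} relating a red spider to a Hadamard-conjugated green spider, and the fusion rules) turns those red spiders into green spiders with Hadamards on the connecting edges, which is precisely the graph-state pattern for a bipartite graph. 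So the plan is: start from $\tikzfig{bgbox1}$, expand the matrix box via Lemma \ref{lm:rep}, then push Hadamards and fuse spiders using only ZX-rules inside the \szx-calculus until the diagram is literally the graph-state diagram for $G$; finally track the accumulated scalar and check it agrees with the normalisation condition.

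Concretely the steps are: (1) recall and verify the ZX-graph-state diagram for a bipartite graph and its interpretation $(|G\rangle,0,1_{n+m})$; (2) apply Lemma \ref{lm:rep} to rewrite the matrix generator appearing in $\tikzfig{bgbox1}$ as an explicit bipartite green/red diagram; (3) use \eqref{h}, spider fusion \eqref{s1}, and the Hadamard rules to convert this into the canonical graph-state shape, keeping careful count of scalar factors (powers of $\sqrt 2$ coming from Lemma \ref{lm:rep} and from Hadamard cancellations); (4) conclude that $\interp{\tikzfig{bgbox1}}_s = (|G\rangle, 0, 1_{n+m})$ by soundness of all the rules used together with step (1). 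Note that the statement is purely semantic ($\interp{\cdot}_s = \dots$), so in fact it would suffice to compute the interpretation of $\tikzfig{bgbox1}$ directly from the inductive definition of $\interp{\cdot}_s$ and the interpretation of the matrix generator $\ket x \mapsto \ket{\Gamma x}$; the diagrammatic route above is the more illuminating one and sets up the inductive/matrix-based proofs of local complementation and pivoting that follow.

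\textbf{Main obstacle.} The main difficulty is bookkeeping rather than conceptual: getting the global scalar exactly right. The matrix box carries a factor $\sqrt 2^{-|\Gamma|}$ (Lemma \ref{lm:rep}), Hadamard insertions and spider fusions each contribute their own powers of $\sqrt 2$, and the graph-state normalisation $\langle 0^{n+m}|G\rangle = 1/\sqrt{\,\cdot\,}$ must come out on the nose. One must also be careful about the definition of $\tikzfig{bgbox1}$ — whether the diagram already includes the green "output" spiders and the cup/cap structure producing a state from $0$ inputs — and about the fact that $\Gamma$ is a genuine biadjacency matrix (so the ambient graph is bipartite with the $i,j$ block structure), which is what makes the red layer from Lemma \ref{lm:rep} convertible to a Hadamard-edge layer without introducing spurious intra-part edges. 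Once the scalar and the boundary conventions are pinned down, the verification is routine.
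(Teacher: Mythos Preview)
Your proposal is correct, but it takes a genuinely different route from the paper's own proof. The paper does \emph{not} unfold the matrix box via Lemma~\ref{lm:rep}. Instead it works entirely at the scalable level and appeals to the stabiliser characterisation of $|G\rangle$: for each vertex $u$ (say in the first part of the bipartition) it shows, by SZX rewriting, that applying $X_u Z_{N_u}$ to the diagram returns the diagram unchanged. The key step is the $\pi$-commutation rule~(\hyperref[O]{O}) of Lemma~\ref{lm:pivect}, which pushes the red $\pi$ on $u$ through the matrix and produces exactly the green $\pi$'s on $N_u$ (using $\Gamma^t u = N_u$); a spider fusion then absorbs everything. The scalar is fixed by post-composing with $\bra{0^{n+m}}$ and simplifying via the erasure rule~(\hyperref[G]{G}) --- no $|\Gamma|$ ever appears.

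Your approach instead expands the matrix into a size-$1$ bipartite green/red ZX-diagram and matches it against the textbook graph-state picture. This is perfectly valid and perhaps more self-contained, but it gives up the compact representation that the paper is trying to advertise, and it forces you into the $\sqrt{2}^{|\Gamma|}$ scalar bookkeeping you flagged as the main obstacle. The paper's route avoids that entirely, keeps the argument at the ``big-wire'' level, and --- more importantly for what follows --- serves as the template for the stabiliser-style manipulations used later in the proofs of pivoting (Lemma~\ref{lm:piv}) and local complementation (Theorem~\ref{thm:locomp}). So both proofs work; the paper's is shorter and better aligned with the methodology the section is meant to illustrate.
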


\begin{proof}[\textbf{Proof of Lemma \ref{lm:bgrphs}}]\phantomsection\label{pr:bgrphs}

	The last two  components are straightforward typing, for the first one we use the characterization of $|G\rangle$ by its stabilizer \cite{hein2006entanglement}.
	$|G\rangle$ is the unique (up to a scalar) common fix point of $X_{u}Z_{N_{u}}$ for all vertices $u$ of $G$. Each subset of vertices is identified with its characteristic vector, e.g. $\mathbf{N}_u\pi$ is a vector with a $\pi$ at the position $i$ if the $i$-$th$ vertex is a neighbour of $u$, and $0$ otherwise. The following proof uses the fact that $\Gamma^t {u}=\mathbf{N}_u$. We assume $u$ is in the first part of the bipartite graph (the other case is similar):

		\noindent	\tikzfig{bgsproofa0}$\stackrel{\hyperref[X]{X},\hyperref[Z]{Z}}{=}$\tikzfig{bgsproofa1}$\stackrel{\hyperref[h]{h}}{=}$\tikzfig{bgsproofa2}$\stackrel{\hyperref[O]{O}}{=}$
			\tikzfig{bgsproofa3}$\stackrel{\hyperref[s1]{s1}}{=}$\tikzfig{bgsproofa4}
	
	\vspace{0.2cm}
	\noindent It remains to take care of the scalar. We see that $\sqrt{2^{n+m}} \langle 0^{n+m}\ket G=1$:
	
\tikzfig{bgsproofb0}$\stackrel{\hyperref[X]{X},\hyperref[Z]{Z}}{=}$\tikzfig{bgsproofb1}$\stackrel{\hyperref[h]{h}}{=}$\tikzfig{bgsproofb2}$\stackrel{\hyperref[G]{G}}{=}$
			\tikzfig{bgsproofb3}$\stackrel{\hyperref[s2]{s2}}{=}$\tikzfig{bgsproofb4}
\end{proof}

A fundamental property of graph states is that graph transformations (like pivoting and local complementation) can be performed on graph states using local operations. Given a bipartite graph $G$, pivoting according to an edge $(u,v)$ produces a graph denoted $G\wedge uv$ where the labels $u$ and $v$ are exchanged and their neighbourhood is complemented: for any $w\in \mathbf{N}_u\setminus v$ and $t\in \mathbf{N}_v\setminus u$, $w$ and $t$ are connected in $G\wedge uv$ iff they are not connected in $G$.  Pivoting can be implemented on bipartite graph states by simply applying Hadamard on $u$ and $v$: $H_{u,v}\ket{G}=\ket{G\wedge uv}$ \cite{VdN05,mhalla2012graph}. This property can be derived in the \szx-calculus, and its proof, given in the appendix at page \pageref{pr:piv} due to space limits, provides an interesting example of the \szx-calculus in action:

\begin{lemma}\label{lm:piv}Given a bipartite graph $G$ and an edge $(u,v)$,

\vspace{0.1cm}
\centerline
{$\szx\vdash \tikzfig{pivot1}=\tikzfig{pivot0}$}
\vspace{0.1cm}

\noindent where $\Gamma_G$ (resp. $\Gamma_{G\wedge uv}$) is the biadjacency matrix of $G$ (resp. $G\wedge uv$) such that $u$ corresponds to the first row (resp. column) and $v$ to the first column (resp. row). 

\end{lemma}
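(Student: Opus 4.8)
The plan is to reduce the pivoting identity to a purely matrix-level statement about block matrices over $\mathbb F_2$, and then to realise that statement diagrammatically using the matrix axioms of Figure \ref{fig:matrix} together with the derived lemmas. First I would fix notation: writing $u$ for the first vertex on the left side and $v$ for the first vertex on the right side, decompose the biadjacency matrix of $G$ in block form
\[
\Gamma_G=\begin{bmatrix}1 & b^t\\ c & M\end{bmatrix},
\]
where $1$ is the $(u,v)$-entry (nonzero since $(u,v)$ is an edge), $c\in\mathbb F_2^{(m-1)}$ encodes $\mathbf N_u\setminus\{v\}$, $b\in\mathbb F_2^{(n-1)}$ encodes $\mathbf N_v\setminus\{u\}$, and $M$ is the rest of the graph. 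The combinatorial definition of $G\wedge uv$ then says precisely that, after swapping the roles of $u$ and $v$, the new biadjacency matrix is
\[
\Gamma_{G\wedge uv}=\begin{bmatrix}1 & c^t\\ b & M+cb^t\end{bmatrix},
\]
the $cb^t$ term being the "complement the common neighbourhood" rule written over $\mathbb F_2$. So the lemma is equivalent to the diagrammatic identity obtained by conjugating the matrix-based bipartite-graph-state diagram of Lemma \ref{lm:bgrphs} (for $\Gamma_G$) by Hadamards on the two wires $u$ and $v$, and showing it equals the corresponding diagram for $\Gamma_{G\wedge uv}$.

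Next I would carry out the diagrammatic computation. Starting from $\tikzfig{pivot1}$, I would first use the divider/gatherer rules \eqref{Z}, \eqref{X}, \eqref{W} (or their compact forms) to split off the single wires $u$ and $v$ from the registers, so that the Hadamards act on isolated qubits; this is where the \textit{Only Topology Matters} paradigm and the rewiring rule \textup{R} do the bookkeeping. Then I would push the two Hadamards through the diagram using Lemma \ref{lm:had} (the $A\mapsto A^t$, orientation-reversing behaviour of Hadamard conjugation) and the Hopf/bialgebra rules \eqref{b}, \eqref{c}; the colour-change together with the $\pi$-copy rule is what turns the "edge at $(u,v)$" into the swap of $u$ and $v$. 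The heart of the calculation is then a matrix manipulation done entirely in the calculus: using the block-matrix axioms \eqref{L} and \eqref{C}, the addition and multiplication gadgets of Lemma \ref{lm:addmul}, and the copy/erase lemmas \ref{lm:vcopyerase} and \ref{lm:rcocopycoerase}, I would realise the elementary row/column operations that transform $\begin{bmatrix}1 & b^t\\ c & M\end{bmatrix}$ into $\begin{bmatrix}1 & c^t\\ b & M+cb^t\end{bmatrix}$. Concretely this is: factor out the pivot, add (over $\mathbb F_2$) the outer product $cb^t$ to the bottom-right block by a multiplication gadget, and swap the two distinguished wires; each of these steps is a single application of one of the matrix axioms or derived lemmas. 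Finally I would reassemble the registers with \textup{R} and track the scalar, exactly as in the proof of Lemma \ref{lm:bgrphs}, to land on $\tikzfig{pivot0}$.

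The main obstacle I expect is the middle step: cleanly extracting the rank-one update $M\mapsto M+cb^t$ as a sequence of legal moves rather than asserting it. In ordinary linear algebra this is trivial, but here one must realise $cb^t$ as a composite of a row-vector matrix, a green copy node (to feed the scalar $b^t x$ into all $m-1$ wires via Lemma \ref{lm:vcopyerase}), and a column-vector matrix, then merge it into $M$ with the addition gadget of Lemma \ref{lm:addmul} — and do so while the $u$ and $v$ wires are being conjugated by Hadamard, so the colours and orientations of the surrounding spiders have to match up at every stage. Managing the scalars (the $\sqrt2$-factors produced by Lemma \ref{lm:had} and by the Hopf rule) is a secondary nuisance, handled by the same $\langle 0^{n+m}|G\rangle$ normalisation argument used in Lemma \ref{lm:bgrphs}. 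Once the rank-one update is in place, the remaining identifications are routine applications of \textup{R} and \textit{Only Topology Matters}.
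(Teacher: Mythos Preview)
Your overall strategy---block-decompose $\Gamma_G$, peel off the $u$ and $v$ wires with dividers, push the two Hadamards inward, and reassemble via the block-matrix axioms (L), (C) and the addition/multiplication gadgets of Lemma~\ref{lm:addmul}---is exactly the paper's route; its rewrite chain is W, L, U, Z, C, b, then K/J, m, p, and back, and the rank-one update $C\mapsto C+BA$ is indeed the heart of the computation.

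However, your target matrix is wrong. Pivoting exchanges only the \emph{labels} $u$ and $v$; it does not move the underlying vertices nor touch the edges incident to them, only the edges between $\mathbf N_u\setminus\{v\}$ and $\mathbf N_v\setminus\{u\}$. Hence with $\Gamma_G=\bigl[\begin{smallmatrix}1&A\\B&C\end{smallmatrix}\bigr]$ one has $\Gamma_{G\wedge uv}=\bigl[\begin{smallmatrix}1&A\\B&\,C+BA\end{smallmatrix}\bigr]$, with the \emph{same} off-diagonal blocks. Your proposed $\bigl[\begin{smallmatrix}1&c^t\\b&M+cb^t\end{smallmatrix}\bigr]$ does not even typecheck: $c^t$ is $1\times(m{-}1)$ while $M+cb^t$ is $(m{-}1)\times(n{-}1)$, so the two block rows have different widths. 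The phrase ``$u$ corresponds to the first column of $\Gamma_{G\wedge uv}$'' merely records that the first-column vertex, formerly named $v$, is now named $u$; nothing is transposed or swapped. Aiming at this wrong target, your calculation will not close.

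Two smaller inaccuracies. Lemma~\ref{lm:had} is not the tool here: it applies only when a matrix arrow is conjugated by Hadamard on \emph{all} wires at both ends, whereas you only have Hadamards on the single qubits $u$ and $v$. The paper instead uses rule W to push each Hadamard through its divider onto a size-$1$ wire and then applies the bialgebra rule b at the $1{\times}1$ pivot; no transpose of a matrix arrow occurs anywhere. And no $\pi$ phases appear in the argument, so the ``$\pi$-copy rule'' is irrelevant, as is the separate scalar normalisation you anticipate---the identity is a pure \szx{} equality and the cited rules already balance all scalars.
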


\noindent The proof is in the appendix at page \pageref{pr:piv}.

Now we introduce a general inductive definition of graph state boxes, which associates a \szx-diagram with any (not necessarily bipartite) graph.

\begin{definition}\label{def}
	Given a graph $G$ with ordered vertices, the corresponding graph state box is defined by:\\
	\vspace{-0.5cm}
	\begin{equation*}
	\tikzfig{gdefb0}\coloneqq\tikzfig{gdefb1}~~and ~~~\tikzfig{ggsdef0}\coloneqq\tikzfig{ggsdef1}
	\end{equation*}
	where $K_1$ is the graph of order $1$, $u$ is the first vertex of $G$, $\tau$ is a permutation on the list of vertices of $G\backslash u$ which puts the neighbourhood of $u$ first and then the other vertices. 
\end{definition}

\begin{lemma}\label{lm:grphs}
	$\left\llbracket\tikzfig{gbox}\right\rrbracket_{s}=(|G\rangle,0,1_{n})$.
\end{lemma}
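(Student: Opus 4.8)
I would prove Lemma~\ref{lm:grphs} by induction on $n$, the order of $G$, following exactly the recursive structure of Definition~\ref{def}. The base case $n=1$ asks that the box for $K_1$ denote $(\ket{K_1},0,1_1)$; since $\ket{K_1}=\ket{+}$ (a single isolated vertex, stabilized by $X$), this is just the interpretation of a single green spider with no phase connected to one output, which is immediate from the definitions. For the inductive step, I would assume the claim for all graphs of order $<n$ and consider a graph $G$ of order $n$ with first vertex $u$. The right-hand side of the second clause of Definition~\ref{def} builds the box for $G$ from the box for $G\backslash u$ (order $n-1$, so the induction hypothesis applies), a green spider for $u$, a Hadamard-decorated connection matrix encoding the neighbourhood $\mathbf N_u$, and the permutation $\tau$ that reorders the outputs so that $\mathbf N_u$ comes first.

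\textbf{Key steps.} First I would fix the characterization of $\ket G$ by its stabilizer, as in the proof of Lemma~\ref{lm:bgrphs}: $\ket G$ is the unique state (up to scalar) fixed by $X_w Z_{\mathbf N_w}$ for every vertex $w$, normalized by $\langle 0^n\ket G = 1/\sqrt{n}$ (equivalently $\sqrt{2^n}\langle 0^n\ket G=1$). Then I would compute $\interp{\tikzfig{ggsdef1}}_s$: unfolding the composition, the diagram applies the box for $G\backslash u$ (which by induction denotes $\ket{G\backslash u}$ on the appropriate register, up to the permutation $\tau$), adjoins a fresh $\ket +$ for $u$, and connects $u$ to its neighbours via Hadamards. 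Two things must be checked. \emph{(i)} The new state is fixed by $X_u Z_{\mathbf N_u}$: this follows because $\ket +$ is an $X$-eigenstate and the green spider for $u$ copies through the Hadamard edges, pushing $Z$'s onto exactly the neighbours of $u$ — this is where the distribution rules (\hyperref[Z]{Z}), (\hyperref[h]{h}) and the matrix-pushing Lemma~\ref{lm:pivect} do the work, essentially repeating the local computation from the proof of Lemma~\ref{lm:bgrphs}. \emph{(ii)} For every other vertex $w\ne u$, the new state is still fixed by $X_w Z_{\mathbf N_w(G)}$: the stabilizer $X_w Z_{\mathbf N_w(G\backslash u)}$ of the sub-box gets corrected by the edge between $u$ and $w$ exactly when $w\in\mathbf N_u$, and the Hadamard edge converts the extra $X$ on $w$'s side into a $Z$ on $u$, which is precisely the extra neighbour that distinguishes $\mathbf N_w(G)$ from $\mathbf N_w(G\backslash u)$. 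Finally I would track the scalar: by induction $\sqrt{2^{n-1}}\langle 0^{n-1}\ket{G\backslash u}=1$, and the added $\ket +$ on $u$ together with the Hadamard edges contributes exactly a factor $1/\sqrt 2$ to $\langle 0^n | \cdot\rangle$, giving $\sqrt{2^n}\langle 0^n\ket G=1$ as required. The permutation $\tau$ is innocuous because, by Theorem~\ref{thm:sem} (Only Topology Matters), swaps of any size can be moved freely, so the reordering of outputs does not affect the denoted state beyond relabelling.

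\textbf{Main obstacle.} The delicate point is the bookkeeping in step~\emph{(ii)}: one has to verify that \emph{all} the stabilizer generators inherited from $G\backslash u$ are correctly updated when vertex $u$ and its edges are added, i.e.\ that the diagrammatic surgery realizes precisely the neighbourhood $\mathbf N_w(G)=\mathbf N_w(G\backslash u)\cup(\{u\}\cap\mathbf N_w)$ for every $w$, with no spurious $Z$'s elsewhere. This amounts to chasing green-spider copying through a block of Hadamard edges whose incidence is governed by the column of the biadjacency matrix indexed by $u$, and is exactly the kind of computation that Lemma~\ref{lm:pivect} (pushing $\pi$-spiders through matrices) and Lemma~\ref{lm:vcopyerase} (copy/erase) are designed to handle; still, keeping the indices straight across the permutation $\tau$ is where care is needed. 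Everything else is routine once the base case and the single inductive surgery are in place.
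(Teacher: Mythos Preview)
Your proposal is correct and follows essentially the same approach as the paper: induction along the recursive Definition~\ref{def}, verification of each stabilizer generator $X_vZ_{\mathbf N_v}$ after peeling off the first vertex $u$, and a separate inductive computation for the scalar normalisation. The only cosmetic difference is that the paper splits your step~\emph{(ii)} into two explicit subcases, $v\in\mathbf N_u$ versus $v\in S_u$, and carries out each as a concrete diagrammatic rewrite (using the rules you name, in particular \hyperref[O]{O}, \hyperref[N]{N}, \hyperref[G]{G}), rather than treating them uniformly in prose.
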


\noindent The proof is in the appendix at page \pageref{pr:grphs}. 

We will now use the \szx-calculus to show the property known as local complementation. Given a vertex $u$ of a graph $G$ the local complementation of $G$ according to $u$ is the graph $G\star u$ which is $G$ where all edges between neighbours of $u$ have been complemented, that is, edges became non-edges and non-edges became edges.

\begin{theorem}\label{thm:locomp} For any graph $G$ and vertex $u$,  
	$\szx \vdash \tikzfig{locomp0}=\tikzfig{locomp1}$.
\end{theorem}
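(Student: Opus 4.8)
The plan is to prove the local complementation property by induction on the number of vertices of $G$, using the inductive definition of graph state boxes from Definition~\ref{def}. The base case ($G=K_1$) is trivial since local complementation at the only vertex does nothing. For the inductive step, I would split into two cases according to whether the vertex $u$ at which we complement is the first vertex $v$ in the chosen ordering, or some later vertex; by re-ordering (using that swaps behave naturally, a consequence of Theorem~\ref{thm:sem}) it suffices to treat a convenient case, but the genuinely informative case is when $u$ is the first vertex, so I would set up the ordering so that $u=v$ is first.

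The core of the argument is a local computation. Unfolding $\tikzfig{ggsdef1}$, the first vertex $u$ contributes a green spider feeding, through a matrix encoding the neighbourhood $\mathbf{N}_u$ and the divider/gatherer machinery, into the graph state box of $G\backslash u$. The key classical fact is that $(G\star u)\backslash u$ differs from $G\backslash u$ by exactly the complete graph (mod $2$, i.e.\ symmetric difference) on $\mathbf{N}_u$, whose biadjacency/adjacency contribution is the outer product $\mathbf{N}_u\mathbf{N}_u^t$ with the diagonal removed. So the plan is: (i) push a Hadamard-like local operation (coming from the definition of $G\star u$ on graph states, namely $e^{-i\frac\pi4 X_u}\prod_{w\in\mathbf N_u}e^{i\frac\pi4 Z_w}$) onto the diagram; (ii) use Lemma~\ref{lm:pivect} (rules N, O) to commute the $\pi/2$-type phases through the matrix generator, turning them into phase gates on the wires of $G\backslash u$ together with a controlled-$Z$ pattern; (iii) use the matrix-copy/erase lemmas (Lemma~\ref{lm:vcopyerase}, Lemma~\ref{lm:rcocopycoerase}) and the addition/multiplication lemmas (Lemma~\ref{lm:addmul}) to absorb the extra $\mathbf N_u\mathbf N_u^t$ term into the matrix of the recursive box, thereby realising $(G\star u)\backslash u$; (iv) apply the induction hypothesis to the box on $G\backslash u$ if further complementation is needed, and fold the definition back up to obtain $\tikzfig{ggsdef1}$ for $G\star u$.

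The main obstacle I anticipate is bookkeeping the scalars and, more seriously, handling the interaction between the local operation at $u$ and the \emph{recursive} structure: the box for $G\backslash u$ already has $u$'s neighbours threaded through a matrix, and local complementation changes the edges \emph{inside} $\mathbf N_u$, which in the inductive encoding are buried one level down. Making the $\mathbf N_u\mathbf N_u^t$ correction term actually land on the right wires will require carefully moving it past the divider/gatherer using the distribution rules (Z, X, W) and the rewiring theorem (Theorem~\ref{thm:rewire}, rule R), and then recognising the resulting matrix as the adjacency matrix of $(G\star u)\backslash u$ via Lemma~\ref{lm:addmul}. A secondary subtlety is that local complementation on $\ket G$ is only known to hold up to a global phase / specific scalar, so one must track the normalisation condition $\langle 0^n\ket G = 1/\sqrt n$ (or rather its diagrammatic avatar) throughout, using rules like \textup{s2} and \hyperref[G]{G} as in the proof of Lemma~\ref{lm:bgrphs}. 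Apart from that, each individual rewrite is routine given the matrix calculus already developed, so the proof should go through by a disciplined induction.
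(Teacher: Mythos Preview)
Your proposal has two genuine gaps, one structural and one technical, and the paper's proof avoids both by a different induction scheme.

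\textbf{Structural gap.} You propose to make $u$ the first vertex in the ordering, unfold once, and then ``absorb the extra $\mathbf N_u\mathbf N_u^t$ term into the matrix of the recursive box, thereby realising $(G\star u)\backslash u$''. But the inductive definition (Definition~\ref{def}) does not expose the adjacency matrix of $G\backslash u$ as a single matrix generator; it is itself a recursively defined box. There is no matrix sitting at the interface to which you could add $\mathbf N_u\mathbf N_u^t$ using Lemma~\ref{lm:addmul}. You acknowledge that the edges inside $\mathbf N_u$ are ``buried one level down'', but then treat this as bookkeeping; it is not. Moreover, your step (iv) ``apply the induction hypothesis to the box on $G\backslash u$ if further complementation is needed'' cannot fire: once $u$ is removed, there is no vertex of $G\backslash u$ at which a local complementation yields $(G\star u)\backslash u$, so the hypothesis, which is about local complementation at a vertex, is inapplicable. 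The paper instead removes a \emph{neighbour} $v\in\mathbf N_u$ first. Then $u$ is still present in $G\backslash v$, and the induction hypothesis gives precisely $(G\backslash v)\star u=(G\star u)\backslash v$. All that remains is to re-attach $v$, and the paper does this by extracting $u$ from $(G\star u)\backslash v$ and recomposing the two outermost gadgets.

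\textbf{Technical gap.} You plan to push the local Clifford $e^{-i\frac\pi4 X_u}\prod_{w\in\mathbf N_u}e^{i\frac\pi4 Z_w}$ through the matrix generator using Lemma~\ref{lm:pivect} (rules \hyperref[N]{N},\hyperref[O]{O}). That lemma is stated and proved only for phases that are multiples of $\pi$; it does not commute $\pi/2$ phases through a matrix, and indeed doing so would in general produce entangling gates rather than a mere relabelling of phases. The paper supplies the missing ingredient explicitly: it invokes the \emph{triangle lemma} (Lemma~\ref{lm:trig}, imported from \cite{duncan2009graph}), which is exactly the ZX identity that turns the $\pm\pi/2$ phases at $u$ and $v$ into the bialgebra/CZ pattern needed. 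Without this lemma (or an equivalent Euler-decomposition computation) your step (ii) does not go through.

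In short: peel off a neighbour of $u$ rather than $u$ itself so that the induction hypothesis applies, and replace the appeal to Lemma~\ref{lm:pivect} by the triangle lemma to handle the $\pi/2$ phases.
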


\noindent The proof is in the appendix at page \pageref{pr:locomp}.

\subsection{Application to error correcting codes}

The original motivation for the development of a scalable ZX-calculus was the design of tripartite Coherent Parity Checking (CPC) error correcting codes \cite{chancellor2016graphical}. We reformulate here in the \szx-calculus  the definition of those codes and the proof of some elementary properties.

The idea is to spread the information of some logical qubits over a bigger number of physical qubits. In our example the code is parametrized by three matrices $B\in \mathbb{F}_2^{a \times b}$, $P\in \mathbb{F}_2^{c \times b}$ and $C\in \mathbb{F}_2^{c \times a}$. The aim is to encode $b$ logical qubits into $a{+}b{+}c$ physical ones.

\begin{definition}
	The tripartite CPC encoder $E:1_b\to 1_a{+}1_b{+}1_c$ and decoder $D :1_a{+}1_b{+}1_c \to 1_b$ defined by the matrices $B\in \mathbb{F}_2^{a \times b}$, $P\in \mathbb{F}_2^{c \times b}$ and $C\in \mathbb{F}_2^{c \times a}$ are:
	\begin{center}
		\begin{tabular}{ccc}
			$E\coloneqq\tikzfig{cpcdf}$&$\quad\quad$&$D\coloneqq\tikzfig{ddcpcdf}$\\[-0.2cm]
		\end{tabular}\\
	\end{center}
\end{definition}

We can prove that the code is correct when there are no errors, in other words:

\begin{lemma}\label{lm:iso}
	The encoder is an isometry that is $\szx\vdash D\circ E=\tikzfig{bid}$.
\end{lemma}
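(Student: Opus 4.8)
The goal is to show $\szx\vdash D\circ E = \tikzfig{bid}$, i.e. that composing the CPC encoder with the decoder collapses to the identity wire on $1_b$. The natural strategy is to work entirely inside the \szx-calculus, never passing to semantics, and to reduce $D\circ E$ step by step using the matrix rules of Figure \ref{fig:matrix}, the derived matrix equations (Lemmas \ref{lm:vcopyerase}, \ref{lm:rcocopycoerase}, \ref{lm:addmul}, \ref{lm:pivect}), and the core \szx-axioms. First I would stack the two diagrams and use the \emph{Only Topology Matters} paradigm to slide dividers/gatherers and swaps into convenient positions, so that each of the three ``register blocks'' ($a$, $b$, $c$) of the encoder lines up with the corresponding block of the decoder. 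The key structural observation is that the decoder is essentially the encoder read upside-down with colours swapped, so the composite should telescope: the $c$-block (parity checks $P$ and $C$) should cancel first, then the $a$-block (the matrix $B$ and $C$), leaving a bare identity on the $b$ wires.

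The main engine of the computation will be the interaction of the matrix generators with spiders. Concretely, I expect to use Lemma \ref{lm:vcopyerase} (green spiders copy/erase matrices) and Lemma \ref{lm:rcocopycoerase} (red spiders cocopy/coerase matrices) to push the $P$, $B$, $C$ matrices across the green/red nodes that sit between encoder and decoder; each such push either duplicates a matrix onto two wires, which then meet a cap/cup and get transposed-and-contracted via Lemma \ref{lm:had} and rule \eqref{A}/\eqref{U}, or erases it against a disconnected spider. The matrix arithmetic lemma (Lemma \ref{lm:addmul}, rules \eqref{p} and \eqref{m}) is what lets the contributions coming from the two ``legs'' of a parity check combine: for instance a term $C\cdot B$ produced on one path should meet a $P$ produced on another and the sum $P + CB$ (or similar) should appear — and here the precise definitions of $E$ and $D$ are rigged so that the matrices that would survive actually cancel (a matrix added to itself is $0$ over $\mathbb F_2$, giving an all-zero matrix which is erased by rule \eqref{0}). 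Along the way the Hopf-algebra rules \eqref{c} (copy) and \eqref{b} (bialgebra) will be needed to handle the branching of wires through the green/red structure, and \eqref{s2}, \eqref{E}, \eqref{P} to clean up the residual scalars and reassociate dividers.

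The hard part will be bookkeeping: tracking exactly which matrix lands on which wire after each copy/erase/bialgebra move, and checking that the surviving matrix expressions are genuinely the zero matrix (so that \eqref{0} applies) rather than merely ``expected to vanish''. This is where the specific block structure of $E$ and $D$ — and in particular the placement of $B$, $P$, $C$ relative to the dividers — does the real work, and a careful diagram-by-diagram rewrite (best presented as a chain of labelled equalities, as in the proof of Lemma \ref{lm:bgrphs}) is unavoidable. A secondary, purely cosmetic obstacle is the scalar: since several of the matrix lemmas hold only up to scalars and rules like \eqref{b} introduce $\sqrt 2$ factors, one must verify that all scalar factors ultimately multiply to $1$; I would handle this by a separate count at the end, using \eqref{s2} and \eqref{P} to absorb the leftover scalar diagrams, exactly as the scalar is dispatched at the end of the proof of Lemma \ref{lm:bgrphs}. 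Modulo this bookkeeping, the proof is a routine — if lengthy — induction-free graphical calculation, and no new principle beyond the lemmas already established in the previous section is required.
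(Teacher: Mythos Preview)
Your core strategy matches the paper's: the argument is driven by three applications of the matrix-addition rule $\textup{p}$ (Lemma~\ref{lm:addmul}), so that each of $B$, $P$, $C$ from the encoder meets its twin from the decoder and collapses to zero over $\mathbb F_2$. The paper's actual proof is however much shorter than your outline --- the full chain is just $\textup{p},\textup{p},\textup{p},\textup{s1},\textup{I1}/\textup{S1},\textup{s2}$ --- and most of the machinery you list is never needed. There are no phases anywhere in $E$ or $D$, so Lemma~\ref{lm:pivect} (rules $\textup{N},\textup{O}$) is simply irrelevant; the copy/erase lemmas ($\textup{K},\textup{G},\textup{J},\textup{F}$), the bialgebra rule $\textup{b}$, the copy rule $\textup{c}$, and the multiplication rule $\textup{m}$ do not appear; and no mixed expressions like $P+CB$ ever arise --- each matrix cancels cleanly against itself.

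The one place your plan genuinely diverges is the cleanup after the three $\textup{p}$-steps: you intend to disconnect the resulting zero matrices via the zero-matrix axiom and then absorb the dangling pieces by spider fusion, whereas the paper performs one spider fusion $\textup{s1}$ and then invokes the injective/surjective rules $\textup{I1}$ and $\textup{S1}$ (Lemmas~\ref{lm:inj} and~\ref{lm:surj}) to collapse what remains on the ancilla registers, finishing with a single $\textup{s2}$ for the scalar. Both routes reach the identity; the paper's is terser, and your anticipated scalar bookkeeping turns out to be trivial.
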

\noindent The proof is in appendix at page \pageref{pr:iso}.

We now end by showing what happens when errors go through the decoder: $x$, $y$ and $z$ (resp. $x'$, $y'$, $z'$) are vectors of phase flip errors (resp. bit flip errors). The implementation of the decoder involves some  measurements, which according to Lemma \ref{lm:err}, produce some syndromes ($|x| =\sum_ix_i \bmod 2$, $z{+}Cx {+} Py$, $x'{\!+}y'{\!+}C^t z' {\!+} BP^t z'$, and $|z'|$) which guide us to correct the middle wire. Of course the exact protocol and its efficiency depend on clever choices of $B$, $P$ and $C$, see \cite{chancellor2016graphical} for details.

\begin{lemma}\label{lm:err}
	The following equalities hold in the \szx-calculus:
	\vspace{0.2cm}
	
		\centerline{	$\scalebox{0.91}{\tikzfig{verdec0}}\!\!=\!\!\scalebox{0.91}{\tikzfig{verdec1}}\qquad~~$
			$\scalebox{0.91}{\tikzfig{rerdec0}}\!\!=\!\!\scalebox{0.91}{\tikzfig{rerdec1}}$}

\end{lemma}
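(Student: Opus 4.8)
The plan is to establish the two equalities in parallel, the first concerning phase-flip errors and the second bit-flip errors, and to deduce the second from the first by Hadamard conjugation. The left-hand equality states how phase flips, represented as green $\pi$-labelled spiders on the three input registers of the decoder $D$ defined above, are transported through $D$; by Lemma~\ref{lm:had} together with rule~$\textup{H'}$, conjugating $D$ by Hadamards colour-swaps every spider, transposes every matrix and reverses its orientation, so the bit-flip identity follows from the phase-flip one after this translation --- which is exactly why the syndromes on the right involve $C^t$, $BP^t$ and $|z'|$ rather than $C$, $P$ and $|x|$. So I would first prove the left equality and then invoke this duality.

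For the left equality I would start from the decoder diagram with the green $\pi$-vectors $x$, $y$ and $z$ glued to its inputs and perform a straight-line rewriting. The engine is Lemma~\ref{lm:pivect}: rules~$\textup{N}$ and~$\textup{O}$ let each $\pi$-labelled green spider be pushed through the matrix generators occurring inside $D$, and each crossing of a matrix $M$ turns a phase vector $v$ into $Mv$ on the far side. Phase vectors that collide on a wire are merged by spider fusion~($\textup{s1}$) and their vectors added by rule~$\textup{p}$ of Lemma~\ref{lm:addmul}; successive crossings are composed by rule~$\textup{m}$; and at the points where $D$ forks a wire the phase is duplicated or discarded using Lemma~\ref{lm:vcopyerase} (rules~$\textup{K}$ and~$\textup{G}$). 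All rearrangements of swaps and wires are free by the \textit{Only Topology Matters} paradigm of Theorem~\ref{thm:sem}. Following the internal wiring of $D$ --- which links the $a$-register to the $c$-register through $C$ and the logical register to the $c$-register through $P$ --- one reads off that the phase vector $z+Cx+Py$ accumulates on the leg feeding the $c$-register measurement.

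It then remains to absorb the $\pi$-phases that reach the measurement spiders of $D$, that is the green and red spiders with no output leg. A $\pi$-phase arriving at such a spider collapses, via the copy rule~$\textup{c}$ and rule~$\textup{s2}$, to a global scalar governed by the parity of the accumulated vector; this is where the syndrome bit $|x|$ comes from (and, in the dual computation, $|z'|$). Collecting the surviving phases and these scalars gives precisely the right-hand diagram, and the Hadamard translation then produces the bit-flip identity, with $C^t z'$ and $BP^t z'$ arising as the transposed and composed matrix actions.

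The main obstacle will be bookkeeping rather than any single hard step. Since $D$ is essentially the encoder run backwards, its matrix boxes appear in reversed orientation and several of them as backward matrices, so one must pick the correct variant ($\textup{N}$ or $\textup{O}$) at each crossing and keep the resulting transposes straight; in particular, getting the composite $BP^t$ (respectively $C^t$) to land in the right summand of the bit-flip syndrome requires threading a vector through two matrices in the right order and recognising the product via rule~$\textup{m}$. The other fiddly point is to reduce the measurement spiders to honest scalars --- using rules~$\textup{c}$ and~$\textup{s2}$ --- rather than leaving them as unsimplified subdiagrams; this is routine once the propagation has been carried out.
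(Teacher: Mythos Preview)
Your core mechanism for the first equality is the same as the paper's: push the $\pi$-vectors through the matrix boxes via Lemma~\ref{lm:pivect} and merge what lands on a common wire with spider fusion. But the paper's derivation is considerably leaner than what you sketch. It uses only $\textup{s1}$ and $\textup{N}$ for the phase-flip identity and only $\textup{s1}$ and $\textup{O}$ for the bit-flip identity; there is no call to $\textup{p}$, $\textup{m}$, $\textup{K}$, $\textup{G}$, $\textup{c}$ or $\textup{s2}$. The addition of phase vectors happens automatically when two green (resp.\ red) $\pi$-spiders fuse, and the syndrome bit $|x|$ arises from a single application of $\textup{N}$ rather than from a separate scalar collapse. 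So your bookkeeping worries about products via $\textup{m}$ and parity via $\textup{c}$/$\textup{s2}$ are unnecessary.

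The genuine gap is your plan to obtain the bit-flip equality by Hadamard-conjugating the phase-flip one. Conjugating the whole equation by Hadamards does turn the green $\pi$-errors into red ones, but it simultaneously replaces the decoder $D$ by its Hadamard conjugate: every spider in $D$ is colour-swapped and every matrix in $D$ is transposed and reversed. The resulting equation is a statement about this \emph{different} diagram, not about $D$, and $D$ is not self-dual under this operation (the matrices $B$, $P$, $C$ have no reason to equal their transposes, and the green/red pattern inside $D$ is asymmetric). The appearance of $C^t$ and $BP^t$ in the second syndrome is therefore not evidence that the two equalities are Hadamard-dual; rather, the second equality is about red errors propagating through the \emph{same} $D$, and the transposes arise because a red $\pi$-vector crossing a forward matrix picks up the transpose (rule~$\textup{O}$), not because $D$ itself has been conjugated. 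The paper accordingly proves the bit-flip identity directly, by the parallel rewriting $\textup{s1}$, $\textup{O}$, $\textup{s1}$, $\textup{O}$, $\textup{s1}$, and you should do the same.
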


\vspace{-0.1cm}
\noindent The proof is in appendix at page \pageref{pr:err}.

\vspace{-0.1cm}

\section{Conclusion  and further work}
\vspace{-0.1cm}

We have introduced the \szx-calculus, a formal and compact graphical language for quantum reasoning, and proved its universality, soundness, and completeness. This work is addressing two main objectives.
First, to demonstrate that some of the ingredients for scalability which were sketched out in \cite{chancellor2016graphical} -- like the thick wires and the use of matrices -- together with some new ingredients  -- like the divider and the gatherer -- can be axiomatised to provide a complete scalable graphical language.  
Our second objective was to provide a sufficiently precise definition of the language to consider an implementation in a graphical proof assistant like Quantomatic. This last point would pave the way towards the formal verification of large scale quantum protocols and algorithms.

We aim to provide a language ready for applications and available to most of the quantum computing community. For this reason, we have deliberately avoided a categorical presentation. % which would have been counter productive. 
A fully categorical description of the scalable construction will be the subject of further work. We nevertheless provide here a sketch of how our construction can be generalised in a categorical setting. Graphical languages can be defined as props, see \cite{baez2017props} and \cite{zanasi2018interacting}, that is symmetric strict monoidal categories whose set of objects is freely generated by one object we denote $1$. In fact it is possible to define a scalable construction for any coloured prop. Given a set $C$  of colours we can define two $\langle\langle C \rangle-\varepsilon\rangle$-coloured props $\mathbb{D}_C$ and $\mathbb{G}_C$ whose objects are formal sums of $1_{n_c}$ and morphisms are respectively generated by dividers and gatherers for each pair $(n,c)$. The elimination rule is a distribution rule as in \cite{lack2004composing}, which allows us to define the composed prop $\mathbb{D}_C ;\mathbb{G}_C$. The prop of wires $\mathbb{W}_C$ is then defined as this composition quotiented by the expansion rule. This last pro satisfies a rewire theorem similar to Theorem \ref{thm:rewire}. Then given a $C$-coloured prop $\mathbf{P}$, we define the $\langle\langle C \rangle-\varepsilon\rangle$-coloured prop $\overline{\mathbf{P}}$ which has the same generators and equations as those of $\mathbf{P}$ on wires of size $1$. Finally the scalable prop $\mathcal{S}\mathbf{P}$ is defined as the composition of prop $\mathbb{D}_C ;\overline{\mathbf{P}} ;\mathbb{G}_C$ quotiented by the expansion rule. The corresponding distribution rules follows the same pattern as in \ref{dist}. Such a generalization gives scalable versions of any graphical language based on props such as the $ZW$--calculus \cite{hadzihasanovic2017algebra}, the $ZH$-calculus \cite{backens2018zh} or $\mathbb{IH}$ \cite{bonchi2017interacting}.

%\bibliography{scal}

\appendix

\section{Appendix}

\begin{proof}[\textbf{Proof of Theorem \ref{thm:rewire}}]\phantomsection\label{pr:rewire}

	The proof goes by first defining expanded forms and showing that any diagram can be put into such a form. Then we show that any diagram of type $\sum\limits_i 1_{n_i} \to \sum\limits_j 1_{m_j}$ is equal to the same expanded diagram.
	
	The situation of a wire is defined as a pair of elements in the set $\{i,o,d,g\}$, where $i$ stands for input, $o$ for output, $d$ for divider, and $g$ for gatherer. For example, a wire which links an input to an output has situation $(i,o)$ and a wire linking a gatherer to a divider has situation $(g,d)$. We only consider non-identity dividers and gatherers. The possible situations for a wire of size $1$ are : $(i,o)$, $(i,g)$, $(d,o)$, and $(d,g)$. The possible situations for a big wire are the same additioned to $(i,d)$, $(d,d)$, $(g,o)$, $(g,g)$ and $(g,d)$.
	
	We say that a diagram is expanded when all the big wires are in situation $(i,d)$, $(d,d)$, $(g,o)$ or $(g,g)$.
	
	Any diagram can be rewritten into an expanded one:
	
	When a big wire is in situation $(i,o)$, $(i,g)$, $(d,o)$ or $(d,g)$, then the expansion rule can be applied decreasing strictly the size of the wires in a bad situation. Applying it recursively we end up with a diagram where no big wires are in situation $(i,o)$, $(i,g)$, $(d,o)$ or $(d,g)$.
	
	If a big wire is in situation $(g,d)$, then the elimination rule applies and strictly decreases the size of the wires. So we can apply it until there are no big wires in situation $(g,d)$.
	
	Finally we obtain a diagram in expanded form.
	
	We define inductively $d_n:1_{n}\to n_{1}$ by $d_{0}=I$, and $d_n=\left(id_1 \otimes d_{n-1}\right)\circ divide_{n}$. $d_n$ is a sequence of $n-1$ dividers of decreasing size.
	By symmetry we also define inductively $g_{m}:m_{1}\to 1_{m}$ by $d_{0}=I$, and $g_m=gather_{m}\circ \left(id_1 \otimes g_{m-1}\right)$, which is a sequence of $m-1$ gatherers of increasing size.
	
	Given a diagram $\omega:\sum\limits_i 1_{n_i} \to \sum\limits_j 1_{n_j}$, we put it into expanded form. We see that by sliding the gatherers and dividers along the wires we obtain an expanded diagram of the form $\Gamma\circ\Delta$ where $\Delta$ contains no gatherers and $\Gamma$ contains no dividers. Let's suppose that there is a big wire linking $\Delta$ and $\Gamma$. Then it would either be in situation $(i,o)$ or $(d,g)$ but such situations are not possible in an expanded diagram. We conclude that $\Delta$ is of type $\sum\limits_i 1_{n_i} \to \left(\sum\limits_i n_i\right)$ and $\Gamma$ of type $\left(\sum\limits_i n_i\right)\to \sum\limits_j 1_{n_j}$. In $\Delta$, the only allowed situations for a small wire are $(i,o)$ and $(d,o)$, and for a big wire they are $(i,d)$ and $(d,d)$. This enforces a unique structure: $\Delta=\bigotimes\limits_{i}d_{n_i}$, and by symmetry $\Gamma=\bigotimes\limits_{j}g_{m_j}$. Finally $\omega=\left(\bigotimes\limits_{j}g_{m_j}\right)\circ \left(\bigotimes\limits_{i}d_{n_i}\right)$.
	
\end{proof}

\begin{proof}[\textbf{Proof of Theorem \ref{thm:univ}}]\phantomsection\label{pr:univ}
	Let $\left(M,a,b\right)$ be a valid triple. Since the ZX-calculus is universal for matrices we can take a ZX diagram with interpretation $M$, and embed it in the \szx-calculus with wire of size $1$. This gives a diagram with interpretation $\left(M,S(a),S(b)\right)$. Using the notation of \ref{pr:rewire}, we then precompose by $\bigotimes\limits_{i}d_{m_i}$ and post compose by $\bigotimes\limits_{j}g_{n_j}$ where $\bigotimes\limits_{i}d_{m_i}\coloneqq a$ and $\bigotimes\limits_{j}g_{n_j}\coloneqq b$. This provides a diagram with interpretation $\left(M,a,b\right)$.
	
\end{proof}

\begin{proof}[\textbf{Proof of Lemma \ref{lm:srule}}]\phantomsection\label{pr:srule}
	Starting from a rule with generators of size $n$, the idea is to use the dividers to obtain $n$ copies of the rule, then applying it $n$ times we can come back to generators of size $n$. We only do the proof for the copy rule; the other rules follow the same pattern.
	
	We want to prove: $\tikzfig{nrcopy0}=\tikzfig{nrcopy1}$.
	
	By induction on $n$. If $n=1$ this the usual copy rule of the ZX-calculus. If $n>1$:
	
	\begin{center}
		\setlength\tabcolsep{1pt}
		\begin{tabular}{cccccc}

			\tikzfig{ncopyproof0}&$\stackrel{\hyperref[P]{P}}{=}$&\tikzfig{ncopyproof1}&$\stackrel{\hyperref[Z]{\textsf{Z}}}{=}$&\tikzfig{ncopyproof2}&$\stackrel{\hyperref[X]{X}}{=}$\\
			\tikzfig{ncopyproof3}&$\stackrel{\hyperref[c]{c}}{=}$&\tikzfig{ncopyproof4}&$\stackrel{IH}{=}$&\tikzfig{ncopyproof5}&$\stackrel{\hyperref[X]{X}}{=}$\\

			\tikzfig{ncopyproof6}&&&&&
		\end{tabular}
	\end{center}
	
\end{proof}

\begin{proof}[\textbf{Proof of Theorem \ref{thm:sem}}]\phantomsection\label{pr:sem}
	
	We extend the definition of expanded form \ref{pr:sem} to the \szx-calculus. We add new situations by considering $s$ the generators of size $1$ and $S$ the generators of size strictly bigger than $1$. The new possible situations for a small wire are $(i,s)$, $(d,s)$, $(s,s)$, $(s,g)$ and $(s,o)$. For a big wire the new situations are $(i,S)$, $(d,S)$, $(g,S)$, $(S,S)$, $(S,g)$, $(S,d)$ and $(S,o)$.
	
	A diagram of \szx$~$is in expanded form iff the only possible situations for big wires are $(i,d)$, $(d,d)$, $(g,o)$ or $(g,g)$.
	
	For each of the new situations for big wires, applying expansion and then a distribution rule decreases strictly the size of the wire involved in problematic situations. By induction we end up with no big wires in those situations.
	
	Considering the possible situations, any diagram in expanded form can be written $\Gamma_b \circ D\circ \Delta_a$ where $\Gamma_b$ and $\Delta_a$ are diagrams in $\mathbb{W}$ only depending on $a$ and $b$. $D$ is a diagram of the usual (size $1$) ZX-calculus. So given two diagrams of the \szx-calculus with the same type $a\to b$ they can be expanded. Then we see that they are equal iff their ZX parts are equal, which by completeness of the ZX calculus is true iff their interpretations are the same.
	
\end{proof}

\begin{proof}[\textbf{Proof of Lemma \ref{lm:caxiom}}]\phantomsection\label{pr:caxiom}
	The rules of ZX-calculus follow directly by specializing the big rules with $n=1$. It only remains to show the missing distribution rules. We start by showing skew version of the distribution rules for cups and caps which are equivalent when the expansion axiom is available. 
	
	The cup is given by:
	\begin{center}
		$\tikzfig{dcupp0}\stackrel{\hyperref[s1]{s1},\hyperref[w1]{w1}}{=}\tikzfig{dcupp1}\stackrel{\hyperref[Z]{Z}}{=}\tikzfig{dcupp2}\stackrel{\hyperref[Z]{Z}}{=}\tikzfig{dcupp3}\stackrel{\hyperref[s1]{s1},\hyperref[w1]{w1}}{=}\tikzfig{dcupp4}$.
	\end{center}
	
	From this and expansion we can derive the elimination rule:
	\begin{center}
		$\tikzfig{delim0}\stackrel{\hyperref[top]{\textup{top}}}{=}\tikzfig{delim1}=\tikzfig{delim2}\stackrel{\hyperref[P]{P}}{=}\tikzfig{delim3}=\tikzfig{delim4}\stackrel{\hyperref[top]{\textup{top}}}{=}\tikzfig{delim5}$.
	\end{center}
	
	Then we derive the distribution law for the cap:
	\begin{center}
		$\tikzfig{dcapp0}\stackrel{\hyperref[top]{\textup{top}}}{=}\tikzfig{dcapp1}=\tikzfig{dcapp2}\stackrel{\hyperref[P]{P}}{=}\tikzfig{dcapp3}\stackrel{\hyperref[E]{E}}{=}\tikzfig{dcapp4}$.
	\end{center}

	We have already recovered all the behaviour of dividers, gatherers and wires. The general distribution rule for green spiders follows from the unary and ternary one using the green spider rule. And finally the rules concerning red spiders follow by applying Hadamard gates on the rules concerning green spiders. 
	
\end{proof}

\begin{proof}[\textbf{Proof of Lemma \ref{lm:vcopyerase}}]\phantomsection\label{pr:vcopyerase}
	
	We start with the copy. By induction on the size $m\times n$ of $A$. If $n=m=1$ this directly follows from the rules of ZX-calculus. If $n>1$:
	
	\begin{center}
		\setlength\tabcolsep{1pt}
		\begin{tabular}{cccccc}
			\tikzfig{copyproofm0}&$\stackrel{\hyperref[P]{P}}{=}$&\tikzfig{copyproofm1}&$\stackrel{\hyperref[L]{L}}{=}$&\tikzfig{copyproofm2}&$\stackrel{\hyperref[b]{b}}{=}$\\
			\tikzfig{copyproofm3}&$\stackrel{IH}{=}$&\tikzfig{copyproofm4}&$\stackrel{\hyperref[L]{L}}{=}$&\tikzfig{copyproofm5}&$\stackrel{\hyperref[Z]{Z}}{=}$\\
			\tikzfig{copyproofm6}&$\stackrel{\hyperref[P]{P}}{=}$&\tikzfig{copyproofm7}&&&
		\end{tabular}
	\end{center}
	
	If $m>1$:
	
	\begin{center}
		\setlength\tabcolsep{1pt}
		\begin{tabular}{cccccc}
			\tikzfig{copyproofn0}&$\stackrel{\hyperref[P]{P}}{=}$&\tikzfig{copyproofn1}&$\stackrel{\hyperref[C]{C}}{=}$&\tikzfig{copyproofn2}&$\stackrel{\hyperref[Z]{Z}}{=}$\\
			\tikzfig{copyproofn3}&$\stackrel{IH}{=}$&\tikzfig{copyproofn4}&$\stackrel{\hyperref[s1]{s1}}{=}$&\tikzfig{copyproofn5}&$\stackrel{\hyperref[C]{C}}{=}$\\
			\tikzfig{copyproofn6}&$\stackrel{\hyperref[P]{P}}{=}$&\tikzfig{copyproofn7}&&&
		\end{tabular}
	\end{center}
	
	Then we prove erasing.
	By induction on the size $m\times n$ of $A$. If $n=m=1$ this directly follows from the rules of ZX-calculus. If $n>1$:
	
	\begin{center}
		\setlength\tabcolsep{1pt}
		\begin{tabular}{cccccc}
			\tikzfig{eraseproofm0}&$\stackrel{\hyperref[P]{P}}{=}$&\tikzfig{eraseproofm1}&$\stackrel{\hyperref[L]{L}}{=}$&\tikzfig{eraseproofm2}&$\stackrel{\hyperref[c]{c}}{=}$\\
			\tikzfig{eraseproofm3}&$\stackrel{IH}{=}$&\tikzfig{eraseproofm4}&$\stackrel{\hyperref[Z]{Z}}{=}$&\tikzfig{eraseproofm5}&
		\end{tabular}
	\end{center}
	
	If $m>1$:
	
	\begin{center}
		\setlength\tabcolsep{1pt}
		\begin{tabular}{cccccc}
			\tikzfig{eraseproofn0}&$\stackrel{\hyperref[P]{P}}{=}$&\tikzfig{eraseproofn1}&$\stackrel{\hyperref[C]{C}}{=}$&\tikzfig{eraseproofn2}&$\stackrel{\hyperref[Z]{Z}}{=}$\\
			\tikzfig{eraseproofn3}&$\stackrel{IH}{=}$&\tikzfig{eraseproofn4}&$\stackrel{\hyperref[s1]{s1}}{=}$&\tikzfig{eraseproofn5}&
		\end{tabular}
	\end{center}
	
\end{proof}

\begin{proof}[\textbf{Proof of Lemma \ref{lm:had}}]\phantomsection\label{pr:had}
	By induction on the size $m\times n$ of $A$. If $n=m=1$ this directly follows from the rules of ZX-calculus. If $n>1$:
	
	\begin{center}
		\setlength\tabcolsep{1pt}
		\begin{tabular}{cccccc}
			\tikzfig{hadproofm0}&$\stackrel{\hyperref[P]{P}}{=}$&\tikzfig{hadproofm1}&$\stackrel{\hyperref[L]{L}}{=}$&\tikzfig{hadproofm2}&$\stackrel{\hyperref[h]{h}}{=}$\\
			&&&&&\\
			\tikzfig{hadproofm3}&$\stackrel{IH}{=}$&\tikzfig{hadproofm4}&$\stackrel{\hyperref[W]{W}}{=}$&\tikzfig{hadproofm5}&$\stackrel{\hyperref[C]{C}}{=}$\\
						&&&&&\\
			\tikzfig{hadproofm6}&$\stackrel{\hyperref[P]{P}}{=}$&\tikzfig{hadproofm7}&&&
		\end{tabular}
	\end{center}
	
	If $m>1$:
	
	\begin{center}
		\setlength\tabcolsep{1pt}
		\begin{tabular}{cccccc}
			\tikzfig{hadproofn0}&$\stackrel{\hyperref[P]{P}}{=}$&\tikzfig{hadproofn1}&$\stackrel{\hyperref[C]{C}}{=}$&\tikzfig{hadproofn2}&$\stackrel{\hyperref[W]{W}}{=}$\\&&&&&\\
			\tikzfig{hadproofn3}&$\stackrel{IH}{=}$&\tikzfig{hadproofn4}&$\stackrel{\hyperref[h]{h}}{=}$&\tikzfig{hadproofn5}&$\stackrel{\hyperref[L]{L}}{=}$\\&&&&&\\
			\tikzfig{hadproofn6}&$\stackrel{\hyperref[P]{P}}{=}$&\tikzfig{hadproofn7}&&&
		\end{tabular}
	\end{center}
	
\end{proof}

\begin{proof}[\textbf{Proof of Lemma \ref{lm:addmul}}]\phantomsection\label{pr:addmul}
	We start with addition. By induction on the size $m\times n$ of $A$ and $B$.
	If $n=m=1$, this is the Hopf rule of ZX-calculus.
	If $n>1$:
	\begin{center}
		\setlength\tabcolsep{1pt}
		\begin{tabular}{cccccc}
			\tikzfig{addproofm0}&$\stackrel{\hyperref[P]{P}}{=}$&\tikzfig{addproofm1}&$\stackrel{\hyperref[L]{L}}{=}$&\tikzfig{addproofm2}&$\stackrel{\hyperref[s1]{s1}}{=}$\\&&&&&\\
			\tikzfig{addproofm3}&$\stackrel{\hyperref[Z]{Z}}{=}$&\tikzfig{addproofm4}&$\stackrel{\hyperref[top]{\textup{top}}}{=}$&\tikzfig{addproofm5}&$\stackrel{IH}{=}$\\&&&&&\\
			\tikzfig{addproofm6}&$\stackrel{\hyperref[L]{L}}{=}$&\tikzfig{addproofm7}&$\stackrel{\hyperref[P]{P}}{=}$&\tikzfig{addproofm8}&
		\end{tabular}
	\end{center}
	
	If $m>1$:
	
	\begin{center}
		\setlength\tabcolsep{1pt}
		\begin{tabular}{cccccc}
			\tikzfig{addproofn0}&$\stackrel{\hyperref[P]{P}}{=}$&\tikzfig{addproofn1}&$\stackrel{\hyperref[C]{C}}{=}$&\tikzfig{addproofn2}&$\stackrel{\hyperref[X]{X}}{=}$\\&&&&&\\
			\tikzfig{addproofn3}&$\stackrel{\hyperref[s1]{s1}}{=}$&\tikzfig{addproofn4}&$\stackrel{\hyperref[top]{\textup{top}}}{=}$&\tikzfig{addproofn5}&$\stackrel{IH}{=}$\\&&&&&\\
			\tikzfig{addproofn6}&$\stackrel{\hyperref[C]{C}}{=}$&\tikzfig{addproofn7}&$\stackrel{\hyperref[P]{P}}{=}$&\tikzfig{addproofn8}&
		\end{tabular}
	\end{center}
	
	Now we prove multiplication. By induction on the size $b\times a$ of $A$ and $c\times b$ of $B$.
	
	If $a=b=c=1$ this follows from the zero and one axioms and the scalar rule.
	If $a>1$:
	
	\begin{center}
		\setlength\tabcolsep{1pt}
		\begin{tabular}{cccccc}
			\tikzfig{mulproofa0}&$\stackrel{\hyperref[P]{P}}{=}$&\tikzfig{mulproofa1}&$\stackrel{\hyperref[L]{L}}{=}$&\tikzfig{mulproofa2}&$\stackrel{\hyperref[J]{J}}{=}$\\&&&&&\\
			\tikzfig{mulproofa3}&$\stackrel{IH}{=}$&\tikzfig{mulproofa4}&$\stackrel{\hyperref[L]{L}}{=}$&\tikzfig{mulproofa5}&$\stackrel{\hyperref[P]{P}}{=}$\\&&&&&\\
			\tikzfig{mulproofa6}&&&&&
		\end{tabular}
	\end{center}
	
	If $b>1$:
	\begin{center}
		\setlength\tabcolsep{1pt}
		\begin{tabular}{cccccc}
			\tikzfig{mulproofb0}&$\stackrel{\hyperref[P]{P}}{=}$&\tikzfig{mulproofb1}&$\stackrel{\hyperref[C]{C},\hyperref[L]{L}}{=}$&\tikzfig{mulproofb2}&$\stackrel{IH}{=}$\\&&&&&\\
			\tikzfig{mulproofb3}&$\stackrel{\hyperref[p]{p}}{=}$&\tikzfig{mulproofb4}&$=$&\tikzfig{mulproofb5}&
		\end{tabular}
	\end{center}

	If $c>1$:
	\begin{center}
		\setlength\tabcolsep{1pt}
		\begin{tabular}{cccccc}
			\tikzfig{mulproofc0}&$\stackrel{\hyperref[P]{P}}{=}$&\tikzfig{mulproofc1}&$\stackrel{\hyperref[C]{C}}{=}$&\tikzfig{mulproofc2}&$\stackrel{\hyperref[K]{K}}{=}$\\&&&&&\\
			\tikzfig{mulproofc3}&$\stackrel{IH}{=}$&\tikzfig{mulproofc4}&$\stackrel{\hyperref[C]{C}}{=}$&\tikzfig{mulproofc5}&$\stackrel{\hyperref[P]{P}}{=}$\\&&&&&\\
			\tikzfig{mulproofc6}&&&&&
		\end{tabular}
	\end{center}
\end{proof}

\begin{proof}[\textbf{Proof of Lemma \ref{lm:pivect}}]\phantomsection\label{pr:pivect}
	By induction on the size $m\times n$ of $A$. If $n=m=1$ they are two possibilities: if $A=1$ then this is trivial, and if $A=0$ then the $\pi$ is erased by the green node and the equation also true.
	If $m>1$:
	
	\begin{center}
		\setlength\tabcolsep{1pt}
		\begin{tabular}{cccccc}
			\tikzfig{pivectproofn0}&$\stackrel{\hyperref[P]{P}}{=}$&\tikzfig{pivectproofn1}&$\stackrel{\hyperref[C]{C}}{=}$&\tikzfig{pivectproofn2}&$\stackrel{\hyperref[c]{c}}{=}$\\&&&&&\\
			\tikzfig{pivectproofn3}&$\stackrel{IH}{=}$&\tikzfig{pivectproofn4}&$\stackrel{\hyperref[X]{X}}{=}$&\tikzfig{pivectproofn5}&$\stackrel{\hyperref[C]{C}}{=}$\\&&&&&\\
			\tikzfig{pivectproofn6}&$\stackrel{\hyperref[P]{P}}{=}$&\tikzfig{pivectproofn7}&&&
		\end{tabular}
	\end{center}
	
	if $n>1$:
	
	\begin{center}
		\setlength\tabcolsep{1pt}
		\begin{tabular}{cccccc}
			\tikzfig{pivectproofm0}&$\stackrel{\hyperref[P]{P}}{=}$&\tikzfig{pivectproofm1}&$\stackrel{\hyperref[L]{L}}{=}$&\tikzfig{pivectproofm2}&$\stackrel{\hyperref[X]{X}}{=}$\\&&&&&\\
			\tikzfig{pivectproofm3}&$\stackrel{IH}{=}$&\tikzfig{pivectproofm4}&$\stackrel{\hyperref[s1]{s1}}{=}$&\tikzfig{pivectproofm5}&$\stackrel{\hyperref[L]{L}}{=}$\\&&&&&\\
			\tikzfig{pivectproofm6}&$\stackrel{\hyperref[P]{P}}{=}$&\tikzfig{pivectproofm7}&&&
		\end{tabular}
	\end{center}
	
\end{proof}

\begin{proof}[\textbf{Proof of Lemma \ref{lm:inj}}]\phantomsection\label{pr:inj}
	We show this by circular implications:\\
	
	$(1)\Rightarrow (2)$: We use the semantics. $\interp{\tikzfig{23proof0}}_s= \left(\ket{x}\mapsto \bra{Ax}\ket{0}^{\otimes m},1_n,0\right)$. The basis being orthonormal: $\bra{Ax}\ket{0}^{\otimes m}=\delta_{Ax,0}$ but since $A$ is injective $\delta_{Ax,0}=\delta_{x,0}$. Besides $\interp{\tikzfig{23proof2}}_s= \left(\ket{x}\mapsto \bra{x}\ket{0}^{\otimes n},1_n,0\right)$. So $\interp{\tikzfig{23proof0}}_s=\interp{\tikzfig{23proof2}}_s$ and by completness: $\tikzfig{23proof0}=\tikzfig{23proof2}$.\\
	
	$(2)\Rightarrow (3)$: \\
	$\tikzfig{34proof0}\stackrel{\hyperref[s1]{s1}}{=}\tikzfig{34proof1}\stackrel{\hyperref[J]{J}}{=}\tikzfig{34proof2}\stackrel{\hyperref[F]{F}}{=}\tikzfig{34proof3}\stackrel{\hyperref[s1]{s1}}{=}\tikzfig{34proof4}$.\\
	
	$(4)\Rightarrow (5)$:\\
	$\tikzfig{45proof0}\stackrel{\hyperref[K]{K}}{=}\tikzfig{45proof1}=\tikzfig{45proof2}$.\\
	
	$(5)\Rightarrow (1)$: We come back to the semantics: $\interp{\tikzfig{vinj0}}_s= \left(\ket{x}\ket{y}\mapsto\delta_{Ax,Ay}\ket{Ax},1_n + 1_n,1_m\right)$ and $\interp{\tikzfig{vinj1}}_s= \left(\ket{x}\ket{y}\mapsto\delta_{x,y}\ket{Ax},1_n + 1_n,1_m\right)$. So for all $x$ and $y$ $\delta_{Ax,Ay}=\delta_{x,y}$, in other words $A$ is injective.\\
\end{proof}

\begin{proof}[\textbf{Proof of Lemma \ref{lm:rep}}]\phantomsection\label{pr:rep}
	By induction on the size $m\times n$ of $A$.
	If $n=m=1$, the result is exactly the zero and one axioms.
	if $m>1$:
	\begin{center}
		\setlength\tabcolsep{1pt}
		\begin{tabular}{cccccc}
			\tikzfig{repn0}&$\stackrel{\hyperref[P]{P}}{=}$&\tikzfig{repn1}&$\stackrel{\hyperref[C]{C}}{=}$&\tikzfig{repn2}&$\stackrel{IH}{=}$\\
			\tikzfig{repn3}&$\stackrel{\hyperref[Z]{Z}}{=}$&\tikzfig{repn4}&$\stackrel{\hyperref[s1]{s1}}{=}$&\tikzfig{repn5}&$\stackrel{\hyperref[top]{\textup{top}}}{=}$\\
			\tikzfig{repn6}&&&&&
		\end{tabular}
	\end{center}
	
	if $n>1$:
	\begin{center}
		\setlength\tabcolsep{1pt}
		\begin{tabular}{cccccc}
			\tikzfig{repm0}&$\stackrel{\hyperref[P]{P}}{=}$&\tikzfig{repm1}&$\stackrel{\hyperref[L]{L}}{=}$&\tikzfig{repm2}&$\stackrel{IH}{=}$\\
			\tikzfig{repm3}&$\stackrel{\hyperref[X]{X}}{=}$&\tikzfig{repm4}&$\stackrel{\hyperref[s1]{s1}}{=}$&\tikzfig{repm5}&$\stackrel{\hyperref[top]{\textup{top}}}{=}$\\
			\tikzfig{repm6}&&&&&
		\end{tabular}
	\end{center}
	
\end{proof}

\begin{proof}[\textbf{Proof of Lemma \ref{lm:piv}}]\phantomsection\label{pr:piv}
	Up to permutation we take $u$ to be the first row in $ \Gamma$ and $v$ to be the first column. Then we have $\Gamma=\begin{bmatrix}
	1& A\\ B&C
	\end{bmatrix}$ where $A:n-1\to 1$, $B:1\to m-1$ and $C:n-1\to m-1$. The complete bipartite graph on $N_u \cap N_v$ has for biadjacency matrix $BA$ so the biadjacency matrix of $G_{n,m}\wedge uv$ is $\Gamma'=\begin{bmatrix}
	1& A\\ B&C{+}BA
	\end{bmatrix}$.
	
	\begin{center}
		\setlength\tabcolsep{1pt}
		\begin{tabular}{cccccc}
			\tikzfig{pivproof0}&$\stackrel{\hyperref[W]{W}}{=}$&\tikzfig{pivproof3}&$\stackrel{\hyperref[L]{L}}{=}$&\tikzfig{pivproof4}&$\stackrel{\hyperref[U]{U}}{=}$\\
			\tikzfig{pivproof5}&$\stackrel{\hyperref[Z]{Z}}{=}$&\tikzfig{pivproof6}&$\stackrel{\hyperref[C]{C}}{=}$&\tikzfig{pivproof7}&$\stackrel{\hyperref[b]{b}}{=}$\\
			\tikzfig{pivproof8}&$\stackrel{\hyperref[top]{\textup{top}}}{=}$&\tikzfig{pivproof9}&$\stackrel{\hyperref[K]{K},\hyperref[J]{J}}{=}$&\tikzfig{pivproof10}&$\stackrel{\hyperref[m]{m}}{=}$\\
			\end{tabular}
			
		\begin{tabular}{cccccc}
					\tikzfig{pivproof11}&$\stackrel{\hyperref[p]{p}}{=}$&\tikzfig{pivproof12}&$\stackrel{\hyperref[top]{\textup{top}}}{=}$&\tikzfig{pivproof13}&$\stackrel{\hyperref[W]{W},\hyperref[L]{L},\hyperref[C]{C}}{=}$\\
			\tikzfig{pivot0}&&&&&\\
		\end{tabular}
	\end{center}
	
\end{proof}

\begin{proof}[\textbf{Proof of Lemma \ref{lm:grphs}}]\phantomsection\label{pr:grphs}
	The partition component is straightforward typing. We use the characterization of $|G\rangle$ by its stabilizer \cite{hein2006entanglement}. We also use restrictions to subsets of $V(G)$, for example $\mathbf{v\cap N_u}$ is a vector of size $|N_u|$ and is zero if $v\notin N_u$.
	
	\begin{equation*}
	\tikzfig{gsproofa0b}\quad=\quad\tikzfig{gsproofa1b}
	\end{equation*}
	
	We distinguish three cases. First if $u=v$:
	
	\begin{center}
		\setlength\tabcolsep{1pt}
		\begin{tabular}{cccc}
			\tikzfig{gsproofa2bp}&$\stackrel{\hyperref[h]{h}}{=}$&\tikzfig{gsproofa3bp}&$\stackrel{\hyperref[O]{O}}{=}$\\[0.5cm]
			\tikzfig{gsproofa4bp}&$\stackrel{\hyperref[s1]{s1}}{=}$&\tikzfig{gsproofa5bp}&$=$\\[0.5cm]
			\tikzfig{gsproofa6bp}&&&
		\end{tabular}
	\end{center}
	
	Then if $v\in N_u$:
	
	\begin{center}
		\setlength\tabcolsep{1pt}
		\begin{tabular}{cccc}
			\tikzfig{gsproofa2bpp}&$=$&\tikzfig{gsproofa3bpp}&$\stackrel{\hyperref[X]{X}}{=}$\\[1.5cm]
			\tikzfig{gsproofa4bpp}&$\stackrel{\hyperref[N]{N}}{=}$&\tikzfig{gsproofa5bpp}&$\stackrel{\hyperref[s1]{s1}}{=}$\\[1.5cm]
			\tikzfig{gsproofa6bpp}&$\stackrel{\hyperref[s1]{s1}}{=}$&\tikzfig{gsproofa7bpp}&$\stackrel{\hyperref[Z]{Z}}{=}$\\[1.5cm]
			\end{tabular}
				\begin{tabular}{cccc}
			\tikzfig{gsproofa8bpp}&$\stackrel{IH}{=}$&\tikzfig{gsproofa9bpp}&$=$\\[0.5cm]
			\tikzfig{gsproofa10bpp}&&&\\
		\end{tabular}
	\end{center}
	
	Finally if $v\in S_u$:
	
	\begin{center}
		\setlength\tabcolsep{1pt}
		\begin{tabular}{cccc}
			\tikzfig{gsproofa2b}&$\stackrel{\hyperref[X]{X}}{=}$&\tikzfig{gsproofa3b}&$\stackrel{\hyperref[s1]{s1}}{=}$\\[1.5cm]
			\tikzfig{gsproofa4b}&$\stackrel{\hyperref[Z]{Z}}{=}$&\tikzfig{gsproofa5b}&$\stackrel{IH}{=}$\\[1.5cm]
			\tikzfig{gsproofa6b}&$=$&\tikzfig{gsproofa7b}&
		\end{tabular}
	\end{center}

	It remains to take care of the scalar. We show by induction that $\langle G | |0\rangle^{\otimes n}=\frac{1}{\sqrt{2}^n}$, the case $n=1$ is true.
	
	\begin{center}
		\setlength\tabcolsep{1pt}
		\begin{tabular}{cccccc}
			\tikzfig{gsproofb0}&$=$&\tikzfig{gsproofb1}&$\stackrel{\hyperref[X]{X}}{=}$&\tikzfig{gsproofb2}&$\stackrel{\hyperref[h]{h}}{=}$\\[0.5cm]
			\tikzfig{gsproofb3}&$\stackrel{\hyperref[G]{G}}{=}$&\tikzfig{gsproofb4}&$\stackrel{\hyperref[s1]{s1}}{=}$&\tikzfig{gsproofb5}&$\stackrel{\hyperref[X]{X}}{=}$\\[0.5cm]
			\tikzfig{gsproofb6}&$\stackrel{IH}{=}$&\tikzfig{gsproofb7}&&&
		\end{tabular}
	\end{center}
	
\end{proof}

\begin{proof}[\textbf{Proof of Theorem \ref{thm:locomp}}]\phantomsection\label{pr:locomp}
	
	The proof of this uses the triangle lemma, which follows from the rules of ZX calculus (a proof can be found in \cite{duncan2009graph}).
	
	\begin{lemma}\label{lm:trig}\label{trig}
		\begin{equation*}
		\tikzfig{trig0}\stackrel{\textup{trig}}{=}\tikzfig{trig1}
		\end{equation*}
	\end{lemma}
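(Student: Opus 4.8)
The plan is to prove the triangle identity entirely inside the (size-one) ZX-calculus: both sides of Lemma~\ref{lm:trig} involve only thin wires, so no scalable machinery (dividers, gatherers, matrix generators) is needed and one may argue with the axioms of Figure~\ref{fig:ZXaxioms} together with the \textit{Only Topology Matters} rules. In principle one could even short-circuit the whole thing by invoking completeness of the ZX-calculus, reducing the claim to the routine check that the two diagrams have the same $2^{\ell}\times 2^{k}$ interpretation; but a direct syntactic derivation is cleaner and is the one already given in \cite{duncan2009graph}, so the sketch below reconstructs that route. (If a fully self-contained presentation is not needed, it is also legitimate simply to cite \cite{duncan2009graph} for this ZX-level fact.)

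The syntactic derivation proceeds in three phases. First I would unfold every Hadamard box appearing in the triangular configuration into its Euler decomposition, i.e. a green--red--green chain of $\tfrac{\pi}{2}$ phases, using the Hadamard rule \textup{h} together with the Euler rule \textup{e}; this is the one step where \textup{e} is genuinely required, matching the title of \cite{duncan2009graph}. Second, once the Hadamards are unfolded, spiders of the same colour that have become adjacent are merged by spider fusion \textup{s1}, collecting their phases, and the Pauli ($\pi$) phases are pushed across opposite-colour spiders with the copy rule \textup{c} so that they land on the legs where they belong; the bialgebra rule \textup{b} is then applied to the green--red pair at the apex of the configuration, which is exactly the rewrite that creates or deletes the third edge of the triangle, converting the two-edge subdiagram into the three-edge one (or conversely, depending on the direction of the equation). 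All of these steps are used in their colour-swapped versions as well, symmetrically on the red side. Third, the leftover $\tfrac{\pi}{2}$ phases are re-folded into Hadamard boxes (again \textup{h} and \textup{e}, read the other way) so as to recover the target diagram, and the residual bare scalar is normalised with \textup{s2} and \textup{w1}/\textup{w2}.

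The step I expect to be the main obstacle is the bookkeeping. The Euler unfoldings produce a proliferation of $\tfrac{\pi}{2}$ phases, and the delicate point is to keep track of which of these recombine with which, so that after the bialgebra rewrite they reassemble into precisely the Hadamards of the right-hand side rather than into some Euler-equivalent but syntactically different chain; separately, one must track the accumulated nonzero scalar through each \textup{e} and \textup{b} application so that the final equality holds exactly and not merely up to a global factor. Choosing the order of rewrites so that the phase recombination is forced (rather than requiring an extra appeal to \textup{e} at the end) is the part that needs care; everything else is mechanical application of \textup{s1}, \textup{c}, \textup{b} and the topological rules.
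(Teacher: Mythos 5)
Your proposal is correct and matches the paper, which in fact gives no derivation at all: it simply notes that the triangle lemma ``follows from the rules of ZX calculus'' and cites \cite{duncan2009graph}, exactly the fallback you identify as legitimate. Your additional sketch (Euler-unfolding the Hadamards, fusing spiders, applying the bialgebra rule, and refolding --- or alternatively invoking completeness of the ZX-calculus and checking interpretations) is a reasonable reconstruction of the cited argument, but the paper itself does not carry out any of these steps.
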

	
	We proceed by induction. Let $v\in N_u$.
	
	To avoid the permutations we choose a precise order on the vertices of $G$: $v,u,N_{u,v},N_{v,\lnot u},N_{u,\lnot v}, S_{u,v}$. Thus the we work with $\tau=id$ in the inductive definition of graph boxes.
	
	We want to compute:
	
	\begin{equation*}
	\tikzfig{lcproof0}\quad=\quad\tikzfig{lcproof1}
	\end{equation*}
	
	First we modify the middle gadget to clearly separate $u$ from the other neighbours of $v$:
	
	\begin{center}
		\setlength\tabcolsep{1pt}
		\begin{tabular}{cccc}
			\tikzfig{alcproof1}&$\stackrel{\hyperref[P]{P}}{=}$&\tikzfig{alcproof2}&$\stackrel{\hyperref[Z]{Z}}{=}$\\
			\tikzfig{alcproof3}&$\stackrel{\hyperref[L]{L}}{=}$&\tikzfig{alcproof4}&
		\end{tabular}
	\end{center}
	
	Then we let the phases go through the modified gadget:
	
	\begin{center}
		\setlength\tabcolsep{1pt}
		\begin{tabular}{cccc}
			\tikzfig{blcproof4}&$\stackrel{\hyperref[Z]{Z},\hyperref[X]{X}}{=}$&\tikzfig{blcproof5}&$\stackrel{\hyperref[Z]{Z},\hyperref[X]{X},\hyperref[h]{h}}{=}$\\[1.2cm]
			\tikzfig{blcproof6}&$\stackrel{\hyperref[s1]{s1},\hyperref[h]{h}}{=}$&\tikzfig{blcproof7}&$\stackrel{\hyperref[trig]{\textup{trig}}}{=}$\\[1.2cm]
			\tikzfig{blcproof8}&&&
		\end{tabular}
	\end{center}
	
	The last step uses the triangle lemma \ref{lm:trig}. Plugging $G\backslash v$ we end up with:
	
	\begin{equation*}
	\tikzfig{lcproof8}
	\end{equation*}
	
Then simplifying the gadget, applying the induction hypothesis and the fact that $\left(G\backslash v\right)\star u=\left(G\star u\right)\backslash v$, we get:
	
	\begin{equation*}
	\tikzfig{lcproof0}\quad\stackrel{IH,\hyperref[h]{h},\hyperref[s1]{s1}}{=}\quad\tikzfig{lcproof9}
	\end{equation*}
	
	We modify the gadget in a way that the five sets, $v,N_{u,v},N_{v,\lnot u},N_{u,\lnot v}$ and $S_{u,v}$, are clearly separated:
	
	\begin{center}
		\setlength\tabcolsep{1pt}
		\begin{tabular}{cccc}
			\tikzfig{clcproof9}&$\stackrel{\hyperref[P]{P}}{=}$&\tikzfig{clcproof10}&$\stackrel{\hyperref[Z]{Z}}{=}$\\
			\tikzfig{clcproof11}&$\stackrel{\hyperref[L]{L}}{=}$&\tikzfig{clcproof12}&
		\end{tabular}
	\end{center}
	
	Now we want to extract $u$ from $\left(G\star u\right)\backslash v$. To do this we use the usual gadget and the fact that ${N_u}'=N_{u,v}\cup N_{u,\lnot v}$ and ${S_u}'=N_{\lnot u,v}\cup S_{u,v}$. We compose it with the previous gadget:
	
	\begin{equation*}
	\tikzfig{dlcproof13}
	\end{equation*}
	
	Now we compute:
	
	\begin{center}
		\setlength\tabcolsep{1pt}
		\begin{tabular}{cccc}
			\tikzfig{dlcproof14}&$\stackrel{\hyperref[h]{h}}{=}$&\tikzfig{dlcproof15}&$\stackrel{\hyperref[b]{b}}{=}$\\[1.5cm]
			\tikzfig{dlcproof16}&$\stackrel{\hyperref[K]{K}}{=}$&\tikzfig{dlcproof17}&$\stackrel{\hyperref[s1]{s1}}{=}$\\[1.5cm]
				\end{tabular}
				\begin{tabular}{cccc}
			\tikzfig{dlcproof18}&$\stackrel{\hyperref[p]{p}}{=}$&\tikzfig{dlcproof19}&
		\end{tabular}
	\end{center}
	
	Now we reform a gadget on the left to reincoporate $u$ in $\left(G\star u\right)\backslash v \backslash u$.
	
	\begin{align*}
	&\quad\tikzfig{elcproof20}\\
	&\stackrel{\hyperref[R]{R}}{=}\quad\tikzfig{elcproof21}\\
	&=\quad\tikzfig{elcproof22}\\
	&=\quad\tikzfig{elcproof23}
	\end{align*}
	
	The two last steps are done by recognizing the corresponding gadgets up to permutation. 
\end{proof}

\begin{proof}[\textbf{Proof of Lemma \ref{lm:iso}}]\phantomsection\label{pr:iso}
	We compute the composition of the encoder and the decoder:
	
	\begin{center}
		\setlength\tabcolsep{1pt}
		\begin{tabular}{cccccc}
			\tikzfig{cpcdproof0}&$\stackrel{\hyperref[p]{p}}{=}$&\tikzfig{cpcdproof1}&$\stackrel{\hyperref[p]{p}}{=}$&\tikzfig{cpcdproof2}&$\stackrel{\hyperref[p]{p}}{=}$\\
			\tikzfig{cpcdproof3}&$\stackrel{\hyperref[s1]{s1}}{=}$&\tikzfig{cpcdproof4}&$\stackrel{\hyperref[I1]{I1},\hyperref[S1]{S1}}{=}$&\tikzfig{cpcdproof5}&$\stackrel{\hyperref[s2]{s2}}{=}$\\
			\tikzfig{cpcdproof6}&&&&&
		\end{tabular}
	\end{center}
	
\end{proof}

\begin{proof}[\textbf{Proof of Lemma \ref{lm:err}}]\phantomsection\label{pr:err}$\quad$\\
	
	\begin{center}
		\setlength\tabcolsep{1pt}
		\begin{tabular}{cccccc}
			\tikzfig{verproof0}&$\stackrel{\hyperref[s1]{s1}}{=}$&\tikzfig{verproof1}&$\stackrel{\hyperref[N]{N},\hyperref[s1]{s1}}{=}$&\tikzfig{verproof2}&$\stackrel{\hyperref[N]{N}}{=}$\\
			\tikzfig{verproof3}&$\stackrel{\hyperref[s1]{s1}}{=}$&\tikzfig{verproof4}&&&\\
		\end{tabular}
	\end{center}
	
	\begin{center}
		\setlength\tabcolsep{1pt}
		\begin{tabular}{cccccc}
			\tikzfig{rerproof0}&$\stackrel{\hyperref[s1]{s1}}{=}$&\tikzfig{rerproof1}&$\stackrel{\hyperref[O]{O},\hyperref[s1]{s1}}{=}$&\tikzfig{rerproof2}&$\stackrel{\hyperref[O]{O}}{=}$\\
			\tikzfig{rerproof3}&$\stackrel{\hyperref[s1]{s1}}{=}$&\tikzfig{rerproof4}&&&\\
		\end{tabular}
	\end{center}
	
\end{proof}

\end{document}